\documentclass[draftclsnofoot,12pt,onecolumn]{IEEEtran}        
\usepackage{amsmath, amsthm, amssymb}
\usepackage{epsfig}
\usepackage{latexsym}
\usepackage{graphicx}
\usepackage{cite}
\usepackage{epsfig}
\usepackage{subfigure}
\usepackage{color}

\newtheorem{theorem}{Theorem}
\newtheorem{lemma}{Lemma}
\newtheorem{proposition}{Proposition}

\newtheorem{remark}{Remark}

\title{Exact Error Analysis and Energy-Efficiency Optimization of Regenerative   Relay Systems with  Spatial Correlation}

\author{Mulugeta~K.~Fikadu,~Paschalis~C.~Sofotasios,~Qimei~Cui,~Mikko~Valkama, and~George~K.~Karagiannidis

\thanks{M. K. Fikadu and M. Valkama are with the Department of Electronics and Communications Engineering, Tampere University of Technology, FI-33101 Tampere, Finland  \,(e-mail: $\rm \left\lbrace mulugeta.fikadu; mikko.e.valkama \right\rbrace@\rm tut.fi$) }

\thanks{P. C. Sofotasios was with the School of Electronic and Electrical Engineering, University of Leeds, LS2 9JT Leeds, UK.  He is now with the Department of Electronics and Communications Engineering, Tampere University of Technology, 33101 Tampere, Finland and with the Department of Electrical and Computer Engineering, Aristotle University of Thessaloniki, 54124 Thessaloniki, Greece  \, (e-mail: {$\rm p.sofotasios@ieee.org$)}  }

\thanks{Q. Cui is with the Wireless Technology Innovation Institute, Beijing University of Posts and Telecommunications, 100876 Beijing,  China \, (e-mail: $ \rm cuiqimei@bupt.edu.cn$)}

\thanks{G. K. Karagiannidis is with the Department of Electrical and Computer Engineering, Khalifa University, PO Box 127788
Abu Dhabi, UAE and with the Department of Electrical and Computer Engineering, Aristotle University of Thessaloniki, 54124 Thessaloniki, Greece \, (e-mail: $ \rm geokarag@ieee.org$)}
}
\begin{document}

\maketitle

\begin{abstract}
Energy efficiency and its  optimization constitute critical tasks in the design of low-power wireless networks. The present work  is devoted to the error rate analysis and energy-efficiency optimization of regenerative cooperative networks  in the presence of multipath fading   under spatial correlation. To this end,  exact and asymptotic analytic expressions are firstly derived for the  symbol-error-rate of  $M{-}$ary quadrature amplitude  and $M{-}$ary phase shift keying modulations assuming a dual-hop decode-and-forward  relay system, spatially correlated Nakagami${-}m$ multipath fading and maximum ratio combining. The derived  expressions are subsequently employed in   quantifying  the energy consumption of the considered system, incorporating both transmit energy and the energy consumed by the transceiver circuits, as well as in deriving the optimal power allocation formulation for minimizing  energy consumption under certain quality-of-service requirements.  A relatively harsh path-loss model, that also accounts for realistic device-to-device communications, is adopted in numerical evaluations  and various useful insights are provided for the design of future low-energy wireless networks deployments. Indicatively, it is shown that depending on the degree of spatial correlation, severity of fading, transmission distance, relay location and power allocation strategy, target  performance can be achieved with large overall energy reduction  compared to direct transmission reference. 
 \end{abstract}
 
\begin{keywords}
Energy efficiency,  error rate, maximum ratio combining,  multipath fading, optimization,  power allocation,  quality-of-service, regenerative relaying, spatial correlation,  asymptotic analysis. 
\end{keywords}

\section{Introduction}

Emerging communication systems are expected to provide high-speed data transmission, efficient wireless access, high quality of service (QoS) and reliable network coverage with reduced processing time and energy as well as widespread use  of smart phones and other intelligent mobile devices. However, the currently witnessed scarcity of the two core fundamental resources, power and bandwidth, constitutes a significant challenge to satisfy these demands while  it is known that wireless channel impairments such as multipath fading, shadowing and interference degrade   information  signals during wireless propagation. Furthermore, most  energy constrained devices, such as terminals of mobile cellular, ad-hoc and wireless sensor networks, are typically powered by small batteries where replacement is   rather  difficult and costly   \cite{I},\cite{GE}.  Therefore,  finding a robust strategy for energy efficient transmission and minimized energy consumption per successfully communicated information bit is   essential in   effective design and deployment of wireless systems. This accounts for  example for cases such as  low-energy sensor networks in ecological environment monitoring  as well as energy consumption in infrastructure devices of cellular systems. In addition,  it is in line with global  policies and strategies on low energy consumption and  awareness  on environmental issues which, among others,  has led to the rapid emerge of green communications \cite{SE},\cite{SEE}.

It has been shown that multi-antenna  systems constitute an effective method that can  enhance spectral efficiency. However, this typically comes at a cost of complex transceiver circuitry  and in massive systems  with high energy consumption requirements. Furthermore, it is not currently  feasible to  embody   large multi-antenna systems at  hand held terminals due to spatial restrictions. As a result,  cooperative communications have been proposed as an alternative solution that improves coverage as well as performance under fading effects and have attracted significant attention due to their  ability to overcome the limitations  of resource constrained wireless access networks, see e.g. \cite{C, S, J, Final_1, Final_2, Final_3, Final_4, Final_5, Final_6, Final_7, Final_8, Final_9, Final_10, Final_11} and the references therein. 
A distinct feature of  cooperative communications is that wireless agents share resources, instead of competing for them, which  ultimately enhances system performance.  In this context,  various resource allocation algorithms and   techniques have been proposed for improving the energy efficiency of resource constrained wireless networks. Specifically, the authors in \cite{Bahai} analyzed energy-efficient  direct transmission adopting higher-level modulation for short distances, where circuit power is more dominant than transmission power. It was also suggested that high   energy reduction can be achieved by optimizing the transmission time and the modulation parameters, particularly for short transmission distances. The authors in \cite{R} addressed the optimal power allocation and throughput  transmission strategy for minimizing the total energy consumption required to transmit a given number of bits. In \cite{Z}, minimization of two-hop transmission energy with joint relay selection and power control was proposed for two policies:  i) for minimizing the energy consumption per data packet; ii) for maximizing the network lifetime. In the same context, \cite{W, X:Goldsmith, RQ, QAM, QAM2} addressed the modulation optimization for minimizing the total energy consumption for $M{-}$ary quadrature amplitude modulation ($M{-}$QAM), whereas energy efficient cooperative communication in clustered sensor networks was investigated  in \cite{A}. Energy efficiency in cooperative  networks was also analyzed in \cite{GL,YZ,YX,KM,K,WW,EES, EE_3} by  optimizing energy consumption based on the involved relay decoding strategy, modulation parameters, number of relay nodes and their distance from  the source and the destination nodes.  Likewise, an energy-efficient  scheme was proposed in \cite {WE} by exploiting  the  wireless  broadcast  nature  and  the  node overhearing capability while an optimal energy efficient strategy based on the cooperative network parameters and transmission rate was reported in \cite {GG}. Finally,  the authors in \cite{IV} analyzed  realistic scenarios of  energy efficient infrastructure-to-vehicle communications.            

It is also widely  known that fading phenomena constitute a crucial factor of performance degradation in conventional and emerging wireless communication systems. Based on this, numerous investigations have addressed the effect of different types of fading conditions on the performance of cooperative communications \cite{ Costa_3, Trung, New_1, New_4, Shi, Add_3,Add_4, Sofotasios_1, Sofotasios_2, Sofotasios_3, Sofotasios_4, Sofotasios_5, Sofotasios_6, Sofotasios_7, Sofotasios_8, Sofotasios_9, Sofotasios_10, Sofotasios_11}. However,   the vast majority of the reported investigations assume  that the involved communication paths are statistically independent to each-other. Nevertheless, this assumption is rather  simplistic as in realistic cooperative  communication scenarios the wireless channels may be spatially correlated, which should be taken into account particularly for deployments relating to low-energy consumption requirements. Based on this, the authors in \cite{D:Lee} addressed the spatial correlation in relay communications over  fading channels whereas the   performance of a decode-and-forward (DF)  system with  $M{-}$PSK modulated signals  in  triple correlated branches over   Nakagami${-}m$ fading channels using selection combining  was investigated in \cite{SWR}. In the same context, the authors in \cite{KYY} analyzed the performance of a    multiple-input-multiple-output (MIMO)  DF system with orthogonal space time transmission over spatially correlated Nakagami${-}m$ fading channels for integer values of $m$.   The performance of   a two hop amplify and forward (AF)  relay network with  beamforming and spatial correlation for the case that the  source and destination are equipped with multiple antennas  while the relay is equipped with a single antenna was investigated in  \cite{RaymondH}. Likewise,  spatial correlation in the context of indoor office environments and multi-antenna AF relaying with keyhole effects  was  analyzed in \cite{indoor} and \cite{Theo_keyhole}, respectively whereas the effects of spatial correlation on the performance of similar relaying systems were analyzed in \cite{YA, HKR, KY, MPI}. However, to the best of the authors knowledge, a comprehensive exact   and asymptotic error rate  analysis   for   regenerative systems   over spatially correlated channels using maximum-ratio-combing (MRC) as well as a detailed energy-efficiency analysis and optimization, have not been reported in the open technical literature.

Motivated by the above, the aim of this work is twofold: we, firstly, derive  exact analytic expressions for the SER of a two-hop DF relay system over spatially correlated Nakagami${-}m$ fading channels for both $M{-}$QAM and $M{-}$PSK constellations along with simple  and accurate asymptotic expressions for high signal-to-noise ratio (SNR) values. Secondly, we provide a comprehensive analysis of energy-efficiency and the corresponding optimization in terms of power allocation between cooperating devices. This is realized by minimizing the average total energy consumption of the  DF relay network over  multipath fading conditions, for  a given destination bit error rate (BER) and maximum transmit power constraints.

In more details, the technical contributions of the present article are outlined below:

 \begin{itemize} 
\item[$\bullet$] Exact closed-form expressions are derived for the end-to-end SER of  $M{-}$QAM and $M{-}$PSK based  dual-hop regenerative relay networks with MRC reception at the destination over Nakagami${-}m$ multipath fading channels with arbitrary spatial correlation between source-destination (S-D) and relay-destination (R-D) links.

\item[$\bullet$] Simple closed-form asymptotic expressions are derived for the above scenarios for high SNR values.

\item[$\bullet$] The offered analytic results are employed in  a  comprehensive energy optimization analysis   based on minimizing the average total energy consumption of the overall regenerative relay network under a given QoS target, in terms of bit-error-rate (BER), and maximum transmit power constraints. 
\end{itemize}

The remainder of this paper is organized as follows: Section II presents the considered relay system and channel model while Sections III and IV are devoted to the derivation of the corresponding exact and asymptotic error rate results. The total power consumption   models are presented in Section V while the analysis of energy minimization and power allocation optimization based on the given constraints are provided in Section VI. Section VII   presents the corresponding numerical results along with extensive analysis and discussions while closing remarks are provided in Section VIII.          
    
\IEEEpubidadjcol


\section{System and Channel Model }

We consider a two-hop cooperative radio access system model consisting  of   source node (S), a relay node (R) and  a destination node (D), where each node is equipped with a single antenna, as illustrated in Fig. 1. Without loss of generality, the system can  represent both a conventional and emerging communication scenarios such as, for example,  a mobile ad-hoc network or a vehicle-to-vehicle communication system. The cooperative strategy  is based on a half-duplex DF relaying where  transmission is performed using  time division multiplexing. It is also assumed that  the destination is equipped with MRC reception and that information signals are subject to multipath fading conditions that follow the Nakagami${-}m$ distribution\footnote{It is noted that the considered system requires the least   resources in terms of  bandwidth and power compared to multi-relay assisted transmission and thus, it can be adequate for low complexity and low-energy wireless networks.}.

\begin{figure}[!t]
\centering{\includegraphics[keepaspectratio,width= 8cm]{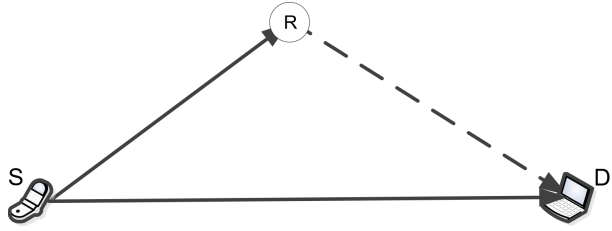}}
\caption{A dual-hop cooperative single relay model.} 
\end{figure}

In phase I, the source broadcasts the signal to both destination and relay nodes and the corresponding received signals can be expressed as 

\begin{equation}\label{L1}
y_{S,D} = \sqrt{\frac{P_{S}}{ P_{L_{S,D}}}} h_{S,D}x + n_{S,D}
\end{equation}
and

\begin{equation}\label{L2}
y_{S,R} = \sqrt{\frac{P_{S}}{P_{L_{S,R}}}} h_{S,R}x + n_{S,R}           
\end{equation}
respectively, where $P_S$ is the transmit power, $x$ is the transmitted symbol with normalized unit energy in the first transmission phase and $P_{L_{S,D}}$ and $P_{L_{S,R}}$ denote the path loss values in the S-D and source-relay (S-R) paths, respectively.  Also,  $h_{S,D}$ and $h_{S,R}$ are the complex fading  coefficients of the S-D and S-R wireless links, respectively,   whereas $n_{S,D}$ and $n_{S,R}$  are the corresponding complex Gaussian noise terms with zero mean and variance $N_{0}$.  The relay then checks whether the received signal can be decoded correctly, which can be, for example, realized by examining the included cyclic-redundancy-check (CRC) digits or the received SNR levels\cite{Add_4},\cite{D:Lee}. Based on this, if the signal is successfully decoded, the relay forwards it to the destination during phase II with power $ \overline{P}_{R} = P_{R}$; otherwise, the relay does not transmit and  remains idle with $ \overline{P}_{R} = 0$.  Hence, the signal at the destination during phase II can be represented as
   
\begin{equation}\label{L3}
y_{R,D} = \sqrt{\frac{ \bar{P}_{R}}{  P_{L_{R,D}}}} \: h_{R,D}x + n_{R,D} 
\end{equation}         
where $ \bar{P}_{R}$ is the transmit power of the relay, $P_{L_{R,D}} $ is the path loss of the R-D path and $h_{R,D}$ and $n_{R,D}$ denote the  channel coefficient and complex Gaussian noise term with zero mean and variance $N_{0}$, respectively. Finally, the destination combines the received direct and relayed signals based on MRC principle  where the combined SNR  can be expressed as follows \cite{D:Lee},\cite{HAS} 

\begin{equation}
 \gamma_{\rm MRC} =  \frac{(P_{S}/P_{L_{S,D}})\mid h_{S,D}\mid^2 +  (\bar{P}_{R}/P_{L_{R,D}})\mid h_{R,D}\mid^2}{N_{0}}
\end{equation}
The fading between the devices is assumed to follow the Nakagami${-}m$ distribution, which is a widely used model as the fading parameter  $m$ can easily account  for both severe and moderate fading conditions. Thus,  the corresponding channel  power gains ${\mid h_{S,D} \mid}^{2}$,${\mid h_{S,R}\mid}^{2}$ and ${\mid h_{R,D}\mid}^{2}$ follow the gamma distribution \cite{B:Nakagami} with different power  parameters, $\frac{1}{\Omega_{S,D}}$,$\frac{1}{\Omega_{S,R}}$, $\frac{1}{\Omega_{R,D}}$, and fading  parameters 
$m_{S,D}$, $m_{S,R}$ and $m_{R,D}$, respectively.  

 In the considered regenerative system,  arbitrary spatial correlation is assumed to exist between the S-D and R-D paths, as also adopted in the semi-analytical contribution of \cite{D: Lee}. To this effect, the corresponding  MGF for the case of Nakagami$-m$ fading is expressed as \cite{A:Simon}

\begin{equation} \label{MGF}
 M(s) = \left( 1- \frac{(\overline{\gamma}_{S,D}/P_{L_{S,D}} +\overline{\gamma}_{R,D}/P_{L_{R,D}} )}{m}s  + \frac{(1-\rho)\overline{\gamma}_{S,D}\overline{\gamma}_{R,D}/(P_{L_{S,D}}P_{L_{R,D}})}{m^2}    
s^{2} \right)^{-m}, \;s < 0 
\end{equation}
where $ \overline{\gamma}_{S,D} $ and $ \overline{\gamma}_{R,D}$ are  the corresponding average SNR values, $s$ denotes the MGF  parameter and $\rho = {\rm Cov}(|h_{S,D}|^{2}, |h_{R,D}|^{2}) {/} \sqrt{{\rm Var}(|h_{S, D}|^{2}) {\rm Var}(|h_{R,D}|^{2})} $ represents the involved  correlation coefficient \cite{B:Nakagami, A:Simon}, with ${\rm Cov}(\cdot)$ and ${\rm Var}(\cdot)$ denoting covariance and variance operations, respectively. It is acknowledged that other practical impairments, such as co-channel interference, are not considered in this work.

 \section{  SER   for $M-$QAM Modulation in Nakagami$-m$ Fading with Spatial Correlation}

This section is devoted to the error probability analysis of the dual-hop cooperative network over   Nakagami${-}m$  fading channels with spatial correlation between the direct and the relay-destination paths \cite{B:Nakagami}. To this effect and assuming MRC reception,   the  average end-to-end SER   can be expressed as   \cite{D:Lee}

\begin{equation} \label{L7}
\begin{split}
\overline{\rm SER}^{C}_{D}  =& F_{\rm QAM}\left[ \frac{1}{\left (1+\frac{(P_{S}\: /P_{L_{S,D}}) \, \Omega_{S,D}\, g_{QAM}}{N_{0}m_{c}\sin^{2}\theta}  \right )^{m_{c}}}\right] F_{\rm QAM}\left[\frac{1}{\left (1+\frac{(P_{S}/\: P_{L_{S,R}})\, \Omega_{S,R}\, g_{\rm QAM}}{N_{0}m_{S,R}\sin^{2}\theta}\right )^{m_{S,R}}}\right] \\
 &  + F_{\rm QAM}\left[\frac{1}{\left(1 + \frac{[(P_{S} \Omega_{S,D}/\: P_{L_{S,D}}) + ({P}_{R}\Omega_{R,D}/\: P_{L_{R,D}})]\: \: g_{QAM}}{N_{0}m_{c}\sin^{2}\theta}   + \frac{(1 - \rho) P_{S}  {P}_{R} \Omega_{S,D} \Omega_{R,D} g^{2}_{\rm QAM} }{N_{0}^{2}  P_{L_{S,D}}  P_{L_{R,D}} m_{c}^{2} \sin ^{4}\theta} \right)^{m_{c}}} \right] \\
  & \quad \times \left\lbrace 1- F_{\rm QAM}\left[ \frac{1}{\left(1 + \frac{(P_{S} /\: P_{L_{S,R}})\, \: \Omega_{S,R}\: g_{QAM}}{N_{0}m_{S,R}\sin^{2}\theta} \right )^{m_{S,R}}}\right] \right\rbrace  
\end{split}
\end{equation} 
 where $ m_{c} = m_{S,D} = m_{R,D}$, $ g_{\rm QAM} = {3}/{2(M -1)} $ and 

\begin{equation}\label{L8}
F_{\rm QAM}\left[ v(\theta) \right] = \frac{4}{\pi} \left( 1 - \frac{1}{\sqrt{M}} \right) \int_{0}^{\pi /2} v(\theta) {\: {\rm d} \theta} - \frac{4}{\pi} \left( 1 - \frac{1}{\sqrt{M}} \right)^{2} \int_{0}^{\pi /4} v(\theta) \: {\rm d} \theta.  
\end{equation}
The first two terms in \eqref{L7} refer to the cases of  incorrect and correct decoding of the received signal at the relay node, respectively, whereas the integral representation in \eqref{L8} is  used for evaluating the $\overline{\rm SER}^{C}_{D}$ numerically. 
In what follows, we firstly derive a closed-form expression for the average SER in the case of direct communication mode. This expression is subsequently employed in the derivation of   exact closed-form expressions for the average SER  of $M-$QAM and $M-$PSK modulated regenerative systems over Nakagami$-m$ fading channels with spatial correlation. Furthermore, it is used in the analysis of the energy consumption model and energy minimization in Sec. VI as it allows the derivation of an accurate expression for the energy consumption in the direct transmission, which  acts as a benchmark in the evaluation of  the energy reduction of the cooperative system.

\subsection{ Exact SER  for the Direct Transmission (DT)}

\begin{theorem}
For  $P_{S}, P_{L_{S,D}}, \Omega_{S,D}, N_{0}, g_{\rm QAM} \in \mathbb{R}^{+}$, $M \in \mathbb{N}$ and $m_{S, D} \geq \frac{1}{2}$, the symbol error rate of a  $M{-}QAM$    direct transmission scheme can be expressed as follows, 

\begin{equation} \label{D_new}
\begin{split}
\overline{\rm SER}^{D}_{D}  &= \frac{2(\sqrt{M} - 1) N_{0}^{m_{S,D}}m_{S,D}^{m_{S,D}} P_{L_{S,D}} }{\sqrt{\pi} M (m_{S,D} \, N_{0}  P_{L_{S,D}} + P_{S} \Omega_{S,D}  g_{\rm QAM})^{m_{S,D}}} \\
& \quad \times  \left\lbrace  \frac{\Gamma\left( m_{S,D} + \frac{1}{2} \right)}{\Gamma(m+1)} \, _{2}F_{1}\left( m_{S, D}, \frac{1}{2}, m_{S, D} + 1, \frac{m_{S, D} N_{0}  P_{L_{S,D}}}{m_{S, D} N_{0}  P_{L_{S,D}} + P_{S}\Omega_{S, D} g_{\rm QAM}} \right)  \right. \\
& \left. \, \qquad \,  +  \frac{\sqrt{2} (\sqrt{M} - 1)}{\sqrt{\pi}} F_{1} \left( \frac{1}{2}, \frac{1}{2} - m_{S, D}, m_{S,D}, \frac{3}{2}, \frac{1}{2}, \frac{m_{S, D} N_{0}  P_{L_{S,D}} }{2(m_{S, D}N_{0}  P_{L_{S,D}} + P_{S} \Omega_{S, D} g_{\rm QAM})}\right)\right\rbrace 
\end{split}
\end{equation}
\vspace{0.0cm}
where $\, _{2}F_{1}(.)$  and $F_{1}(.)$ denote the Gauss hypergeometric function and the Appell  hypergeometric function of the first kind, respectively. 
\end{theorem}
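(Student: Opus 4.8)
The plan is to recognise that the direct-transmission SER is the single-link specialisation of the first factor in \eqref{L7}: with no relay present the combined SNR collapses to the S-D branch, so $\overline{\rm SER}^{D}_{D} = F_{\rm QAM}[v(\theta)]$ with $v(\theta) = (1 + \xi/\sin^{2}\theta)^{-m_{S,D}}$ and $\xi = P_{S}\Omega_{S,D}g_{\rm QAM}/(N_{0}m_{S,D}P_{L_{S,D}})$. Substituting this $v(\theta)$ into the integral representation \eqref{L8} leaves two definite integrals, one over $[0,\pi/2]$ and one over $[0,\pi/4]$, each of the form $\int (1+\xi/\sin^{2}\theta)^{-m_{S,D}}\,{\rm d}\theta = \int \sin^{2m_{S,D}}\theta\,(\sin^{2}\theta+\xi)^{-m_{S,D}}\,{\rm d}\theta$. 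The entire computation thus reduces to evaluating these two trigonometric integrals in closed form and reassembling the coefficients of \eqref{L8}.

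For both integrals I would first apply $u=\sin^{2}\theta$, which turns each into an Euler-type integral $\tfrac12\,\xi^{-m_{S,D}}\int u^{m_{S,D}-1/2}(1-u)^{-1/2}(1+u/\xi)^{-m_{S,D}}\,{\rm d}u$, with upper limit $u=1$ for the $\pi/2$ integral and $u=1/2$ for the $\pi/4$ integral. The full-range integral is the classical Euler integral for the Gauss function, evaluating to $B(m_{S,D}+\tfrac12,\tfrac12)\,{}_{2}F_{1}(m_{S,D},m_{S,D}+\tfrac12;m_{S,D}+1;-1/\xi)$ up to the factor $\tfrac12\xi^{-m_{S,D}}$. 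Applying Pfaff's transformation ${}_{2}F_{1}(\alpha,\beta;\gamma;z)=(1-z)^{-\alpha}\,{}_{2}F_{1}(\alpha,\gamma-\beta;\gamma;z/(z-1))$ with $z=-1/\xi$ moves the argument to $1/(1+\xi)=m_{S,D}N_{0}P_{L_{S,D}}/(m_{S,D}N_{0}P_{L_{S,D}}+P_{S}\Omega_{S,D}g_{\rm QAM})$, simultaneously producing the $(1+\xi)^{-m_{S,D}}$ factor and the ${}_{2}F_{1}(m_{S,D},\tfrac12;m_{S,D}+1;\cdot)$ of \eqref{D_new}; using $B(m_{S,D}+\tfrac12,\tfrac12)=\sqrt{\pi}\,\Gamma(m_{S,D}+\tfrac12)/\Gamma(m_{S,D}+1)$ recovers the gamma-ratio there.

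For the quarter-range integral the upper limit $u=1/2$ makes it incomplete, and this is where the Appell function must enter. Rescaling $u=t/2$ to map the range onto $[0,1]$ converts it into $\int_{0}^{1}t^{m_{S,D}-1/2}(1-t/2)^{-1/2}(1+t/(2\xi))^{-m_{S,D}}\,{\rm d}t$, which is precisely the single (Picard) integral representation of the first Appell function, yielding $F_{1}$ with parameters $(m_{S,D}+\tfrac12;\tfrac12,m_{S,D};m_{S,D}+\tfrac32)$ and variables $(\tfrac12,-1/(2\xi))$. To cast this into the exact form quoted in \eqref{D_new}, namely $F_{1}(\tfrac12;\tfrac12-m_{S,D},m_{S,D};\tfrac32;\tfrac12,1/[2(1+\xi)])$, I would invoke a linear transformation identity for $F_{1}$ (of Euler/Pfaff type, equivalently obtained by the reflection $t\mapsto 1-t$ inside the integral) that lowers the first parameter, the first numerator parameter and the denominator parameter each by $m_{S,D}$, fixes $b_{2}=m_{S,D}$ and the first variable $\tfrac12$, and maps the second variable $-1/(2\xi)$ to $1/[2(1+\xi)]$. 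Collecting the coefficients $\tfrac{4}{\pi}(1-1/\sqrt{M})$ and $\tfrac{4}{\pi}(1-1/\sqrt{M})^{2}$ from \eqref{L8} with the two evaluated integrals and the common factor $(1+\xi)^{-m_{S,D}}$ then assembles \eqref{D_new}.

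I expect the main obstacle to be the quarter-range term: recognising the incomplete Euler integral as an $F_{1}$ and, above all, performing the correct $F_{1}$ transformation to reproduce the precise parameter set and argument of \eqref{D_new}, since the representation that drops out of the substitution differs from the stated one and must be reconciled by a non-obvious Appell identity. The full-range term is routine once Pfaff's transformation is applied, and the remainder — tracking the $\Gamma$-ratios, the $(\sqrt{M}-1)$ factors and the overall prefactor, whose $N_{0}$, $m_{S,D}$ and $P_{L_{S,D}}$ powers should combine with $(1+\xi)^{-m_{S,D}}$ — is bookkeeping. The hypothesis $m_{S,D}\geq\tfrac12$ enters only to guarantee convergence of the integrals and series and the validity of the integral representation invoked.
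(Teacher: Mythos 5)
Your reduction to the two trigonometric integrals and your handling of the $[0,\pi/2]$ integral are sound: the Euler-integral-plus-Pfaff route yields exactly the paper's term $\frac{\sqrt{\pi}\,\Gamma(m_{S,D}+\frac{1}{2})}{2(1+\xi)^{m_{S,D}}\Gamma(m_{S,D}+1)}\,{}_{2}F_{1}\left(m_{S,D},\frac{1}{2};m_{S,D}+1;\frac{1}{1+\xi}\right)$, and is arguably cleaner than the paper's own derivation, which instead evaluates an $F_{1}$ antiderivative at the endpoint $\theta=\pi/2$ and then reduces $F_{1}(\cdot;\cdot,\cdot;\cdot;1,y)$ to a ${}_{2}F_{1}$.

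The genuine gap is the quarter-range step. Your incomplete Euler integral correctly gives $F_{1}\left(m_{S,D}+\frac{1}{2};\frac{1}{2},m_{S,D};m_{S,D}+\frac{3}{2};\frac{1}{2},-\frac{1}{2\xi}\right)$, but the transformation you then invoke does not exist: the two-term linear (Euler/Pfaff-type) transformations of the first Appell function all preserve the denominator parameter $c$, whereas you need $c$ to drop from $m_{S,D}+\frac{3}{2}$ to $\frac{3}{2}$. Carrying out your suggested reflection $t\mapsto 1-t$ in the Picard integral actually produces $F_{1}\left(1;\frac{1}{2},m_{S,D};m_{S,D}+\frac{3}{2};-1,\frac{1}{1+2\xi}\right)$ --- first parameter $1$, $c$ unchanged, different variables --- not the target. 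No two-term identity can work here, because the correct relation is three-term: splitting $\int_{0}^{1/2}=\int_{0}^{1}-\int_{1/2}^{1}$ in your $u=\sin^{2}\theta$ variable and substituting $u\mapsto 1-u$ in the tail gives
\begin{equation*}
\mathcal{I}\left(\xi,m_{S,D};0,\tfrac{\pi}{4}\right)=\mathcal{I}\left(\xi,m_{S,D};0,\tfrac{\pi}{2}\right)-\frac{1}{\sqrt{2}\,(1+\xi)^{m_{S,D}}}\,F_{1}\left(\tfrac{1}{2};\tfrac{1}{2}-m_{S,D},m_{S,D};\tfrac{3}{2};\tfrac{1}{2},\tfrac{1}{2(1+\xi)}\right),
\end{equation*}
so your $F_{1}(m_{S,D}+\frac{1}{2};\cdots)$ equals the stated $F_{1}(\frac{1}{2};\cdots)$ only up to an additive ${}_{2}F_{1}$ term. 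That additive term is precisely what generates the distinct weights $(\sqrt{M}-1)$ versus $(\sqrt{M}-1)^{2}$ on the ${}_{2}F_{1}$ and $F_{1}$ terms in \eqref{D_new}; a pure two-term conversion of the quarter-range integral would leave the ${}_{2}F_{1}$ coefficient proportional to $\frac{4}{\pi}\left(1-\frac{1}{\sqrt{M}}\right)$ rather than the theorem's $\frac{4}{\pi M}(\sqrt{M}-1)$, a contradiction. The paper sidesteps this automatically by substituting $u=\cos^{2}\theta$ and writing the antiderivative as $-\frac{\cos\theta}{(1+a)^{m}}F_{1}\left(\frac{1}{2};\frac{1}{2}-m,m;\frac{3}{2};\cos^{2}\theta,\frac{\cos^{2}\theta}{1+a}\right)$, so that the Newton--Leibniz evaluation between $\theta=0$ and $\theta=\pi/4$ delivers both terms of the three-term relation at once; replacing your claimed transformation by this complementary-range split repairs your proof.
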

\begin{proof}
The proof is provided in Appendix A. 
\end{proof}


\subsection{Exact SER for the Cooperative-Transmission (CT)}

In this subsection, we derive a  novel closed-form expression for the average SER of the cooperative transmission scenario when the involved relay node decodes and forwards successfully decoded information signals to  the destination. To this end, it is essential to firstly derive exact closed-form expressions for two important indefinite trigonometric integrals.

\begin{lemma}
For $a, b, m \in \mathbb{R}^{+}$ and $2m  - \frac{1}{2} \in \mathbb{N}$, the following closed-form expression is valid, 

\begin{equation} \label{Lemma_1}
\begin{split}
\mathcal{J}(a, b,  m) &= \int \frac{1}{\left(1 +  \frac{a}{\sin^{2}(\theta)}  + \frac{b}{\sin^{4} (\theta)}\right)^{m}} {\rm d}\theta\\ 
& = - \sum_{l = 0}^{2m - \frac{1}{2}} \binom{2m - \frac{1}{2}}{l}  \,  (-1)^{l}  \, \frac{ \left(a+  2 \sin^{2}(\theta) - \sqrt{a^{2} - 4b} \right)^{m} \, \left(a + 2 \sin^{2}(\theta) + \sqrt{a^{2} - 4b} \right)^{m} }{\left(2 + a - \sqrt{a^{2} - 4b}\right)^{m} \left(2 + a + \sqrt{a^{2} - 4b}\right)^{m}  }  \\ 
& \quad \times  \frac{ \cos^{1+2l}(\theta) }{(1 + 2l) } \,  \frac{ F_{1}\left(l + \frac{1}{2}, m, m, l + \frac{3}{2}, \frac{2 \cos^{2}(\theta)}{2 + a - \sqrt{a^{2} - 4b}}, \frac{2 \cos^{2}(\theta)}{2 + a + \sqrt{a^{2} - 4b}} \right) }{ \left( \sin^{4}(\theta) + a \sin^{2}(\theta) + b \right)^{m}} + C. 
\end{split}
\end{equation}
\end{lemma}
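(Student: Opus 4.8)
The plan is to reduce the given indefinite integral to a standard Euler-type integral representation of the Appell function $F_1$. First I would clear the nested fractions by multiplying numerator and denominator by $\sin^{4m}(\theta)$, rewriting the integrand as
\[
\mathcal{J}(a,b,m) = \int \frac{\sin^{4m}(\theta)}{\left(\sin^4(\theta) + a\sin^2(\theta) + b\right)^m}\, {\rm d}\theta .
\]
The constraint $2m - \tfrac12 \in \mathbb{N}$ is precisely what makes the approach work: writing $N := 2m - \tfrac12$ gives $4m = 2N+1$, so the power of $\sin(\theta)$ is odd, namely $\sin^{4m}(\theta) = (1-\cos^2(\theta))^N \sin(\theta)$. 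This invites the substitution $t = \cos(\theta)$, ${\rm d}t = -\sin(\theta)\,{\rm d}\theta$, under which $\sin^2(\theta) = 1 - t^2$ and, after expanding $(1-t^2)^2 + a(1-t^2) + b$, the integral becomes
\[
\mathcal{J}(a,b,m) = -\int \frac{(1-t^2)^N}{\left(t^4 - (2+a)t^2 + (1+a+b)\right)^m}\, {\rm d}t .
\]

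Next I would treat the denominator as a quadratic in $t^2$ and factor it. Its discriminant is $(2+a)^2 - 4(1+a+b) = a^2 - 4b$, so the roots in $t^2$ are $s_{1,2} = \tfrac12\left(2+a \mp \sqrt{a^2-4b}\right)$, yielding $t^4 - (2+a)t^2 + (1+a+b) = (t^2 - s_1)(t^2 - s_2)$ with $s_1 s_2 = 1+a+b$. I would also expand $(1-t^2)^N$ by the binomial theorem; here the integrality of $N$ is essential, since it truncates the expansion to $l = 0, \ldots, N$. Pulling the constant $(s_1 s_2)^m$ out of the factored denominator gives
\[
\mathcal{J}(a,b,m) = -\frac{1}{(s_1 s_2)^m}\sum_{l=0}^{N}\binom{N}{l}(-1)^l \int \frac{t^{2l}}{\left(1 - t^2/s_1\right)^m\left(1 - t^2/s_2\right)^m}\, {\rm d}t .
\]

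Then the substitution $w = t^2$ turns each summand into $\tfrac12\int w^{\,l-1/2}(1-w/s_1)^{-m}(1-w/s_2)^{-m}\,{\rm d}w$, which is exactly the Euler integral for the Appell function of the first kind. Invoking the identity
\[
\int_0^{z} \tau^{\alpha-1}(1-x\tau)^{-\beta_1}(1-y\tau)^{-\beta_2}\,{\rm d}\tau = \frac{z^{\alpha}}{\alpha}\, F_{1}\!\left(\alpha, \beta_1, \beta_2, \alpha+1; xz, yz\right)
\]
with $\alpha = l + \tfrac12$, $\beta_1 = \beta_2 = m$, $x = 1/s_1$, $y = 1/s_2$ and upper limit $w = \cos^2(\theta)$, I obtain each antiderivative in terms of $F_1(l+\tfrac12, m, m, l+\tfrac32; 2\cos^2(\theta)/(2+a-\sqrt{a^2-4b}), 2\cos^2(\theta)/(2+a+\sqrt{a^2-4b}))$; the prefactor $\cos^{1+2l}(\theta)/(1+2l)$ arises from $z^{\alpha}/\alpha$ together with the $\tfrac12$ from the Jacobian. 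Reassembling the sum and adding the constant $C$ recovers the claim once the scalar $1/(s_1 s_2)^m$ is recast in the factored form displayed in the statement: since $(\sin^4(\theta)+a\sin^2(\theta)+b)^m = 4^{-m}(a+2\sin^2(\theta)-\sqrt{a^2-4b})^m(a+2\sin^2(\theta)+\sqrt{a^2-4b})^m$ and $2+a\mp\sqrt{a^2-4b} = 2s_{1,2}$, the entire $\theta$-dependent prefactor collapses to $1/(s_1 s_2)^m = 1/(1+a+b)^m$.

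I expect the main obstacle to be recognizing and correctly applying the Appell $F_1$ Euler integral, in particular matching the parameters and tracking the half-integer exponent $\alpha = l+\tfrac12$ produced by the $w = t^2$ substitution, together with the bookkeeping needed to confirm that the deliberately unsimplified $\theta$-dependent prefactor in the statement is in fact the constant $1/(s_1 s_2)^m$. Some care is also required with branch and sign conventions, since the factorization $(t^2 - s_j)^m = (-s_j)^m(1 - t^2/s_j)^m$ for non-integer $m$ and the use of $t = \cos(\theta) \ge 0$ on $[0,\pi/2]$ must be handled consistently.
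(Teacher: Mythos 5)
Your proposal is correct and follows essentially the same route as the paper's Appendix B: the paper substitutes $u=\cos^{2}(\theta)$ directly (your $t=\cos\theta$ followed by $w=t^{2}$ is the same change of variables in two steps), expands $(1-u)^{2m-\frac{1}{2}}$ binomially using $2m-\frac{1}{2}\in\mathbb{N}$, and evaluates $\int u^{l-\frac{1}{2}}\left[1-u(a+2)+u^{2}+a+b\right]^{-m}{\rm d}u$ via the same Appell $F_{1}$ antiderivative that your Euler-integral identity supplies. Your explicit factorization of the quartic into $(t^{2}-s_{1})(t^{2}-s_{2})$ and your verification that the ostensibly $\theta$-dependent prefactor in the statement collapses to the constant $(1+a+b)^{-m}$ are correct refinements that the paper leaves implicit, but they do not change the argument.
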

\begin{proof}
The proof is provided in Appendix B. 
\end{proof}

\begin{lemma}
For $a, b, m, n \in \mathbb{R}^{+}$ and $m +  n - \frac{1}{2} \in \mathbb{N}$, the following closed-form expression is valid 

\begin{equation} \label{new_9}
\begin{split}
\mathcal{K}(a, b,  m, n) &= \int \frac{1}{\left( 1 + \frac{a}{\sin^{2}(\theta)} \right)^{m} \left( 1 + \frac{b}{\sin^{2}(\theta)} \right)^{n}} {\rm d} \theta \\
& = - \sum_{l=0}^{m + n - \frac{1}{2}} \binom{m + n - \frac{1}{2}}{l} \frac{(-1)^{l} \cos^{1 + 2l}(\theta) \, F_{1} \left(l + \frac{1}{2}, m, n, l + \frac{3}{2}, \frac{\cos^{2}(\theta)}{1+a}, \frac{\cos^{2}(\theta)}{1 + b} \right) }{(1 + 2l)(1 + a)^{m} (1 + b)^{n}} + C.   
\end{split}
\end{equation}
\end{lemma}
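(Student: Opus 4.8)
The plan is to convert the trigonometric integral into an algebraic one in the variable $u=\cos\theta$ and then recognize each resulting term as an Euler-type integral representation of the Appell function $F_1$. First I would clear the denominators by writing $\left(1+\frac{a}{\sin^2\theta}\right)^m=(\sin^2\theta+a)^m/\sin^{2m}\theta$ and likewise for the $b$-factor, so that the integrand becomes $\sin^{2(m+n)}\theta/[(\sin^2\theta+a)^m(\sin^2\theta+b)^n]$. The decisive observation is that the hypothesis $m+n-\tfrac12\in\mathbb{N}$ forces $2(m+n)$ to be an odd positive integer, hence $\sin^{2(m+n)}\theta=\sin\theta\,(1-\cos^2\theta)^{m+n-1/2}$ with the exponent $m+n-\tfrac12$ a nonnegative integer.

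Next I would substitute $u=\cos\theta$, $\mathrm{d}u=-\sin\theta\,\mathrm{d}\theta$, which absorbs the stray factor $\sin\theta$ and turns the integral into $-\int (1-u^2)^{m+n-1/2}/[(1+a-u^2)^m(1+b-u^2)^n]\,\mathrm{d}u$. Since $m+n-\tfrac12$ is a nonnegative integer, the binomial theorem gives the terminating expansion $(1-u^2)^{m+n-1/2}=\sum_{l=0}^{m+n-1/2}\binom{m+n-1/2}{l}(-1)^l u^{2l}$, which reproduces the outer sum and the sign $(-1)^l$ of the claimed formula. Pulling the constants $(1+a)^m$ and $(1+b)^n$ out of the denominators reduces each summand to $\int u^{2l}\big(1-\tfrac{u^2}{1+a}\big)^{-m}\big(1-\tfrac{u^2}{1+b}\big)^{-n}\,\mathrm{d}u$.

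The heart of the argument is evaluating these elementary-looking algebraic integrals in closed form. Here I would use the Euler integral representation of the Appell function in the antiderivative form $\int_0^z w^{\lambda-1}(1-\mu w)^{-m}(1-\nu w)^{-n}\,\mathrm{d}w=\tfrac{z^\lambda}{\lambda}F_1(\lambda,m,n,\lambda+1,\mu z,\nu z)$, applied after the change of variable $w=u^2$ with $\lambda=l+\tfrac12$, $\mu=1/(1+a)$, $\nu=1/(1+b)$ and $z=u^2=\cos^2\theta$. This produces exactly the factor $\cos^{1+2l}\theta\,F_1(l+\tfrac12,m,n,l+\tfrac32,\tfrac{\cos^2\theta}{1+a},\tfrac{\cos^2\theta}{1+b})/(1+2l)$ appearing in \eqref{new_9}. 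Back-substituting $u=\cos\theta$ and reassembling the sum then yields the stated closed form up to the integration constant $C$.

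I expect the main obstacle to be the special-function bookkeeping rather than the calculus: one must match the exponents and arguments of the generic $F_1$ integral representation precisely, and verify that its validity condition ($\operatorname{Re}\lambda>0$, which holds since $\lambda=l+\tfrac12>0$) is met so that the identity may be used in the indefinite-integral form employed here. The half-integer constraint $m+n-\tfrac12\in\mathbb{N}$ is exactly what makes the scheme terminate and stay in closed form; without it the expansion of $(1-u^2)^{m+n-1/2}$ would be an infinite series and the finite sum of \eqref{new_9} would not arise. I would also note that this lemma is the \emph{already factored} analogue of Lemma~1: the quartic $\sin^4\theta+a\sin^2\theta+b$ there splits, via $\sqrt{a^2-4b}$, into two quadratic factors of precisely the present type, so the same $u=\cos\theta$ substitution and $F_1$ identity drive both proofs.
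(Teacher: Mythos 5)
Your proposal is correct and takes essentially the same route as the paper's Appendix C proof: rewrite the integrand as $\sin^{2m+2n}(\theta)$ over the shifted quadratic factors, use $m+n-\tfrac{1}{2}\in\mathbb{N}$ to get a terminating binomial expansion, and evaluate each resulting algebraic integral through the Euler-type antiderivative representation of the Appell function $F_{1}$. The only cosmetic difference is that you substitute $u=\cos\theta$ followed by $w=u^{2}$ whereas the paper sets $u=\cos^{2}(\theta)$ in a single step; the computations are identical, and you actually track the sign and the $F_{1}$ validity condition more explicitly than the paper's write-up does.
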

\begin{proof}
The proof is provided in Appendix C. 
\end{proof}

To the best of the authors' knowledge, the generic  solutions in the above Lemmas have not been previously reported in the open technical literature. These results are employed in the subsequent analysis.   
\begin{theorem}
For $\{ P_{S},  {P}_{R}, P_{L_{S,D}}, P_{L_{S,R}}, P_{L_{R,D}}, \Omega_{S,D}, \Omega_{S,R}, \Omega_{R,D},  N_{0}\} \in \mathbb{R}^{+}$, $M \in \mathbb{N}$, $ m_{S,D} \geq \frac{1}{2}$, $m_{S,R} \geq \frac{1}{2}$, $m_{R,D} \geq \frac{1}{2}$, $m_{S, D} $, $2m_{c} - \frac{1}{2} \in \mathbb{N}$ and $0 \leq \rho < 1$, the  SER of $M{-}$QAM based DF relaying over spatially correlated Nakagami${-}m$ fading channels, can be expressed as follows:  
 
\begin{equation*}
\overline{\rm SER}^{C}_{D}  = \left\lbrace \frac{2 \left(m_{c} - \frac{1}{2} \right)! \, _{2}F_{1}\left(m_{c}, \frac{1}{2}, m_{c} + 1, \frac{1}{1+a_1} \right)}{\sqrt{\pi}\, m_{c}! \, M (\sqrt{M} - 1)^{-1} \, (1 + a_{1})^{m_{c}}}  + \frac{ 2\sqrt{2}  \, F_{1}\left(\frac{1}{2}, \frac{1}{2} - m_{c}, m_{c}, \frac{3}{2}, \frac{1}{2},  \frac{1}{2+2a_{1}} \right)}{\pi \, M \,(\sqrt{M} - 1)^{-2}   \, (1 + a_{1})^{m_{c}}} \right\rbrace   
 \end{equation*}
\begin{equation} \label{SER_MQAM_Corr}
\begin{split}
& \times \left\lbrace \frac{2 \left(m_{S,R} - \frac{1}{2} \right)! \, _{2}F_{1}\left(m_{S,R}, \frac{1}{2}, m_{S,R} + 1, \frac{1}{1+b_{1}} \right)}{\sqrt{\pi}\, m_{S,R}! \, M (\sqrt{M} - 1)^{-1} \, (1 + b_{1})^{m_{S,R}}}  + \frac{  \, 2\sqrt{2}F_{1}\left(\frac{1}{2}, \frac{1}{2} - m_{S,R}, m_{S,R}, \frac{3}{2}, \frac{1}{2},  \frac{1}{2+2b_{1}} \right)}{  \pi \, M \,(\sqrt{M} - 1)^{-2}   \, (1 + b_{1})^{m_{S,R}}} \right\rbrace \\
& +  \left\lbrace 1 - \frac{2 \left(m_{S,R} - \frac{1}{2} \right)! \, _{2}F_{1}\left(m_{S,R}, \frac{1}{2}, m_{S,R} + 1, \frac{1}{1+b_{1}} \right)}{\sqrt{\pi}\, m_{S,R}! \, M (\sqrt{M} - 1)^{-1} \, (1 + b_{1})^{m_{S,R}}}  - \frac{   \, 2\sqrt{2}F_{1}\left(\frac{1}{2}, \frac{1}{2} - m_{S,R}, m_{S,R}, \frac{3}{2}, \frac{1}{2},  \frac{1}{2+2b_{1}} \right)}{ \pi \, M \,(\sqrt{M} - 1)^{-2}   \, (1 + b_{1})^{m_{S,R}}} \right\rbrace \\
& \times   \left\lbrace    \sum_{l = 0}^{2m_{c} - \frac{1}{2}} \binom{2m_{c} - \frac{1}{2}}{l} \frac{(-1)^{l} 4(\sqrt{M} - 1)^{2} 2^{m_{c} - l - \frac{1}{2}} F_{1}\left( l + \frac{1}{2}, m_{c}, m_{c}, l + \frac{3}{2}, 2\mathcal{A}, 2\mathcal{B} \right)}{M \pi (1 + 2l) [1 + 2(a_{1}+c_{1}) 4a_{1}d_{1}]^{m_{c}} \, [(1 - 2\mathcal{A}) (1 - 2\mathcal{B})]^{-m_{c}}} \right. \\
& \qquad + \left.   \sum_{l = 0}^{2m_{c} - \frac{1}{2}} \binom{2m_{c} - \frac{1}{2}}{l} \frac{(-1)^{l} 4 (\sqrt{M} - 1)  F_{1}\left( l + \frac{1}{2}, m_{c}, m_{c}, l + \frac{3}{2}, \mathcal{A}, \mathcal{B} \right) }{ \pi M  (1 + 2l) (a_{1} d_{1})^{m_{c}} (1 - \mathcal{A})^{-m_{c}} (1 - \mathcal{B})^{-m_{c}} }  \right\rbrace  
\end{split}
\end{equation}
where  $a_{1} = P_{S} \Omega_{S, D} g_{QAM}{/}(P_{L_{S,D}} m_{S,D} N_{0})$,  $ b_{1}= P_{S} \Omega_{S, R} g_{QAM}{/}(P_{L_{S, R}} N_{0} m_{S,R})$, $c_{1}= P_{R} \Omega_{R, D} g_{QAM}{/}$ $(P_{L_{R, D}}N_{0} m_{R, D})$, $d_{1} = (1 - \rho) P_{R} \Omega_{R, D} g_{QAM}{/}(P_{L_{R, D}}N_{0} m_{R,D})$  
and 

\begin{equation}
\big\{ ^{{\mathcal{A}}}_{{\mathcal{B}}} \big\} =  \frac{1}{2 + a_{1} + c_{1}  \, \left\lbrace ^{-}_{+} \right\rbrace \, \sqrt{(a_{1} + c_{1})^{2} - 4 a_{1} d_{1}}}. 
\end{equation}
\end{theorem}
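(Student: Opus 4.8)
The plan is to take the semi-analytic expression \eqref{L7} as the starting point and to reduce each of its constituent $F_{\rm QAM}[\cdot]$ factors to closed form, exploiting the fact that \eqref{L7} already separates the error event into the \emph{incorrect-decoding} branch (the first term, a product of two single-link SERs) and the \emph{correct-decoding} branch (the second term, the spatially correlated MRC-combined SER weighted by the probability $1 - F_{\rm QAM}[\text{S-R}]$ that the relay decodes correctly). Under this reading the target \eqref{SER_MQAM_Corr} is simply \eqref{L7} with its four $F_{\rm QAM}$ evaluations written out explicitly: the S-D and S-R single-link terms are instances of the direct-transmission result of Theorem~1, while the spatially correlated MRC term is handled by Lemma~1. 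Schematically the claim is $\{A\}\{B\} + \{C\}\{D\}$, matching the block structure of \eqref{SER_MQAM_Corr}.

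First I would observe that any factor of the form $F_{\rm QAM}\!\left[(1 + c/\sin^{2}\theta)^{-m}\right]$ appearing in \eqref{L7} is structurally identical to the direct-transmission argument treated in Theorem~1. Hence the S-D and S-R factors follow immediately from \eqref{D_new} under the substitutions $c \mapsto a_{1}$, $m \mapsto m_{c}$ and $c \mapsto b_{1}$, $m \mapsto m_{S,R}$, respectively, reproducing the Gauss-hypergeometric-plus-Appell blocks $\{A\}$ and $\{B\}$, while the complementary block $\{C\} = 1 - \{B\}$ is the relay correct-decode probability. The hypotheses $m_{S,D},\, m_{S,R} \geq \tfrac{1}{2}$ are exactly those under which Theorem~1 is valid.

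The core of the argument is the MRC factor on the second line of \eqref{L7}. Using the MGF \eqref{MGF}, its $F_{\rm QAM}$ argument can be written as $(1 + a/\sin^{2}\theta + b/\sin^{4}\theta)^{-m_{c}}$ with $a = a_{1} + c_{1}$ and $b = a_{1}d_{1}$; this identification follows directly from the definitions of $a_{1}, c_{1}, d_{1}$ together with the setting $m_{S,D} = m_{R,D} = m_{c}$. This is precisely the integrand of Lemma~1, so I would substitute the primitive $\mathcal{J}(a,b,m_{c})$ of \eqref{Lemma_1} into the two definite integrals of \eqref{L8}. The boundary evaluations give the key simplification: at $\theta = \pi/2$ the factor $\cos^{1+2l}(\theta)$ vanishes; at $\theta = 0$ one has $\cos^{2}\theta = 1$, so the Appell arguments become $2/(2 + a \mp \sqrt{a^{2}-4b}) = 2\mathcal{A},\, 2\mathcal{B}$; and at $\theta = \pi/4$ one has $\cos^{2}\theta = \tfrac{1}{2}$, giving the halved arguments $\mathcal{A},\, \mathcal{B}$. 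Combining these across the two integrals with the weights $\tfrac{4}{\pi}(1-1/\sqrt{M})$ and $\tfrac{4}{\pi}(1-1/\sqrt{M})^{2}$ produces exactly the two terminating sums of the final block $\{D\}$ — one in $2\mathcal{A}, 2\mathcal{B}$ and one in $\mathcal{A}, \mathcal{B}$. Finiteness of these sums is guaranteed by the constraint $2m_{c} - \tfrac{1}{2} \in \mathbb{N}$.

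Substituting $\{A\}\{B\} + \{C\}\{D\}$ back into \eqref{L7} then yields \eqref{SER_MQAM_Corr}; the restriction $0 \le \rho < 1$ keeps $b = a_{1}d_{1} > 0$ and (since $a^{2}-4b = (a_{1}-c_{1})^{2} + 4a_{1}c_{1}\rho \ge 0$) ensures that $\mathcal{A}, \mathcal{B}$ are real and well defined. The main obstacle I anticipate is not any single integral but the prefactor bookkeeping in the MRC block: one must simplify the radical products $(2 + a \mp \sqrt{a^{2}-4b})^{m_{c}}$ through the identity $(2+a)^{2} - (a^{2}-4b) = 4(1 + a + b)$, reconcile the $[(1-2\mathcal{A})(1-2\mathcal{B})]^{-m_{c}}$ and $(a_{1}d_{1})^{m_{c}}$ normalizations via $(1-2\mathcal{A})(1-2\mathcal{B}) = b/(1+a+b)$, and correctly absorb the powers of two (the $2^{m_{c}-l-1/2}$ scaling) that arise from the argument doubling between the $\theta = 0$ and $\theta = \pi/4$ boundary contributions. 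Tracking these $M$-dependent and radical-dependent constants consistently through the two weighted integrals is the delicate part; the remainder is direct substitution.
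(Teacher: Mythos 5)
Your proposal is correct and follows essentially the same route as the paper's own proof: the first, second and fourth terms of \eqref{L7} are reduced via Theorem 1 (i.e., the integrals \eqref{L10}--\eqref{L11} through \eqref{trigonometric_c} and \eqref{trigonometric_e}), while the correlated MRC term is recognized as the integrand of Lemma 1 with $a=a_{1}+c_{1}$, $b=a_{1}d_{1}$ and evaluated as $\mathcal{J}(a,b,m_{c})$ over $[0,\pi/2]$ and $[0,\pi/4]$ before substituting back into \eqref{L7}. Your boundary analysis (vanishing of $\cos^{1+2l}(\theta)$ at $\pi/2$, arguments $2\mathcal{A},2\mathcal{B}$ at $\theta=0$ and $\mathcal{A},\mathcal{B}$ at $\pi/4$) together with the identities $(2+a)^{2}-(a^{2}-4b)=4(1+a+b)$ and $(1-2\mathcal{A})(1-2\mathcal{B})=b/(1+a+b)$, and the observation that $a^{2}-4b=(a_{1}-c_{1})^{2}+4a_{1}c_{1}\rho\geq 0$, supplies precisely the ``long but basic algebraic manipulations'' the paper leaves implicit.
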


\begin{proof}
The first term in \eqref{L7} corresponds to the direct transmission and thus, it can be expressed in closed-form based on Theorem 1. Likewise, the  second and the fourth term in  \eqref{L7}  have the same algebraic representation as \eqref{L10} and \eqref{L11} in Appendix A. Therefore,  they can be readily expressed in closed-form by making the necessary change of variables and substituting in  \eqref{trigonometric_c} and \eqref{trigonometric_e}.    As for the third term in \eqref{L7}, it is noticed that it has the same algebraic representation with \eqref{Lemma_1}.  As a result, a closed-form expression is deduced by determining the following specific cases in \eqref{Lemma_1}, which practically evaluate \eqref{L8}, 

\begin{equation}
\mathcal{J}\left(a, b,  m, 0, \left\lbrace^{\pi / 2}_{\pi / 4}  \right\rbrace  \right) = \int_{0}^{\left\lbrace ^{ \pi /2}_{\pi / 4} \right\rbrace } \frac{1}{\left(1 +  \frac{a}{\sin^{2}(\theta)}  + \frac{b}{\sin^{4} (\theta)}\right)^{m}} {\rm d}\theta. 
\end{equation}
Therefore,  by carrying out some   long but basic algebraic manipulations and substituting in \eqref{L7}  along with the aforementioned closed-form expressions, one obtains \eqref{SER_MQAM_Corr}, which completes the proof. 
\end{proof}

\begin{remark}
Equation \eqref{SER_MQAM_Corr} reduces to the  uncorrelated scenario by setting  $\rho = 0$. However, an alternative   expression for this case which is valid for the case that $\left\lbrace m_{S,D} + m_{R,D} - \frac{1}{2} \right\rbrace \in \mathbb{N}$ can also be  deduced by applying the derived expressions in Theorem 1 and Lemma 2 in \cite[eq. (11)]{D:Lee}, namely, 

\begin{equation*}
\begin{split}
\overline{\rm SER}^{C}_{D}  &= \left\lbrace \frac{2\Gamma\left(m_{S,D} + \frac{1}{2} \right) \, _{2}F_{1}\left(m_{S,D}, \frac{1}{2}, 1 + m_{S,D}, \frac{1}{1+a_1} \right) }{\sqrt{\pi} M  (\sqrt{M} - 1)^{-1}  \Gamma(1 + m_{S,D}) (1 + a_1)^{m_{S,D}}} - \frac{  2\sqrt{2} (F_{1}\left( \frac{1}{2}, \frac{1}{2}- m_{S,D}, \frac{3}{2}, \frac{1}{2}, \frac{1}{2 + 2a_1} \right) }{\pi (\sqrt{M} - 1)^{-2}  M (1+a_1)^{m_{S,D}}}\right\rbrace \\
& \times \left\lbrace \frac{2\Gamma\left(m_{S,R} + \frac{1}{2} \right) \, _{2}F_{1}\left(m_{S,R}, \frac{1}{2}, 1 + m_{S,R}, \frac{1}{1+b_1} \right) }{\sqrt{\pi} M  (\sqrt{M} - 1)^{-1}  \Gamma(1 + m_{S,R}) (1 + b_1)^{m_{S,R}}} - \frac{  2\sqrt{2} F_{1}\left( \frac{1}{2}, \frac{1}{2}- m_{S,R}, \frac{3}{2}, \frac{1}{2}, \frac{1}{2 + 2b_1} \right) }{\pi (\sqrt{M} - 1)^{-2}  M (1+b_1)^{m_{S,R}}}\right\rbrace  \\
\end{split}
\end{equation*}
\begin{equation}
\begin{split}
& +  \left\lbrace 1 - \frac{2\Gamma\left(m_{S,R} + \frac{1}{2} \right) \, _{2}F_{1}\left(m_{S,R}, \frac{1}{2}, 1 + m_{S,R}, \frac{1}{1+b_1} \right) }{\sqrt{\pi} M  (\sqrt{M} - 1)^{-1}  \Gamma(1 + m_{S,R}) (1 + b_1)^{m_{S,R}}} + \frac{ 2\sqrt{2} F_{1}\left( \frac{1}{2}, \frac{1}{2}- m_{S,R}, \frac{3}{2}, \frac{1}{2}, \frac{1}{2 + 2b_1} \right) }{\pi  (\sqrt{M} - 1)^{-2}  M (1+b_1)^{m_{S,R}}}\right\rbrace \\
&\quad \times \left\lbrace \sum_{l=0}^{m_{S,D} + m_{R,D} - \frac{1}{2}} \frac{4 (-1)^{l} (\sqrt{M} - 1) F_{1} \left(l + \frac{1}{2}, m_{S,D}, m_{R,D}, l + \frac{3}{2}, \frac{1}{1 + a_1}, \frac{1}{1 + c_1} \right) }{l! \pi M (1 + a_1)^{m_{S,D}} (1+c_1)^{m_{R,D}} (1 + 2l)  \left( m_{S,D} + m_{R,D} + \frac{1}{2} \right)_{-l}}  \right. \\
& \qquad  \quad + \left. \sum_{l=0}^{m_{S,D} + m_{R,D} - \frac{1}{2}} \frac{(-1)^{l} 2^{\frac{3}{2} - l} (\sqrt{M} - 1)^{2} F_{1} \left(l + \frac{1}{2}, m_{S,D}, m_{R,D}, l + \frac{3}{2}, \frac{1}{2 +2a_1}, \frac{1}{2 + 2c_1} \right) }{l! \pi M (1 + a_1)^{m_{S,D}} (1+c_1)^{m_{R,D}} (1 + 2l)  \left( m_{S,D} + m_{R,D} + \frac{1}{2} \right)_{-l}} \right\rbrace
\end{split}
\end{equation}
where  $(x)_{n} \triangleq \Gamma(x + n){/}\Gamma(x)$ denotes the Pochhammer symbol \cite{Tables}. 
\end{remark}

\subsection{Asymptotic SER for the Cooperative-Transmission }

Simple asymptotic expressions can be derived for the case of high SNR at the three paths of the system.  To this end, it is essential to firstly derive a  closed-form expression for another trigonometric integral. 

\begin{lemma}
For  $m \in \mathbb{R}$, the following generic closed-form expression holds,

\begin{equation} \label{sin_integral}
\int \sin^{2m} {\rm d}\theta = - \cos(\theta) \, _{2}F_{1}\left( \frac{1}{2}, \frac{1}{2} - m, \frac{3}{2}, \cos^{2}(\theta)  \right) + C. 
\end{equation}

\begin{proof}
The proof is provided in Appendix D. 
\end{proof}
\end{lemma}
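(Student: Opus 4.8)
The plan is to verify the identity in \eqref{sin_integral} by differentiating the claimed antiderivative and confirming that the result equals the integrand $\sin^{2m}(\theta)$. This is the most direct route: rather than constructing the antiderivative from scratch, I would establish that
\[
\frac{\rm d}{{\rm d}\theta} \left[ - \cos(\theta) \, _{2}F_{1}\left( \tfrac{1}{2}, \tfrac{1}{2} - m, \tfrac{3}{2}, \cos^{2}(\theta) \right) \right] = \sin^{2m}(\theta),
\]
which proves the statement up to the arbitrary constant $C$. Differentiation of the right-hand side is a mechanical exercise involving the product rule and the chain rule, together with the standard derivative formula for the Gauss hypergeometric function, namely $\frac{\rm d}{{\rm d}z} \, _{2}F_{1}(\alpha, \beta, \gamma, z) = \frac{\alpha \beta}{\gamma} \, _{2}F_{1}(\alpha + 1, \beta + 1, \gamma + 1, z)$.

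First I would set $z = \cos^{2}(\theta)$, so that ${\rm d}z/{\rm d}\theta = -2 \cos(\theta) \sin(\theta)$, and write the antiderivative as $-\cos(\theta) \, _{2}F_{1}(\tfrac12, \tfrac12 - m, \tfrac32, z)$. Applying the product rule gives one term from differentiating $-\cos(\theta)$, which produces $\sin(\theta) \, _{2}F_{1}(\tfrac12, \tfrac12 - m, \tfrac32, z)$, and a second term from differentiating the hypergeometric factor via the chain rule, which produces $-\cos(\theta) \cdot \frac{(\tfrac12)(\tfrac12 - m)}{\tfrac32} \, _{2}F_{1}(\tfrac32, \tfrac32 - m, \tfrac52, z) \cdot (-2\cos(\theta)\sin(\theta))$. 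After simplifying the rational prefactor, the two contributions must be combined into the single monomial $\sin^{2m}(\theta) = (1 - z)^{m}$.

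The main obstacle will be reconciling the two hypergeometric terms of differing parameters into the clean power $(1-z)^{m}$. The cleanest way to handle this is to invoke a contiguous-function (Gauss) relation that expresses $_{2}F_{1}(\tfrac12, \tfrac12 - m, \tfrac32, z)$ and its neighbor $_{2}F_{1}(\tfrac32, \tfrac32 - m, \tfrac52, z)$ in a form whose $z$-dependent combination collapses to $(1-z)^{m}$; equivalently, I can recognize that $\int \sin^{2m}(\theta)\,{\rm d}\theta$ is a known incomplete-Beta-type integral and that the right-hand side is precisely its hypergeometric representation. An alternative and perhaps more transparent route is to expand $\sin^{2m}(\theta) = (1 - \cos^{2}(\theta))^{m}$ using the generalized binomial series, integrate term by term to obtain $-\cos(\theta) \sum_{k} \binom{m}{k} (-1)^{k} \frac{\cos^{2k}(\theta)}{2k+1}$, and then identify this series with the hypergeometric function on the right by matching the Pochhammer-symbol coefficients, since $\frac{1}{2k+1} = \frac{(\tfrac12)_{k}}{(\tfrac32)_{k}}$ and $\binom{m}{k}(-1)^{k} = \frac{(\tfrac12 - m)_{k}}{k!} \cdot (\text{constant})$ after the appropriate reindexing. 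Either way, the verification is routine once the correct contiguous relation or series identification is pinned down; I would present the differentiation argument as the primary proof and note the series expansion as corroboration.
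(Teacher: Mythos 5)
Your primary argument is correct but takes a genuinely different route from the paper. The paper \emph{constructs} the antiderivative: in Appendix D it substitutes $u=\sin^2(\theta)$, reduces the integral to $\tfrac12\int u^{m-\frac12}(1-u)^{-\frac12}\,{\rm d}u$ (an incomplete-Beta-type integral), quotes its hypergeometric closed form, and counter-substitutes (the printed argument ``$1+u$'' in the appendix is a typo for $1-u$, which becomes $\cos^{2}(\theta)$). You instead \emph{verify} the stated antiderivative by differentiation, and your plan does go through: with $z=\cos^{2}(\theta)$ and $F(z)={}_{2}F_{1}\left(\tfrac12,\tfrac12-m,\tfrac32;z\right)$ one obtains $\frac{\rm d}{{\rm d}\theta}\left[-\cos(\theta)\,F(z)\right]=\sin(\theta)\left[F(z)+2zF'(z)\right]$, and the ``main obstacle'' you flag is settled cleanly by the contiguous relation $\left(z\tfrac{\rm d}{{\rm d}z}+a\right){}_{2}F_{1}(a,b,c;z)=a\,{}_{2}F_{1}(a+1,b,c;z)$ with $a=\tfrac12$, $c=\tfrac32$: since ${}_{2}F_{1}\left(\tfrac32,\tfrac12-m,\tfrac32;z\right)=(1-z)^{m-\frac12}$ collapses elementarily, the bracket equals $(1-z)^{m-\frac12}$, and the derivative is $\sin(\theta)\,(1-z)^{m-\frac12}=\sin^{2m}(\theta)$, using $\sin(\theta)=(1-z)^{1/2}$ on $(0,\pi)$. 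The trade-off: the paper's substitution shows where the formula comes from and matches the uniform technique of its Lemmas 1 and 2 (all three reduce to Euler-type integrals in $u$), while your verification is shorter and self-contained, needing only the series derivative of $_{2}F_{1}$ and one contiguous relation.

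One caveat: your corroborating series route is wrong as written. Termwise integration of $\cos^{2k}(\theta)$ does \emph{not} give $-\cos^{2k+1}(\theta)/(2k+1)$; that antiderivative corresponds to the integrand $\cos^{2k}(\theta)\sin(\theta)$, so a Jacobian factor is missing. The fix is the substitution $u=\cos(\theta)$, ${\rm d}\theta=-{\rm d}u\,(1-u^{2})^{-1/2}$, which turns the integral into $-\int(1-u^{2})^{m-\frac12}\,{\rm d}u$; expanding with exponent $m-\tfrac12$ (coefficients $\binom{m-\frac12}{k}(-1)^{k}=(\tfrac12-m)_{k}/k!$, together with $1/(2k+1)=(\tfrac12)_{k}/(\tfrac32)_{k}$) then recovers exactly the stated $_{2}F_{1}$, whereas with exponent $m$ and $\binom{m}{k}$ as you wrote, the Pochhammer matching fails. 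Since you offer this only as corroboration and your differentiation proof stands on its own, the proposal is sound overall.
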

 Lemma 3 is subsequently employed  in the derivation of the following proposition.  

\begin{proposition}
For $\{ P_{S},  {P}_{R}, P_{L_{S,D}}, P_{L_{S,R}}, P_{L_{R,D}}, \Omega_{S,D}, \Omega_{S,R}, \Omega_{R,D},  N_{0}\} \in \mathbb{R}^{+}$, $M \in \mathbb{N}$, $ m_{S,D} \geq \frac{1}{2}$, $m_{S,R} \geq \frac{1}{2}$, $m_{R,D} \geq \frac{1}{2}$, $m_{S, D} $, $2m_{c} - \frac{1}{2} \in \mathbb{N}$ and $0 \leq \rho < 1$, the  SER of $M{-}$QAM based DF relaying over spatially correlated Nakagami${-}m$ fading channels in the high SNR regime can be expressed as 

\begin{equation*}
\begin{split}
\overline{\rm SER}^{C }_{D}  &\simeq  \left( \frac{   P_{L_{S,D}} N_{0} m_{c}}{   P_{S} \Omega_{S,D} g_{QAM}}\right)^{m_{c}}    \left(1 - \frac{1}{\sqrt{M}} \right)  \left\lbrace \frac{2 \Gamma\left(m_{c} + \frac{1}{2} \right)}{  \sqrt{\pi} \Gamma(1 + m_{c})} -   \left(1 - \frac{1}{\sqrt{M}} \right)  \frac{_{2}F_{1}\left(\frac{1}{2}, 1, m_{c} + \frac{3}{2}, -1 \right)}{ \pi 2^{m_{c} - 2} \,   (1 + 2 m_{c})} \right\rbrace \\
& \times \left( \frac{   P_{L_{S,R}} N_{0} m_{S,R}}{  P_{S} \Omega_{S,R} g_{QAM}} \right)^{m_{S,R}}   \left(1 - \frac{1}{\sqrt{M}} \right)  \left\lbrace \frac{2\Gamma\left(m_{S,R} + \frac{1}{2} \right)}{ \sqrt{\pi} \, \Gamma(1 + m_{S,R})} -   \left(1 - \frac{1}{\sqrt{M}} \right)  \frac{_{2}F_{1}\left(\frac{1}{2}, 1, m_{S,R} + \frac{3}{2}, -1 \right)}{ \pi \,  2^{m_{S,R}-2} (1 + 2 m_{S,R})} \right\rbrace  
\end{split}
\end{equation*}
\begin{equation} \label{SER_as_QAM}
\begin{split}
&+ \left( \frac{  P_{L_{S,D}}P_{L_{R,D}} N^{2}_{0} m^{2}_{c}}{ (1 - \rho) \,  P_{S} P_{R} \Omega_{S,D} \Omega_{R,D} g^{2}_{QAM}}  \right)^{m_{c}}  \left(1 - \frac{1}{\sqrt{M}} \right)  \qquad \qquad \qquad   \qquad \qquad \qquad   \qquad   \\
& \qquad \times \left\lbrace \frac{2\Gamma\left(2m_{c} + \frac{1}{2} \right)}{\sqrt{\pi}  \, \Gamma(1 +2 m_{c})} -   \left(1 - \frac{1}{\sqrt{M}} \right)  \frac{_{2}F_{1}\left(\frac{1}{2}, 1, 2m_{c} + \frac{3}{2}, -1 \right)}{\pi \, 2^{2 m_{c} - 2} (1 + 4 m_{c})} \right\rbrace. 
\end{split}
\end{equation}
\end{proposition}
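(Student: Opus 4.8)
The plan is to start from the exact end-to-end SER \eqref{L7} and subject each of its three $F_{\rm QAM}[\cdot]$ blocks to a high-SNR approximation. The governing observation is that as the average SNR grows the quantities $a_1,b_1,c_1,d_1$ diverge, so each single-path integrand obeys $1 + X/\sin^2\theta \simeq X/\sin^2\theta$, whence
\[
\frac{1}{\left(1 + X/\sin^2\theta\right)^{m}} \simeq \frac{\sin^{2m}\theta}{X^{m}},
\]
while in the correlated cooperative integrand the $\sin^{-4}\theta$ term dominates the $\sin^{-2}\theta$ term, their ratio being of order $\mathrm{SNR}$, so that
\[
\frac{1}{\left(1 + \frac{a_1+c_1}{\sin^2\theta} + \frac{a_1 d_1}{\sin^4\theta}\right)^{m_c}} \simeq \left(\frac{\sin^4\theta}{a_1 d_1}\right)^{m_c}.
\]
The neglected contribution from the vicinity of $\theta=0$, where these approximations fail, is immaterial because the surviving integrands $\sin^{2m_c}\theta$, $\sin^{2m_{S,R}}\theta$ and $\sin^{4m_c}\theta$ all vanish there. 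I would then keep only the dominant behaviour: the first term of \eqref{L7} becomes a product of two single-path contributions of diversity orders $m_c$ and $m_{S,R}$, while in the second term the factor $\{1-F_{\rm QAM}[\cdot]\}$ is replaced by unity, so that the cross term, of order $\mathrm{SNR}^{-(2m_c+m_{S,R})}$, is discarded and only the cooperative term, of diversity order $2m_c$, remains. These are exactly the two summands of \eqref{SER_as_QAM}.

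Each surviving $F_{\rm QAM}$ then collapses, through its definition \eqref{L8}, into a linear combination of the half-range and quarter-range power-sine integrals $\int_0^{\pi/2}\sin^{2p}\theta\,{\rm d}\theta$ and $\int_0^{\pi/4}\sin^{2p}\theta\,{\rm d}\theta$, taken at $p=m_c$, $p=m_{S,R}$ and $p=2m_c$. The half-range integral is the standard value
\[
\int_0^{\pi/2}\sin^{2p}\theta\,{\rm d}\theta = \frac{\sqrt{\pi}\,\Gamma\!\left(p+\tfrac12\right)}{2\,\Gamma(p+1)},
\]
which supplies the first term inside each brace of \eqref{SER_as_QAM}. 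For the quarter-range integral I would invoke Lemma 3: inserting the bounds $0$ and $\pi/4$ into the antiderivative \eqref{sin_integral} produces a contribution ${}_2F_1\!\left(\tfrac12,\tfrac12-p,\tfrac32,\tfrac12\right)$ from the upper limit (where $\cos^2\theta=\tfrac12$) together with the Gauss-summable boundary value ${}_2F_1\!\left(\tfrac12,\tfrac12-p,\tfrac32,1\right)=\tfrac{\sqrt{\pi}}{2}\,\Gamma(p+\tfrac12)/\Gamma(p+1)$ from the lower limit.

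The remaining work is hypergeometric bookkeeping: this two-term boundary expression must be recast as the single canonical form appearing in \eqref{SER_as_QAM}, namely $\int_0^{\pi/4}\sin^{2p}\theta\,{\rm d}\theta = 2^{-p}(2p+1)^{-1}\,{}_2F_1\!\left(\tfrac12,1,p+\tfrac32,-1\right)$. I would establish this by writing the antiderivative through the incomplete beta function and applying the Pfaff transformation ${}_2F_1(a,b,c,z)=(1-z)^{-b}\,{}_2F_1(c-a,b,c,z/(z-1))$ with $z=\tfrac12$, which maps the argument $\tfrac12$ to $-1$ while collapsing one upper parameter to unity. Substituting $p=m_c$ and $p=m_{S,R}$ into the two single-path braces, with prefactors $a_1^{-m_c}$ and $b_1^{-m_{S,R}}$, and $p=2m_c$ into the cooperative brace, with prefactor $(a_1 d_1)^{-m_c}=\big(P_{L_{S,D}}P_{L_{R,D}}N_0^2 m_c^2/[(1-\rho)P_S P_R\Omega_{S,D}\Omega_{R,D}g_{\rm QAM}^2]\big)^{m_c}$, then assembles \eqref{SER_as_QAM}. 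The main obstacle I anticipate is precisely this last reconciliation: matching Lemma 3's two-term output, with arguments $\tfrac12$ and $1$, to the compact single ${}_2F_1$ of argument $-1$ demanded by the statement requires the correct Pfaff (or equivalent quadratic) identity and careful treatment of the Gauss-evaluated boundary term, rather than any subtlety in the asymptotics themselves.
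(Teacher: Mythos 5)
Your proposal is correct and takes essentially the same route as the paper: the paper likewise reduces the problem to the high-SNR two-term representation with coefficients $A_{c}$, $A_{2c}$, $A_{S,R}$ (citing \cite{D:Lee} for it, where you re-derive it from \eqref{L7} by dropping the ``$1+$'', keeping the dominant $\sin^{-4}\theta$ term, and setting $\{1-F_{\rm QAM}[\cdot]\}\simeq 1$), and then evaluates the same integrals $\int_0^{\pi/2}\sin^{2p}\theta\,{\rm d}\theta$ and $\int_0^{\pi/4}\sin^{2p}\theta\,{\rm d}\theta$ at $p\in\{m_{c},m_{S,R},2m_{c}\}$ in closed form via Lemma 3 (i.e.\ \eqref{sin_integral}). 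Your incomplete-beta/Pfaff computation, $\int_0^{\pi/4}\sin^{2p}\theta\,{\rm d}\theta=\tfrac{2^{-p-1/2}}{2p+1}\,{}_{2}F_{1}\!\left(p+\tfrac12,\tfrac12;p+\tfrac32;\tfrac12\right)=\tfrac{2^{-p}}{2p+1}\,{}_{2}F_{1}\!\left(\tfrac12,1;p+\tfrac32;-1\right)$, is exactly the ``necessary change of variables'' the paper leaves unstated and it checks out: with $a=p+\tfrac12$, $c=p+\tfrac32$ the Pfaff map sends $z=\tfrac12$ to $-1$, collapses the parameter $c-a$ to $1$, and the factor $(1-z)^{-1/2}=\sqrt{2}$ converts $2^{-p-1/2}$ into $2^{-p}$, after which the prefactor $\tfrac{4}{\pi}\left(1-\tfrac{1}{\sqrt{M}}\right)^{2}\cdot\tfrac{2^{-p}}{2p+1}$ reproduces the $\pi\,2^{p-2}(1+2p)$ denominators in \eqref{SER_as_QAM}.
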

\begin{proof}
In the high SNR  regime, it is realistic to assume that $P_{S} \Omega_{S,D} >> P_{L_{S,D}}N_{0}$,  $P_{S} \Omega_{S,R} >> P_{L_{S,R}}N_{0}$ and $P_{R} \Omega_{R,D} >> P_{L_{R,D}}N_{0}$.  Based on this,  an integral representation was formulated in   \cite[eq. (28)]{D:Lee}, 

\begin{equation} \label{Asym_QAM}
\overline{\rm SER}^{C }_{D} \simeq   A_{c} A_{S,R}  \left( \frac{N_{0} m_{c}}{P_{S} \Omega_{S,D} g_{QAM}} \right)^{m_{c}}  \left( \frac{N_{0} m_{S,R}}{P_{S} \Omega_{S,R} g_{QAM}} \right)^{m_{S,R}}   +   A_{2c} \left( \frac{N_{0}^{2} m_{c}^{2} }{(1 - \rho) P_{S} P_{R} \Omega_{S,D}\Omega_{R,D}g_{QAM}^{2}}\right)^{m_{c}} 
\end{equation}
where

\begin{equation}
\left\lbrace ^{\, A_{c}}_{A_{2c}} \right\rbrace = \frac{4}{\pi} \left(1 - \frac{1}{\sqrt{M}} \right) \int_{0}^{\pi {/}2} \sin^{\left\lbrace ^{2m_{c}}_{4m_{c}} \right\rbrace} {\rm d}\theta - \frac{4}{\pi} \left(1 - \frac{1}{\sqrt{M}} \right)^{2} \int_{0}^{\pi {/}4} \sin^{\left\lbrace ^{2m_{c}}_{4m_{c}} \right\rbrace} {\rm d}\theta 
\end{equation} 
and

\begin{equation}
A_{S,R}  = \frac{4}{\pi} \left(1 - \frac{1}{\sqrt{M}} \right) \int_{0}^{\pi {/}2} \sin^{2 m_{S,R}} {\rm d}\theta - \frac{4}{\pi} \left(1 - \frac{1}{\sqrt{M}} \right)^{2} \int_{0}^{\pi {/}4} \sin^{2 m_{S,R}} {\rm d}\theta.  
\end{equation}
Evidently, the terms  $A_{c}$, $A_{2c}$ and $A_{S,R}$  can be expressed in closed-form with the aid of Lemma 3. Based on this, by performing the necessary change of variables in \eqref{sin_integral} and substituting in \eqref{Asym_QAM}, equation \eqref{SER_as_QAM} is deduced, which completes the proof.   
\end{proof}

\begin{remark}
Using \eqref{SER_as_QAM}, the correlation coefficient for the case of  M${-}$QAM modulation can be expressed in terms of the corresponding source and relay powers, fading parameters and average SER, namely

\begin{equation}\label{ccq}
  \rho =   1 - \frac{ K_{3}{\left\lbrace \frac{2C\Gamma\left(2m_{c} + \frac{1}{2} \right)}{\sqrt{\pi}  \, \Gamma(1 +2 m_{c})} -     \frac{C^{2}\,_{2}F_{1}\left(\frac{1}{2}, 1, 2m_{c} + \frac{3}{2}, -1 \right)}{\pi \, 2^{2 m_{c} - 2} (1 + 4 m_{c})} \right\rbrace 
   }^{\frac{1}{m_c}}}{\left(\overline{\rm SER}^{C }_{D} - \frac{\left\lbrace \frac{2 C\Gamma\left(m_{c} + \frac{1}{2} \right)}{  \sqrt{\pi} \Gamma(1 + m_{c})} -    \frac{C^{2} \,_{2}F_{1}\left(\frac{1}{2}, 1, m_{c} + \frac{3}{2}, -1 \right)}{ \pi 2^{m_{c} - 2} \,   (1 + 2 m_{c})} \right\rbrace 
\left\lbrace \frac{2C\Gamma\left(m_{S,R} + \frac{1}{2} \right)}{ \sqrt{\pi} \, \Gamma(1 + m_{S,R})} -     \frac{C^{2}\,_{2}F_{1}\left(\frac{1}{2}, 1, m_{S,R} + \frac{3}{2}, -1 \right)}{ \pi \,  2^{m_{S,R}-2} (1 + 2 m_{S,R})} \right\rbrace }{K_{1}^{-m_c}K_{2}^{-m_{S,R}} }\right)^{\frac{1}{m_c}}}
\end{equation}
where $g = g_{QAM}, C = \left( 1 - {1}/{\sqrt{M}}\right), {K_{1} = \left({N_{0} m_{c} P_{L_{S,D}}}/{P_{S} \Omega_{S,D} g} \right) }, 
K_{2} = \left({N_{0} m_{S,R} P_{L_{S,R}}}/{P_{S} \Omega_{S,R} g} \right)$,  and $K_{3} = \left({N_{0}^{2} m_{c}^{2} P_{L_{S,D}} P_{L_{R,D}}}/{ P_{S} P_{R} \Omega_{S,D} \Omega_{R,D} g^{2}} \right)$.
\end{remark}

 \section{ SER   for $M-$PSK Modulation in Nakagami$-m$ Fading with Spatial Correlation}
 
Having derived novel analytic expressions for the case of $M-$QAM modulation, this section is devoted to the derivation of exact and asymptotic  closed-form expressions  for the case of $M-$PSK constellations. 

\vspace{-0.33cm}

\subsection{Exact SER for the Cooperative-Transmission}

\begin{theorem}
For $\{ P_{S},  {P}_{R}, P_{L_{S,D}}, P_{L_{S,R}}, P_{L_{R,D}}, \Omega_{S,D}, \Omega_{S,R}, \Omega_{R,D},  N_{0}\} \in \mathbb{R}^{+}$, $M \in \mathbb{N}$, $ m_{S,D} \geq \frac{1}{2}$, $m_{S,R} \geq \frac{1}{2}$, $m_{R,D} \geq \frac{1}{2}$, $m_{S, D} $, $2m_{c} - \frac{1}{2} \in \mathbb{N}$ and $0 \leq \rho < 1$, the  SER of $M{-}$PSK based DF relaying over spatially correlated Nakagami${-}m$ fading channels, can be expressed as follows:
 
\begin{equation} \label{M-PSK_corr}
\begin{split}
\overline{\rm SER}^{C}_{D} &=   \frac{\sin^{2 m_{c} + 1}\left( \frac{(M-1)\pi}{M} \right)}{(1 + 2 m_{c}) \pi a_{2}^{m_{c}}} F_{1} \left( m_{c} + \frac{1}{2}, \frac{1}{2}, m_{c}, m_{c} + \frac{3}{2}, \sin^{2}\left( \frac{(M-1)\pi}{M} \right), \frac{\sin^{2}\left( \frac{(M-1)\pi}{M} \right)}{a_{2}} \right) \\
&  \times \frac{\sin^{2 m_{S,R} + 1}\left( \frac{(M-1)\pi}{M} \right)}{(1 + 2 m_{S,R}) \pi b_{2}^{m_{S,R}}} F_{1} \left( m_{S,R} + \frac{1}{2}, \frac{1}{2}, m_{S,R}, m_{S,R} + \frac{3}{2}, \sin^{2}\left( \frac{(M-1)\pi}{M} \right), \frac{\sin^{2}\left( \frac{(M-1)\pi}{M} \right)}{b_{2}} \right)  \\
&+ \left\lbrace 1 - \frac{\sin^{2 m_{S,R} + 1}\left( \frac{(M-1)\pi}{M} \right)}{(1 + 2 m_{S,R}) \pi b_{2}^{m_{S,R}}} F_{1} \left( m_{S,R} + \frac{1}{2}, \frac{1}{2}, m_{S,R}, m_{S,R} + \frac{3}{2}, \sin^{2}\left( \frac{(M-1)\pi}{M} \right), \frac{\sin^{2}\left( \frac{(M-1)\pi}{M} \right)}{b_{2}} \right) \right\rbrace
\end{split}
\end{equation}
\begin{equation*}
\begin{split}
&  \times \left\lbrace \sum_{l=0}^{2m_{c} - \frac{1}{2}}  \binom{2m_{c} - \frac{1}{2}}{l} \frac{(-1)^{l} F_{1} \left( l + \frac{1}{2}, m_{c}, m_{c}, l + \frac{3}{2}, \frac{2}{2 + c_{2} - \sqrt{c_{2}^{2} - 4 d_{2}}}, \frac{2}{2 + c_{2} + \sqrt{c_{2}^{2} - 4 d_{2}}}  \right) }{(1 + 2l) \pi d_{2}^{m_{c}} \left[ \left(1 - \frac{2}{2 + c_{2} - \sqrt{c_{2}^{2} - 4 d_{2}}} \right) \left(1 - \frac{2}{2 + c_{2} + \sqrt{c_{2}^{2}  - 4 d_{2}}} \right) \right]^{-m_{c}}} \right.  \\
& \quad \left. - \sum_{l=0}^{2m_{c} - \frac{1}{2}}  \binom{2m_{c} - \frac{1}{2}}{l} \frac{(-1)^{l} \cos^{1 + 2l}\left( \frac{(M-1) \pi}{M} \right) F_{1} \left( l + \frac{1}{2}, m_{c}, m_{c}, l + \frac{3}{2}, \frac{2 \cos^{2} \left( \frac{(M-1) \pi}{M} \right)}{2 + c_{2} - \sqrt{c_{2}^{2} - 4 d_{2}}}, \frac{\cos^{2}\left(\frac{(M-1)\pi}{M} \right)}{2 + c_{2} + \sqrt{c_{2}^{2} - 4 d_{2}}}  \right) }{(1 + 2l) \pi  \, \left(d_{2} \sin^{4}\left( \frac{(M-1) \pi}{M}\right) + c_{2} \cos^{2} \left( \frac{(M-1) \pi}{M} \right) + c_{2} \right)^{m_{c}} } \right.   \\
&  \qquad \quad \times \left.    \left(1 - \frac{2\cos^{2}\left(\frac{(M-1) \pi}{M} \right)}{2 + c_{2} - \sqrt{c_{2}^{2} - 4 d_{2}}} \right)^{m_{c}} \left(1 - \frac{2\cos^{2}\left(\frac{(M-1) \pi}{M} \right)}{2 + c_{2} + \sqrt{c_{2}^{2} - 4 d_{2}}} \right)^{m_{c}} \right\rbrace 
  \end{split}
\end{equation*}
where  $a_{2} = P_{S} \Omega_{S, D} g_{PSK}{/}(P_{L_{S,D}} m_{S,D} N_{0})$,  $ b_{2}= P_{S} \Omega_{S, R} g_{PSK}{/}(P_{L_{S, R}} N_{0} m_{S,R})$, $c_{2}= P_{R} \Omega_{R, D} g_{PSK}{/}$ $(P_{L_{R, D}}N_{0} m_{R, D})$, $d_{2} = (1 - \rho) P_{R} \Omega_{R, D} g_{PSK}{/}(P_{L_{R, D}}N_{0} m_{R,D})$. 
\end{theorem}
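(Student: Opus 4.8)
The plan is to mirror the proof of Theorem 2, substituting the two-integral $M$-QAM kernel $F_{\rm QAM}[\cdot]$ of \eqref{L8} by the single-integral $M$-PSK kernel. The conditional $M$-PSK symbol error probability admits the Craig-type representation $P_s(\gamma)=\frac{1}{\pi}\int_0^{(M-1)\pi/M}\exp(-g_{\rm PSK}\gamma/\sin^2\theta)\,\mathrm{d}\theta$ with $g_{\rm PSK}=\sin^2(\pi/M)$; averaging over the fading by means of the corresponding MGF therefore yields an end-to-end SER of exactly the four-term form of \eqref{L7}, namely the product of the direct S-D and the S-R single-link error terms (relay decodes in error) plus the MRC term multiplied by $\{1-(\text{S-R error})\}$ (relay decodes correctly), with every bracket $F_{\rm QAM}[\cdot]$ replaced by $\frac{1}{\pi}\int_0^{(M-1)\pi/M}(\cdot)\,\mathrm{d}\theta$. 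The problem then reduces to evaluating three trigonometric integrals over $[0,(M-1)\pi/M]$: two single-pole integrals $\int(1+a/\sin^2\theta)^{-m}\,\mathrm{d}\theta$ (for the S-D and S-R links) and one double-pole integral $\int(1+a/\sin^2\theta+b/\sin^4\theta)^{-m}\,\mathrm{d}\theta$ (for the MRC combiner).

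First I would dispose of the two single-link terms. Following the same Appell-function reduction used for the direct-transmission result of Theorem 1, each integral $\frac{1}{\pi}\int_0^{(M-1)\pi/M}(1+a/\sin^2\theta)^{-m}\,\mathrm{d}\theta$ collapses to a single $F_1$ with argument $\sin^2((M-1)\pi/M)$, producing the first two factors of \eqref{M-PSK_corr} after the identifications $a\mapsto a_2$ and $a\mapsto b_2$; the same evaluation supplies the bracketed $\{1-\cdots\}$ S-R factor.

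The key step is the MRC term, whose integrand is precisely the double-pole kernel $\mathcal{J}(c_2,d_2,m_c)$ of Lemma 1 under $a\mapsto c_2$, $b\mapsto d_2$. I would insert the closed-form antiderivative \eqref{Lemma_1} and evaluate it across the two limits. At $\theta=0$ one has $\cos^2\theta=1$, which collapses the Appell arguments to $2/(2+c_2\mp\sqrt{c_2^2-4d_2})$ and, after the prefactor is simplified through the identities $(c_2-\sqrt{c_2^2-4d_2})(c_2+\sqrt{c_2^2-4d_2})=4d_2$ and $(2+c_2-\sqrt{c_2^2-4d_2})(2+c_2+\sqrt{c_2^2-4d_2})=4(1+c_2+d_2)$, yields the first sum of \eqref{M-PSK_corr}; the boundary $\theta=(M-1)\pi/M$ yields the second sum, carrying the explicit $\cos^{1+2l}((M-1)\pi/M)$ factors and the $\cos^2((M-1)\pi/M)$-dependent Appell arguments.

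The main obstacle I anticipate is the bookkeeping at the upper limit $(M-1)\pi/M=\pi-\pi/M>\pi/2$, where $\cos((M-1)\pi/M)=-\cos(\pi/M)<0$: the odd powers $\cos^{1+2l}$ then carry a definite sign and the two $F_1$ arguments $2\cos^2\theta/(2+c_2\mp\sqrt{c_2^2-4d_2})$ must be assembled correctly so that the difference (upper minus lower boundary) reproduces the stated two-sum structure verbatim. Assembling the three evaluated integrals according to the four-term template of \eqref{L7}, and collecting the $1/\pi$ normalisation constants, then delivers \eqref{M-PSK_corr} and completes the proof.
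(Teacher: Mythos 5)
Your proposal follows essentially the same route as the paper's own proof: you start from the four-term $M{-}$PSK decomposition of \cite[eq.~(23)]{D:Lee} (which the paper simply cites rather than re-deriving via the Craig representation), evaluate the two single-pole integrals through the Appell-function reduction of Theorem~1, and handle the MRC term by evaluating the antiderivative $\mathcal{J}(c_2,d_2,m_c)$ of Lemma~1 at the limits $0$ and $(M-1)\pi/M$, with the lower limit producing the first sum and the upper limit the second. Your explicit bookkeeping of the sign of $\cos^{1+2l}\left(\tfrac{(M-1)\pi}{M}\right)$ beyond $\pi/2$ is a detail the paper leaves implicit, so the attempt is correct and matches the published argument.
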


\begin{proof} 
As a starting point, the average SER of $M-$PSK modulated DF systems over Nakagami${-}m$ fading channels with spatial correlation can be formulated as  follows \cite[eq. (23)]{D:Lee} 
\begin{equation} \label{MPSK_Definition}
\begin{split}
 \overline{\rm SER}^{C}_{D}  &= F_{\rm PSK}\left[ \frac{1}{\left( 1 + \frac{P_{S} \Omega_{S,D} g_{\rm PSK}}{N_{0} m_{c} P_{L_{S,D}} \sin^{2}(\theta) } \right)^{m_{S,D}} } \right]  F_{\rm PSK}\left[ \frac{1}{\left( 1 + \frac{P_{S} \Omega_{S,R} g_{\rm PSK}}{N_{0} m_{c} P_{L_{S,R}} \sin^{2}(\theta) } \right)^{m_{S,R}} } \right]    \\
 &  +  F_{\rm PSK}\left[ \frac{1}{\left(1 + \frac{(P_{S}\Omega_{S,D} + P_{R} \Omega_{R,D})g_{\rm PSK} }{N_{0} m_{c} P_{L_{S,D}} P_{L_{R,D}} \sin^{2}(\theta)}  + \frac{(1 - \rho) P_{S} P_{R} \Omega_{S,D}\Omega_{R,D} g^{2}_{\rm PSK}}{N_{0}^{2} m_{c}^{2} P_{L_{S,D}} P_{L_{R,D}} \sin^{4}(\theta)}  \right)^{m_{c}}} \right] \\
 & \times 
 \left\lbrace 1 - F_{\rm PSK}\left[ \frac{1}{\left( 1 + \frac{P_{S} \Omega_{S,R} g_{\rm PSK}}{N_{0} m_{c} P_{L_{S,R}} \sin^{2}(\theta) } \right)^{m_{S,R}} } \right] \right\rbrace 
  \end{split}
\end{equation}
where
 
\begin{equation}
F_{\rm PSK}[u(\theta)] = \frac{1}{\pi} \int_{0}^{ \frac{(M - 1) \pi}{M}} u(\theta) {\rm d} \theta. 
 \end{equation}
The above  four integrals have the same algebraic form as  the  integrals in Theorem 1 and Lemma 1. Thus, the proof follows by performing the same necessary change of variables and substituting in \eqref{MPSK_Definition}. 
\end{proof}

\subsection{Asymptotic SER for the Cooperative-Transmission}

\begin{proposition}
For $\{ P_{S},  {P}_{R}, P_{L_{S,D}}, P_{L_{S,R}}, P_{L_{R,D}}, \Omega_{S,D}, \Omega_{S,R}, \Omega_{R,D},  N_{0}\} \in \mathbb{R}^{+}$, $M \in \mathbb{N}$, $ m_{S,D} \geq \frac{1}{2}$, $m_{S,R} \geq \frac{1}{2}$, $m_{R,D} \geq \frac{1}{2}$, $m_{S, D} $, $2m_{c} - \frac{1}{2} \in \mathbb{N}$ and $0 \leq \rho < 1$, the  SER of $M{-}$PSK based DF relaying over spatially correlated Nakagami${-}m$ fading channels in the high SNR regime can be expressed as
 
 \begin{equation*}
 \begin{split}
\overline{\rm SER}^{C}_{D} &\simeq  \left( \frac{N_{0} m_{c} P_{L_{S,D}}}{P_{S} \Omega_{S,D} g_{\rm PSK}} \right)^{m_{c}} \left\lbrace \frac{\Gamma\left( m_{c} + \frac{1}{2} \right)}{2 \sqrt{\pi} \, m_{c}!} + \frac{\cos\left(  \frac{\pi}{M}\right)\,_{2}F_{1} \left( \frac{1}{2}, \frac{1}{2} - m_{c}, \frac{3}{2}, \cos^{2}\left( \frac{\pi}{M} \right) \right) }{\pi} \right\rbrace  \\
& \times \left( \frac{N_{0} m_{S,R} P_{L_{S,R}}}{P_{S} \Omega_{S,R} g_{\rm PSK}} \right)^{m_{S,R}} \left\lbrace \frac{\Gamma\left( m_{S,R} + \frac{1}{2} \right)}{2 \sqrt{\pi} \, m_{S,R}!} + \frac{\cos\left(  \frac{\pi}{M}\right)\,_{2}F_{1} \left( \frac{1}{2}, \frac{1}{2} - m_{S,R}, \frac{3}{2}, \cos^{2}\left( \frac{\pi}{M} \right) \right) }{\pi} \right\rbrace 
\end{split}
 \end{equation*}

\begin{equation} \label{SER_MPSK_as}
 + \left( \frac{N_{0}^{2} m_{c}^{2} P_{L_{S,D}} P_{L_{R,D}}}{(1 - \rho)  P_{S} P_{R} \Omega_{S,D} \Omega_{R,D} g_{\rm PSK}^{2}} \right)^{m_{c}} \left\lbrace \frac{\Gamma\left( 2m_{c} + \frac{1}{2}\right)}{2 \sqrt{\pi} (2m_{c})! } + \frac{\cos\left(\frac{\pi}{M} \right) \, _{2}F_{1}\left( \frac{1}{2}, \frac{1}{2} - 2 m_{c}, \frac{3}{2}, \cos^{2} \left( \frac{\pi}{M} \right) \right) }{\pi} \right\rbrace.
\end{equation}
\end{proposition}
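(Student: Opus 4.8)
The plan is to start from the exact integral representation of the $M{-}$PSK SER in \eqref{MPSK_Definition} and to extract its dominant behaviour as the three average SNRs grow large, mirroring the argument used for Proposition 1. In the high-SNR regime it is realistic to assume $P_{S}\Omega_{S,D} \gg P_{L_{S,D}} N_{0}$, $P_{S}\Omega_{S,R} \gg P_{L_{S,R}} N_{0}$ and $P_{R}\Omega_{R,D} \gg P_{L_{R,D}} N_{0}$. Under these conditions the unity term in each denominator of \eqref{MPSK_Definition} becomes negligible, so that a factor of the form $1/(1 + X/\sin^{2}\theta)^{m}$ collapses to $(\sin^{2}\theta / X)^{m}$; for the relay--destination combined branch the quartic term proportional to $1/\sin^{4}\theta$ dominates the quadratic one, leaving $(\sin^{4}\theta/Z)^{m_{c}}$, where $Z$ is proportional to $(1-\rho) P_{S} P_{R} \Omega_{S,D}\Omega_{R,D}$. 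Moreover, the factor $\{1 - F_{\rm PSK}[\,\cdot\,]\}$ multiplying the correct-decoding term tends to unity, since the subtracted $F_{\rm PSK}$ piece is itself $O(\mathrm{SNR}^{-m_{S,R}})$ and hence of higher diversity order.

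After these simplifications the SER separates into two surviving contributions: the product of the leading S-D and S-R terms, which is the erroneous-decoding branch, and the leading combined-branch term, which is the correct-decoding branch. Each contribution is of the common form $\tfrac{1}{\pi X^{m}} \int_{0}^{(M-1)\pi/M} \sin^{2m}\theta\, {\rm d}\theta$, with $m \mapsto m_{S,D}=m_{c}$, $m\mapsto m_{S,R}$, or $m \mapsto 2m_{c}$ for the combined branch. I would then invoke Lemma 3, that is \eqref{sin_integral}, to evaluate these powers-of-sine integrals in closed form.

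The definite integral is obtained by inserting the limits into the antiderivative $-\cos\theta\, {}_{2}F_{1}(\tfrac{1}{2}, \tfrac{1}{2}-m, \tfrac{3}{2}, \cos^{2}\theta)$. At the upper limit the angle reduction $\cos\!\big(\tfrac{(M-1)\pi}{M}\big) = -\cos\!\big(\tfrac{\pi}{M}\big)$ produces the term $\cos(\pi/M)\, {}_{2}F_{1}(\tfrac{1}{2},\tfrac{1}{2}-m,\tfrac{3}{2},\cos^{2}(\pi/M))/\pi$, while the lower limit $\theta=0$ requires the value ${}_{2}F_{1}(\tfrac{1}{2},\tfrac{1}{2}-m,\tfrac{3}{2},1)$, which by the Gauss summation theorem equals $\sqrt{\pi}\,\Gamma(m+\tfrac{1}{2})/[2\,\Gamma(m+1)]$; here the convergence condition $c-a-b = m+\tfrac{1}{2}>0$ holds for all admissible $m\geq \tfrac{1}{2}$. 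Collecting the two endpoint contributions yields the bracketed factor $\Gamma(m+\tfrac{1}{2})/(2\sqrt{\pi}\,m!) + \cos(\pi/M)\, {}_{2}F_{1}(\tfrac{1}{2},\tfrac{1}{2}-m,\tfrac{3}{2},\cos^{2}(\pi/M))/\pi$ that appears in \eqref{SER_MPSK_as}. Substituting $m = m_{c}$, $m = m_{S,R}$ and $m = 2m_{c}$ into the three respective factors and reassembling reproduces the stated expression.

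I expect the main obstacle to be the rigorous justification of the term ordering in the high-SNR expansion, specifically verifying that the quartic $1/\sin^{4}\theta$ term dominates the quadratic one across the integration range, and confirming that the cross and subtracted terms are genuinely of higher diversity order so that they may be discarded. The subsequent integral evaluation is essentially mechanical, following directly from Lemma 3 and the Gauss summation theorem.
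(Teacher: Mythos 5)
Your proposal is correct and follows essentially the same route as the paper: both reduce the asymptotic SER to the three integrals $\frac{1}{\pi}\int_{0}^{(M-1)\pi/M}\sin^{2m}\theta\,{\rm d}\theta$ with $m \in \{m_{c}, m_{S,R}, 2m_{c}\}$ and evaluate them via Lemma 3, using $\cos\bigl(\tfrac{(M-1)\pi}{M}\bigr) = -\cos\bigl(\tfrac{\pi}{M}\bigr)$ at the upper limit and Gauss's summation theorem (with $c-a-b = m+\tfrac{1}{2}>0$) at $\theta = 0$. The only difference is that the paper imports the high-SNR formulation \eqref{MPSK_as} directly from \cite[eq.~(27)]{D:Lee}, whereas you re-derive it from \eqref{MPSK_Definition}; your justification there is sound and even resolves the ordering concern you raise, since $\sin^{2}\theta \leq 1$ on the integration range (where $\sin\theta$ vanishes only at $\theta=0$) guarantees the unity terms are uniformly negligible and the quartic term dominates the quadratic one by a uniform factor of order ${\rm SNR}$, the latter requiring $\rho < 1$ exactly as in the hypotheses.
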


\begin{proof}
The asymptotic SER for high SNR values was formulated in \cite[eq. (27)]{D:Lee} 

\begin{equation} \label{MPSK_as}
\overline{\rm SER}^{C}_{D}  \simeq  \tilde{A}_{c}\tilde{A}_{S,R}  \left( \frac{N_{0} m_{c} P_{L_{S,D}}}{P_{S} \Omega_{S,D} g_{\rm PSK}} \right)^{m_{c}} \left( \frac{N_{0} m_{S,R} P_{L_{S,R}}}{P_{S} \Omega_{S,R} g_{\rm PSK}} \right)^{m_{S,R}}\, +  \tilde{A}_{2c} \, \left( \frac{N_{0}^{2} m_{c}^{2} P_{L_{S,D}}P_{L_{R,D}}}{(1-\rho) P_{S} P_{R} \Omega_{S,D} \Omega_{R,D} g^{2}_{\rm PSK} } \right)^{m_{c}} 
\end{equation}
where
 
\begin{equation} \label{MPSK_as_1}
\left\lbrace ^{\, \tilde{A}_{c}}_{\tilde{A}_{2c}} \right\rbrace = \frac{1}{\pi}   \int_{0}^{ \frac{(M-1) \pi }{M}} \sin^{\left\lbrace ^{2m_{c}}_{4m_{c}} \right\rbrace} {\rm d}\theta  
\end{equation} 
and

\begin{equation} \label{MPSK_as_2}
A_{S,R}  = \frac{1}{\pi}  \int_{0}^{ \frac{(M-1) \pi }{M}} \sin^{2 m_{S,R}} {\rm d}\theta.   
\end{equation}
Notably, the integrals in \eqref{MPSK_as_1} and \eqref{MPSK_as_2}  have the same algebraic representation as the  integral in Lemma 3. As a result, by performing the necessary change of variables and substituting in \eqref{MPSK_as}, one obtains  \eqref{SER_MPSK_as}, which completes  the proof. 
\end{proof}
\begin{remark}
 Based on \eqref{SER_MPSK_as}, the corresponding correlation coefficient  can be expressed in terms of the corresponding source and relay powers, fading parameters and average SER as

\begin{equation}
\rho = 1 -\frac{K_{3}\left\lbrace \frac{\Gamma\left( 2m_{c} + \frac{1}{2}\right)}{2 \sqrt{\pi} (2m_{c})! } + \frac{\cos\left(\frac{\pi}{M} \right) \, _{2}F_{1}\left( \frac{1}{2}, \frac{1}{2} - 2 m_{c}, \frac{3}{2}, \cos^{2} \left( \frac{\pi}{M} \right) \right) }{\pi} \right\rbrace^{\frac{1}{m_c}}}{
\left(\overline{\rm SER}^{C}_{D} - \frac{\left\lbrace \frac{\Gamma\left( m_{c} + \frac{1}{2} \right)}{2 \sqrt{\pi} \, m_{c}!} + \frac{\cos\left(  \frac{\pi}{M}\right)\,_{2}F_{1} \left( \frac{1}{2}, \frac{1}{2} - m_{c}, \frac{3}{2}, \cos^{2}\left( \frac{\pi}{M} \right) \right) }{\pi} \right\rbrace 
 \left\lbrace \frac{\Gamma\left( m_{S,R} + \frac{1}{2} \right)}{2 \sqrt{\pi} \, m_{S,R}!} + \frac{\cos\left(  \frac{\pi}{M}\right)\,_{2}F_{1} \left( \frac{1}{2}, \frac{1}{2} - m_{S,R}, \frac{3}{2}, \cos^{2}\left( \frac{\pi}{M} \right) \right) }{\pi} \right\rbrace}{K_{1}^{-m_c}K_{2}^{-m_{S,R}}}\right)^\frac{1}{m_c}}  
 \end{equation} 
where $g = g_{PSK}$ is set in the  $K_{1}, K_{2}$ and $K_{3}$ terms, which are given in Remark 2.
\end{remark}
Figure 2 illustrates the SER performance as a function of SNR  for 4${-}$QAM/QPSK modulations. The source-destination transmission distance is indicatively considered at 600m while the relay is assumed to be located in the middle and the transmit power is shared equally to the source and the relay. The corresponding path-loss effects are considered by adopting the path loss (PL)  model in \cite{SP} namely

\begin{equation}
 PL_{i,j}[dB] = 148 + 40\log_{10}(d_{i,j}[km])
\end{equation}
which has been shown to characterize adequately harsh communication scenarios and is particularly applicable to mobile relaying and device-to-device communications. It is clearly observed that the empirical simulated results  are in excellent agreement with the respective analytical results. Furthermore, the simple asymptotic results are also highly accurate at higher SNRs.

\begin{figure}[h!]
\centering{\includegraphics[keepaspectratio,width= 15cm]{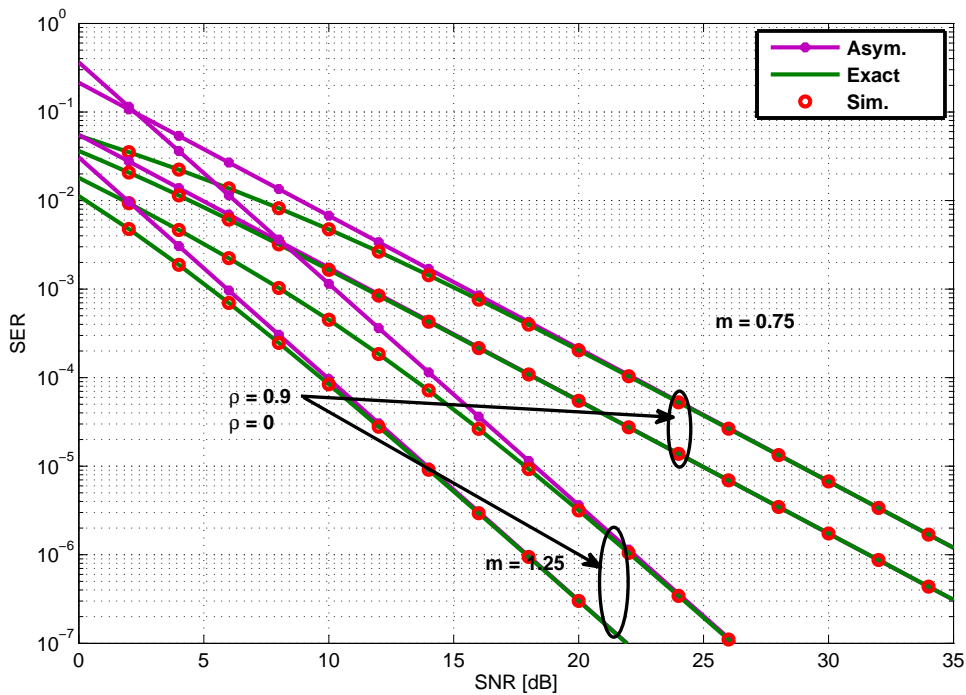} }
\caption{Example SER performance over Nakagami${-}m$ fading channels with  $m_{S,D} = m_{S,R} = m_{R,D} = m = \{0.75, 1.25\}, \, 
\Omega_{S,D} = \Omega_{S,R}  = \Omega_{R,D} = 0 $dB for $4{-}$QAM/QPSK constellations and different values of spatial correlation.  
}
\end{figure}

\section{System Power Consumption Model and Analysis}

In this section, motivated by the general interests towards green communications and increasing incentives to save energy, we quantify the total energy consumption required to transmit information from the source to the destination. We assume   that the transceiver circuitry operates on multi-mode basis i.e. $i)$ when there is a signal to transmit, the circuits are in active mode; $ii)$ when there is no signal to transmit, the circuits operate on a sleep mode; $iii)$ the circuits are in transient mode during the switching process from sleep mode to active mode. The elementary block diagrams of the  assumed transmitter and receiver   are illustrated in Fig. 2 and Fig. 3, respectively. This model is based on the energy and layout area efficient direct-conversion architecture which is commonly used in wireless transceivers. It is also assumed that all nodes are equipped with similar transmitter and receiver circuit blocks and that the power consumption of the active filters at the transmitter and receiver   is similar.

Considering a node that transmits $L$ bits and total transmission period $T$, the transient duration from active mode to sleep mode is short enough to be neglected. However, the start-up process from sleep mode to active mode may be slower due to the finite phase-locked loop (PLL) settling time in the frequency synthesizer. By denoting the duration of the sleep, transient and active modes as $T_{sp}$, $T_{tr}$ and $T_{on}$, respectively, the total transmission period is defined as $T = T_{sp} + T_{tr} + T_{on}$, with  $T_{tr}$ being equal to the frequency synthesizer settling time. Based on this, the total energy required to transmit and receive $L$ information bits is expressed as 

\begin{equation} \label{L32}
 E = P_{on}T_{on} + P_{sp}T_{sp} + P_{tr}T_{tr}
\end{equation}   
where $P_{on}, P_{sp}$ and $P_{tr}$ denote the power consumption values during the active, sleep and transient modes,  respectively. In realistic circuit designs, the power consumption in the sleep mode can be considered negligible compared to the active mode power \cite{Bahai} and thus, $P_{sp} \simeq  0$. It is also noted that power consumption during the transient mode practically refers to the power consumption of the frequency synthesizers. Based on this, it is assumed that $P_{tr} = 2P_{LO}$; therefore, using the power consumption values at both transmitter and receiver sides during the active mode one obtains
\begin{figure}[tp!]
\centering{\includegraphics[keepaspectratio,width= 15cm]{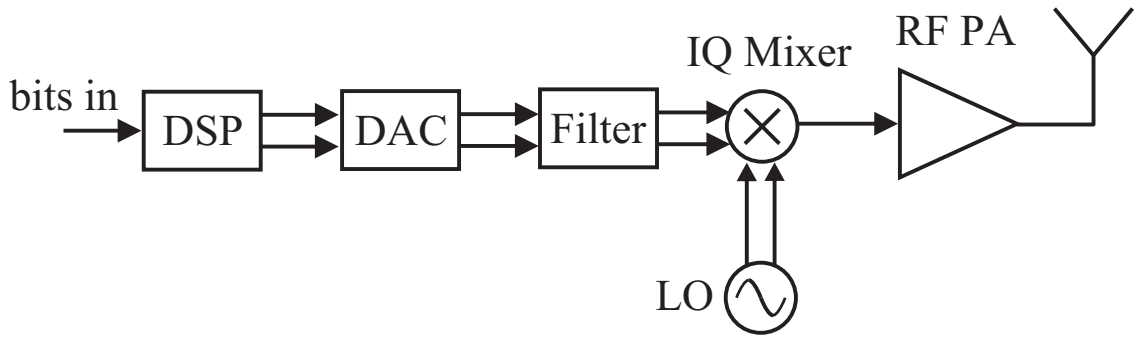}}
\caption{Elementary direct-conversion transmitter.} 
\end{figure}
\begin{figure}[tp!]
\centering{\includegraphics[keepaspectratio,width=  15cm]{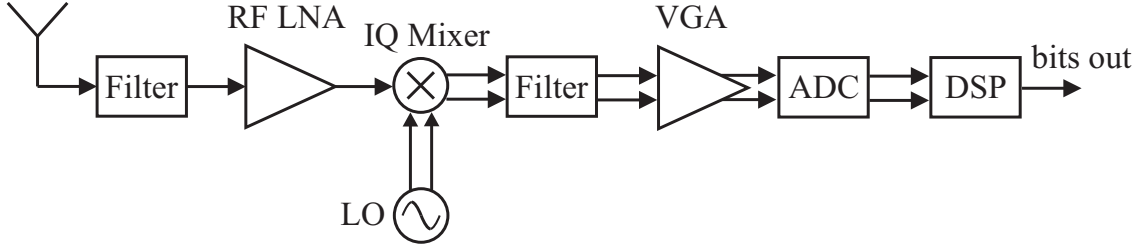}}
\caption{Elementary direct-conversion receiver.} 
\end{figure}
  
\begin{equation} \label{L33}
P_{on} = P_{ont} + P_{onr}
\end{equation}
where $P_{ont}$ is the total transmitter power consumption that accounts for the sum of signal transmission and transmitter circuit powers and $P_{onr}$ is the total receiver power consumption. Hence, it   follows that,
 
\begin{equation}  \label{L34}
 P_{ont} = P_{t} + P_{amp.} + P_{CT_x} 
\end{equation}
and
 
\begin{equation} \label{L35}
 P_{onr} = P_{CR_x} 
\end{equation}
 where $P_{t}$ is the signal transmission power, $P_{amp}$ is the power consumption of the RF power amplifier and $P_{CT_x}$ and $P_{CR_x}$ denote the total transmitter circuit power  and the total receiver circuit powers, namely, 
 
\begin{equation}  \label{L36}
P_{CT_{x}} = P_{DSP_{Tx}} + P_{DAC} + P_{Fil} + P_{Mix} + P_{LO} 
\end{equation}
and
 
\begin{equation} \label{L37}
P_{CR_{x}} = P_{DSP_{Rx}} + P_{ADC} +  P_{VGA} + 2P_{Fil} + P_{Mix} + P_{LO} + P_{LNA}  
\end{equation}
respectively. The $P_{CT_{x}} $ measure consists of the following power consumption entities: digital signal processor (DSP), $P_{DSP_{Tx}} $; digital to analog converter (DAC), $P_{DAC}$; active filter,  $ P_{Fil}$; IQ Mixer, $P_{Mix}$ and synthesizer, $P_{LO}$. Likewise, the active power consumption at the receiver comprises the power consumption values for  digital signal processor (DSP), $ P_{DSP_{Rx}}$; analog to digital converter(ADC), $P_{ADC}$; variable gain amplifier (VGA), $ P_{VGA}$; active filter, $P_{Fil}$; IQ Mixer, $P_{Mix}$; synthesizer, $P_{LO}$;  low noise amplifier (LNA), $P_{LNA}$  \cite{W:Amine}. Based on this,  the   total required circuit power consumption  is given by 
 
\begin{equation}  \label{L38}
P_{TC} = P_{CT_x} + P_{CR_x}.
\end{equation}

It is also noted that for signal transmission power $P_t$, the power consumption of the RF-power amplifier can be modeled by $P_{amp} = \alpha P_t$, where $ \alpha = \frac{\xi}{\eta}-1$, with $\eta$ and $\xi$  denoting the respective drain efficiency  of the amplifier and  the peak-to-average power ratio (PAPR), which depends on the modulation order and the associated constellation size. Based on this, for the case of square uncoded $M{-}$QAM modulation, $\xi = 3\frac{\sqrt{M}-1}{\sqrt{M}+1}$ and $T_{on} = \frac{LT_{s}}{b} = \frac{L}{bB}$, where $b= \log_2 M$ is the constellation size, $ L $ is  the transmission block length in bits \,and\, $T_{s}$ is  the symbol duration which   relates to the bandwidth $ B $ as $T_{s} \approx \frac{1}{B}$ \cite{X:Goldsmith}.

\section{Energy Optimization and Power Allocation}

In this section,  we deploy and combine the results of the previous sections and analyze the total energy required to transmit information efficiently from the source to the destination. To this end, we  firstly quantify the total energy consumption in the direct communication scenario. Hence, by applying \eqref{L32}, \eqref{L34} and \eqref{L35} and recalling that $P_{sp} \approx 0 $ and $P_{tr} = 2P_{LO}$, the average energy consumption per information bit   is given by \cite{GXX} 

\begin{equation}  \label{L39}
E^{D}_{T}  =  \overline{E}^{D}_{T} = \frac{\left((1 +\alpha )P_{S} + P_{CT_x} + P_{CR_x}\right)T_{on} + 2P_{LO}T_{tr}}{L} 
\end{equation} 
 
 \noindent 
 where $P_{S}$ denotes the source transmit power. In order to determine  the average total energy consumption   in the corresponding cooperative transmission system deploying the DF protocol,   we  formulate the total average   power consumption, which is a discrete random variable that can be statistically expressed as

\begin{equation}  \label{L40}
P_{T}^{C} = 
\begin{cases}
 & P_{CT_x}+  \left(1+\alpha\right)P_{S} + 2P_{CR_x},
   \hspace{1.075cm} \text{with} \, \,\,\,   {\rm Pr} = 1 \\
 & P_{CT_x}+  \left(1+\alpha\right)P_{R} +  P_{CR_x},  \hspace{1.2cm} \text{with }  \, \, {\rm Pr} =  1- \overline{\rm SER}_{S,R}
\end{cases}
\end{equation}
where $ P_{R}$ denotes the relay transmit power. The  first term of \eqref{L40} refers to the absolute total power consumption by the nodes in the first transmission phase, while the second term represents the  power consumption in the second phase, subject to correct decoding of the received signal by the relay, which is indicated  by the probabilistic term  $ \left(1- \overline{\rm SER}_{S,R}\right)$. Hence, the average total power consumption in the cooperative transmission mode can be expressed as  

\begin{equation}  \label{L41}
\bar{P}^{C}_{T}  =   P_{CT_x} + (1 + \alpha)P_{S} + 2P_{CR_x} \; + \;
 \left( P_{CT_{x}} + (1 + \alpha)P_{R} + P_{CR_{x}} \right) \left(1- \overline{\rm SER}_{S,R}\right).  
\end{equation}
Based on this, the corresponding average energy consumption per information bit is given by 

\begin{equation}  \label{L42}
\bar{E}^{C}_{T} =\frac {\overline{P}^{\, C}_{T} T_{on} + 2P_{LO}T_{tr}}{L}. 
\end{equation}

The achieved energy efficiency enhancement by the cooperative transmission is determined with the aid of the cooperation gain (CG), which is the  ratio of the energy efficiency of cooperative transmission over the energy efficiency of the direct transmission, per successfully delivered bit, namely, 

\begin{equation}\label{ss}
CG = \frac{\overline{E}^{D}_{T}\left(1-\overline{\rm BER}^{C}_{D}\right)}{\overline{E}^{C}_{T}\left(1-\overline{\rm BER}^{D}_{D}\right)}.
\end{equation}
Evidently, when the resulting ratio is smaller than one, it indicates that direct transmission is more energy efficient and thus, the extra energy consumption induced by cooperation outweighs its gain in decreasing the BER of the system.  

In what follows, the above expressions are employed  in formulating and solving the energy optimization problems aiming to guarantee certain  QoS requirements, namely, target destination BER.  
 In the same context, we additionally provide the optimal power allocation formulation for the cooperative transmission scenario under the maximum total transmit power constraint.
    
\subsection{Direct Transmission}

We first consider the energy optimization problem for minimizing the average total energy consumption in the direct communication scenario with the maximum transmission power and  target bit error rate, $ p^{*}$, as  constraints.   We assume that the power consumption of the circuit components are fixed and independent of the optimization. Thus, the only variable in the optimization is the transmit power of the source.  To this effect and with the aid of \eqref{L39}, the optimization problem for the direct transmission mode can be formulated as follows:    

\begin{equation}\label{L44}
\begin{split}
\min_{P_{S}}\overline{E}^{D}_{T} \hspace{4cm} \\ 
\quad \text{subject to:}\: \: \:  P_{S} \leq P_{maxt}, \quad   P_{S} \geq 0 \hspace{2cm} \\ 
\overline{\rm BER}^{D}_{D} =  p^{*}. \hspace{3.5cm}
\end{split}  
\end{equation} 
Deriving the minimum average total energy required in the direct communication scenario, requires prior computation of the corresponding  symbol error probability. This is  realized with the aid of \eqref{D_new} which is expressed in  closed-form in terms of $ _{2}F_{1} \left(m, \frac{1}{2}; m + 1; \frac{1}{1 + a_{1}} \right)$ and 
$F_{1}\left(\frac{1}{2};\frac{1}{2} - m, m, \frac{3}{2};\frac{1}{2},\frac{1}{2 + 2a_{1}}\right)$ functions\footnote{For the sake of simplicity, we assume that $m = m_{S,D}$. }. It is recalled that these functions are widely employed in  natural sciences and engineering and their computational implementation is rather straightforward as they  are   built-in functions in popular software packages such as MATLAB, MAPLE and MATHEMATICA. It is also noted that the representation of these functions in the present analysis   allows the following useful approximative expressions:  
$ _{2}F_{1} \left(m, \frac{1}{2}; m + 1; \frac{1}{1 + a_{1}} \right) \simeq 1 $ and  $ F_{1} \left( \frac{1}{2}; \frac{1}{2} - m, m, \frac{3}{2}; \frac{1}{2}, \frac{1}{2 + 2a_{1}}\right) \simeq  F_{1} \left( \frac{1}{2}; \frac{1}{2} - m, m, \frac{3}{2}; \frac{1}{2},0 \right)$. The accuracy of these approximations   is validated through extensive numerical and simulation results which indicate their tightness for random values of $m$ and moderate and large values of $a_1$. To this effect,    the following accurate  closed-form BER approximation for $M{-}$QAM modulated signals is deduced

\begin{equation}  \label{L45a}
\overline{\rm BER}^{D}_{D} \simeq    \frac{2 \; (\sqrt{M}-1)\Gamma( m + \frac{1}{2})}{ \sqrt{\pi} M m!(1 + a_1)^m  \log_2 M} +  \frac{4 \, F_{1}(\frac{1}{2};\frac{1}{2}-m,m,\frac{3}{2};\frac{1}{2}, 0)}{\sqrt{2} \pi (1 +a_1 )^{m} \log_2 M} \left(1-\frac{1}{\sqrt{M}}\right)^{2}.    
\end{equation} 
Importantly, the Appell  function in \eqref{L45a} can be expressed in terms of the Gauss hypergeometric function, 

\begin{equation} \label{identity}
F_{1}\left(\frac{1}{2};\frac{1}{2}-m,m,\frac{3}{2};\frac{1}{2}, 0\right) =  \,_{2}F_{1}\left( \frac{1}{2}, \frac{1}{2} -m; \frac{3}{2}, \frac{1}{2} \right). 
\end{equation}
As a result, equation \eqref{L45a} becomes 

\begin{equation}  \label{L45}
\overline{\rm BER}^{D}_{D} \simeq    \frac{2 \; (\sqrt{M}-1) \Gamma( m + \frac{1}{2})}{ \sqrt{\pi} M m! (1 + a_1)^m  \log_2 M} +  \frac{4 \,\,_{2}F_{1}\left( \frac{1}{2}, \frac{1}{2} - m; \frac{3}{2}, \frac{1}{2} \right)}{ \sqrt{2} \pi (1 +a_1 )^{m} \log_2 M} \left(1-\frac{1}{\sqrt{M}}\right)^{2}. 
\end{equation}
It is evident that \eqref{L45} is a function of the modulation order, the severity of multipath fading and $a_1$. Therefore, by substituting the targeted QoS  $p^{*}$ in \eqref{L45}, recalling that $a_1 =  (P_{S}\: \: \Omega_{S,D}\: g_{QAM}) {/} (N_{0}P_{L_{S,D}}) \:$ and carrying out some algebraic manipulations,  one obtains
  
\begin{equation}\label{La46}
P_{S} \simeq    \frac{m N_{0}  P_{L_{S,D}}}{\Omega_{S,D}g_{QAM}} \left[ \left(\frac{C}{p^{*}}\right)^{1/m} - 1\right]
\end{equation} 
where

\begin{equation}\label{L47}
C = \frac{4 (\sqrt{M} - 1)}{  M \log_2 M} \left[  \frac{(\sqrt{M} - 1)  \,_{2}F_{1}\left(\frac{1}{2}, \frac{1}{2} - m; \frac{3}{2}; \frac{1}{2} \right) }{\sqrt{2}\: \pi } + \frac{ \; \Gamma( m + \frac{1}{2})}{2 \sqrt{ \pi}m\Gamma(m)} \right]. 
\end{equation} 
To this effect and with the aid of \eqref{L39} and \eqref{La46}, it follows that the minimum total energy per information bit required for direct transmission for meeting the required QoS can be expressed in closed-form as 

\begin{equation}\label{L48}
\overline{E}^{D*}_{T} = \frac{\left(P_{CT_x} + P_{CR_x}  \right) T_{on} }{L} +    \frac{(1 + \alpha) N_{0} m_{S, D} T_{on} P_{L_{S,D}} }{L\, \Omega_{S,D}g_{QAM}} \left[\left(\frac{C}{p^{*}}\right)^{1/m} - 1\right]    + \frac{  2P_{LO}T_{tr}}{L}.
\end{equation}

Based on the total energy consumption in \eqref{L48} and given  the constellation size $ b = \frac{L}{B T_{on}} $, it is shown that the proposed energy expression comprises the  transmission energy $ E_{t} $ and circuit energy $ E_{C} $, namely, 

\begin{equation}  \label{L49}
E_{t} = P_{S}T_{on} 
      = \frac{N_{0}m PL_{S,D}}{\Omega_{S,D}g_{QAM}} \left[ \left(\frac{C}{p^{*}}\right)^{1/m} - 1\right] \frac{T_{on}}{L}
\end{equation}     
where $C$ can be expressed as a function of the transmission time $T_{on}$ as follows
 
\begin{equation} \label{L50}
C =  \frac{  B T_{on} \, \left( 2^\frac{L}{2B T_{on}} - 1 \right)^{2}}{  L \pi 2^{\frac{L}{B T_{on}} -\frac{3}{2}} } \, \, _{2}F_{1} \left(\frac{1}{2}, \frac{1}{2} - m; \frac{3}{2};\frac{1}{2} \right) + \frac{ 2 B T_{on}  }{ 2^{\frac{L}{BT_{on}}}m L \sqrt{\pi} }\, \left(\frac{1}{2}\right)_{m}  \left( 2^\frac{L}{2B T_{on}} - 1 \right).
\end{equation}   
Hence, by inserting \eqref{L50} in \eqref{L49},   the following analytic expression  for the transmission energy per information bit is deduced  
 
\begin{equation}  \label{L51}
\hspace{-1cm} E_{t} =  \frac{N_{0}m PL_{S,D} T_{on}}{L \, \Omega_{S,D}g_{QAM}} \left\lbrace\left[  \frac{  B T_{on} \, \left( 2^\frac{L}{2B T_{on}} - 1 \right)^{2}}{ p^{*} L \pi 2^{\frac{L}{B T_{on}} -\frac{3}{2}} } \, \, _{2}F_{1} \left(\frac{1}{2}, \frac{1}{2} - m; \frac{3}{2};\frac{1}{2} \right) + \frac{  B T_{on}\left( 2^\frac{L}{2B T_{on}} - 1 \right)}{p^{*}m L \sqrt{\pi}\;  2^{\frac{L}{BT_{on}}-1}  }\, \left(\frac{1}{2}\right)_{m}\right]^{1/m} - 1 \right\rbrace.   
\end{equation}      
Likewise, the circuit energy, $E_C$, can be expressed as  

\begin{equation}  \label{L52}
E_{C} = \left(P_{CT_x} + P_{CR_x}\right)T_{on}. 
\end{equation}
Notably, equations  \eqref{L50} and \eqref{L51} indicate that for a fixed bandwidth $B$ and packet length $L$, the transmission energy is a decreasing function with respect to the product $T_{on}B$ whereas the circuit energy increases monotonically with respect to $T_{on}$. In addition, it is shown that the transmission energy is dependent upon  the transmission distance $d_{S,D}$ and the severity of fading $m$, while the corresponding circuit energy remains fixed regardless of the value of $d_{S,D}$ and $m$.

\subsection{Cooperative Transmission}

This subsection presents the energy optimization  and power allocation problem   when the involved relay   forwards successfully decoded signals, generally at different power than the power of the source.  Evidently, the respective optimization model is a two dimensional problem and thus, we formulate the energy minimization problem with two optimization variables, namely, the source transmit power, $P_{S}$, and the relay transmit power, $ {P}_{R}$. In this context, the  aim is to minimize the total energy consumption of the overall network instead of minimizing the energy consumption at individual nodes. Based on this and with the aid of   \eqref{L42}, the optimization problem can be  formulated as follows:           

  \begin{equation}  \label{L53}
 \begin{split}
 \hspace{4cm} \min_{P_{S}, {P}_{R}}\overline{E}^{C}_{T}(P_{S}, {P}_{R}) \hspace{4cm} \\ 
\quad \text{subject to:}\quad (P_{S} + {P}_{R} )  \leq P_{maxt}, \quad P_{S} \geq  0, \quad  {P}_{R} \geq 0  \hspace{2cm} \\ \vspace{0.5cm}  
 \overline{\rm BER}^{C}_{D}(P_{S}, {P}_{R}) =  p^{*}. \hspace{3.5cm}  
 \end{split}
 \end{equation} 
The above optimization task   is a non-linear programming problem since the objective function and the constraint BER are both non-linear functions of $P_{s}$ and  
 $ {P}_{R}$. It is also recalled that Karush Kuhn Tucker (KKT) conditions that handle both equality and inequality constraints are in general the first order sufficient and  necessary conditions for optimum solutions in non-linear optimization problems (NLP)  provided that certain regularity conditions are satisfied. To this end, using the Lagrange multipliers $\lambda_{1}$ and  $\lambda_{2}$, $\lambda_{3}$ and $\lambda_{4}$, for the equality and inequality constraints  we set the corresponding Lagrangian equation that depends   on the optimization variables and   multipliers while meeting the KKT conditions in \cite{body} for the non-linear convex  optimization problem, namely      

\begin{equation}  \label{L54}
L  =  \overline{E}^{C}_{T} + \lambda_{1} \left(\overline{\rm BER}^{C}_{D}- p^{*}\right) - \lambda_{2}P_{S} - \lambda_{3} {P}_{R} + \lambda_{4} \left(\left(P_{S} + {P}_{R} \right) - P_{maxt} \right).  
\end{equation} 
The proof for the convexity of the optimization problem is provided in   Appendix E. 

Based on \eqref{L54}, the KKT conditions for the problem can be expressed as follows:     

\begin{equation}  \label{L55}
\nabla\overline{E}^{C}_{T} + \lambda_{1}\nabla\overline{\rm BER}^{C}_{D} - \lambda_{2}\nabla P_{S} - \lambda_{3}\nabla {P}_{R} + \lambda_{4}\nabla\left(P_{S} +  {P}_{R}\right) = 0  
\end{equation}
whereas the associated complementary conditions are given by

 \begin{equation}  \label{L56}
\begin{split}
\overline{\rm BER}^{C}_{D} =  p^{*}, \, P_{S} +  {P}_{R} \leq P_{maxt} \hspace{3cm} \\ 
\lambda_{1}\left(\overline{\rm BER}^{C}_{D}- p^{*}\right) = 0, \, \, \lambda_{2}P_{S} = 0, \, \,  \lambda_{3} {P}_{R} = 0\,\: \, \text{and} \, \: \lambda_{4}\left(P_{S} +  {P}_{R} - P_{maxt}\right) = 0 \\
\lambda_{1}, \lambda_{2}, \lambda_{3}, \lambda_{4} \geq 0.\hspace{5cm}
\end{split}
\end{equation}  
In the above set of complementary KKT conditions both $\lambda_{2}$ and $\lambda_{3}$ represent inactive constraints  and therefore, they can be assumed  zero.  To this effect, by applying \eqref{L55} and   setting the derivatives w.r.t $P_{S}$ and $  {P}_{R}$ to zero,  the following  useful set of equations is deduced   
  
\begin{equation}  \label{L57}
\frac{\partial\overline{E}^{C}_{T}}{\partial P_{S}}\:   + \, \, \lambda_{1}\frac{\partial\overline{\rm BER}^{C}_{D}}{\partial P_{S}} \, \, + \lambda_{4} = 0 
\end{equation}
and

\begin{equation} \label{L58}
 \frac{\partial\overline{E}^{C}_{T}}{\partial {P}_{R}}\:   + \, \, \lambda_{1}\frac{\partial\overline{\rm BER}^{C}_{D}}{\partial  {P}_{R}} \, \, + \lambda_{4} = 0. 
\end{equation}
Solving for $ \lambda_{4} $ from \eqref{L57} and substituting in \eqref{L58} yields the following relationship which depends only on one of the Lagrangian multipliers
 
\begin{equation*}
\frac{\partial\overline{E}^{C}_{T}}{\partial P_{S}}\:-\frac{\partial\overline{E}^{C}_{T}}{\partial  {P}_{R}}\:   + \, \, \lambda_{1}\left(\frac{\partial\overline{\rm BER}^{C}_{D}}{\partial P_{S}}-\frac{\partial\overline{\rm BER}^{C}_{D}}{\partial  {P}_{R}}\right) = 0
\end{equation*} 
and

\begin{equation}
\lambda_{1} = \frac{\frac{\partial\overline{E}^{C}_{T}}{\partial P_{S}}\:-\frac{\partial\overline{E}^{C}_{T}}{\partial  {P}_{R}}} {\left(\frac{\partial\overline{\rm BER}^{C}_{D}}{\partial   {P}_{R}}-\frac{\partial\overline {\rm BER}^{C}_{D}}{\partial P_{S}}\right)}. 
\end{equation} 
Based on this and using the fact that $\lambda_{1} \geq 0 $, one obtains the following necessary condition for minimizing the total average energy consumption of the cooperative transmission mode at the optimal power values

\begin{equation}\label{L59}
\frac{\partial\overline{E}^{C}_{T}(P_{S}^{\, *},P_{R}^{\, *})}{\partial P_{S}}\,\geq\, \frac{\partial\overline{E}^{C}_{T}(P_{S}^{\, *},P_{R}^{\, *})}{\partial  {P}_{R}}.  
\end{equation}
For a feasible set of optimal powers, the    $\overline{\rm BER}^{C}_{D} =   p^{*}$  and  $P_{S} +  {P}_{R} \leq P_{maxt} $ constraints must be satisfied.

Analytic  solution for the optimal powers in \eqref{L59}  is intractable   to derive in closed form. However, this can be alternatively realized with the aid of numerical optimization techniques, which can determine the optimal powers at the source and relay nodes that minimize the average total energy consumption. To this end, we employ the MATLAB optimization tool box and its function \textit{fimincon} in the respective numerical calculations for allocating the available power optimally under the given constraints. 
Thus, the derived expressions and offered results provide tools to understand, quantify and analyze how much energy can in general be saved, per successfully communicated bit in the system, if transmit power allocation and optimization beyond classical equal power allocation is pursued in the cooperative system, on one side, and how much energy can be saved against the classical non-cooperative (direct transmission) system, on the other side.  Furthermore,  the considered values in the present paper are indicative and are selected in the context of demonstrating the validity of the proposed method. Therefore, the derived optimization flow can be readily extended to  arbitrary design constraints for the total network power consumption and target destination error rate in the presence of Nakagami$-{m}$  multipath fading conditions.  

\begin{figure}[htp!]
\centering{\includegraphics[ keepaspectratio,width= 15cm]{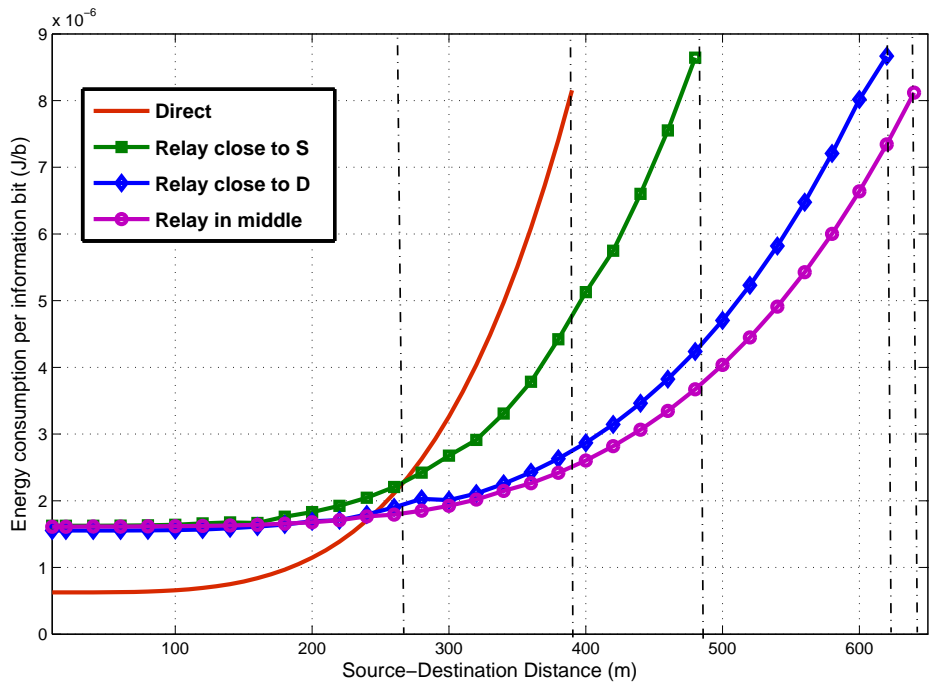} }
\caption{ Energy consumption per information bit versus source-destination distance for different relay locations over uncorrelated Nakagami${-}1.25$ at target BER of  $ 10^{-2}$ for $4{-}$QAM / QPSK constellation.} 
\end{figure}

\section{Numerical Results and Analysis}

This section demonstrates and evaluates the average total energy consumption of the considered regenerative  system  assuming that the S-D and S-R links are statistically independent whereas the S-D and R-D  paths  are spatially correlated.  As a realistic example, we assume $M{-}$QAM modulation scheme over the S-D, S-R and R-D links, in case of cooperative transmission mode, and over the S-D link in the case of only direct communication. For the sake of simplicity, it is also assumed that all wireless channels are subject to  Nakagami${-}m$ multipath fading conditions with $\Omega_{S,D} = \Omega_{S,R} = \Omega_{R,D} = 0$dB. The involved  path-loss effects are modeled by an example model of  $ P_{L_{i,j}}\, [{\rm dB}] = 148 + 40\log_{10}(d_{i,j} [{\rm km}])$, which is also used in device-to-device based communications \cite{SP}, and thus applies to mobile relaying as well.  Furthermore, in order to simplify the geometry-related calculations, we assume that all nodes are located along a straight line, which satisfies the distance  relationship  $ d_{S,D} = d_{S,R} + d_{R,D} $. However, it is recalled here that the path-loss and distance assumptions are only indicative in the context of the considered    examples, while the provided analysis and optimization frameworks are valid more generally. In this context, we further assume the following system parameters: $N_{0} = -174 \text{dBm/Hz}$; $\eta = 0.35$; $T_{tr} = 5{\rm \mu }s$; $L = 2\text{kbits}$, and  $P_{LO} = 50\text{mW}$ \cite{X:Goldsmith, Bahai}. We also use the constant circuit powers as $P_{CT_x} = 100\text{mW}$, $ P_{CR_x} = 150\text{mW}$  and the maximum transmission power $ P_{maxt} = 1000\text{mW}$. The  bandwidth of the system is assumed to be $B = 200\text{kHz}$ and the noise figure $ N_{f} = 6\text{dB} $. Due to the linearity requirement of the $M{-}$QAM signals, the value of the drain efficiency is assumed   $\eta = 0.35$, which is a practical value for class-A and AB RF power amplifiers (PA). The considered system parameters are depicted in Table  I and are used unless otherwise stated.

\begin{table}[htb!]\label{table:System}
\centering
\caption{Assumed System Parameters}
\begin{tabular}{|c|c|}
\hline\hline
$ N_{0} = \text{-174dBm/Hz}  $ & $  N_{f} = \text{6dB} $ \\
\hline
$ B = \text{200kHz}$ & $L = \text{2kbits}$ \\
\hline
$ P_{CT_x} = \text{100mW}  $ & $ P_{CR_x} = \text{150mW} $ \\
\hline
$ P_{LO}= \text{50mW} $& $ P_{maxt} = \text{1000mW} $ \\
\hline
$ \eta = 0.35 $ &$ T_{tr} = 5\mu s.$ \\
\hline
\end{tabular}
\end{table}
\indent 
 We commence by analyzing the minimum energy per information bit required  for the direct and cooperative transmissions when the relay node is taken into account and placed in different locations. The location of the relay node is  represented with  parameter $ f = d_{S,R}/d_{S,D} $. Fig. 5   illustrates the total energy consumption per information bit as a function of the transmission distance from source to destination  for $4-$QAM/QPSK with fading parameter of  $m =1.25$, destination  target BER  of $ 10^{-2} $ and zero spatial correlation under the maximum transmit power constraint. The transmit power allocation is carried out by the derived OPA scheme resulting to the indicative values  in Table II. It is observed that   distance thresholds separate the regions where DT performs better than CT and vice-versa. Furthermore, it is shown that when the relay is located in the middle, i.e. the source-relay distance equals the relay-destination distance ($ f = 0.5 $), renders the best energy efficiency among all relay locations.  This indicates that the configuration is almost symmetric in the source-relay and relay-destination distances, which assists the system to operate robustly in transmission over severe fading   conditions. However,  it is shown that at relatively small distances (here  $ 0 \leq d_{S,D}\leq 170 m $), the exact location of the relay does not affect substantially the performance of the cooperative system as it appears to remain almost the same in all considered scenarios. This renders the relay positioning and planning rather simple when the relay falls within this range, while it additionally provides insight e.g. for relay selection algorithms in the case of randomly distributed relays in emerging relay-based wireless networks.
\begin{figure}[tp!]
\centering{\includegraphics[keepaspectratio,width= 15cm]{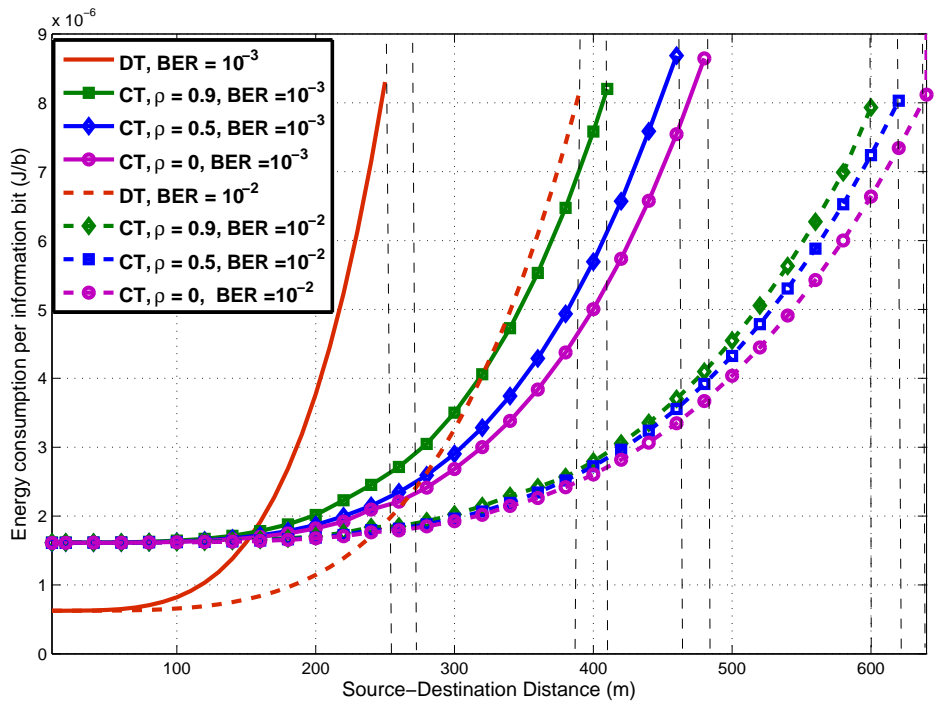} }
\caption{Energy consumption per information bit versus source-destination distance when the relay is located in the middle over spatially correlated Nakagami${-}1.25$ for $4{-}$QAM/QPSK constellations with different target BERs.}
\end{figure}

\begin{footnotesize}
\begin{table}[t]
\caption{ Optimal Transmit Power Values for Source and Relay for  Different Relay Locations over Uncorrelated  Nakagami${-}1.25$ Links at target BER of ${10^{-2}}$ for $4{-}$QAM / QPSK Modulation.}
\centering
\begin{tabular}{|c|c|c|c|c|c|c|}
\hline
\hline 
   &\multicolumn{2}{c|}{Relay close to $ S \;(f = 0.1)$} & \multicolumn{2}{c|} { Relay in middle \;$(f = 0.5 )$ } & \multicolumn{2}{c|} {Relay close to $D \;(f = 0.9)$} \\ 
\hline 
D(m) & $P_{S}(\text{W})$ & $ P_{R}(\text{W})$ & $ P_{S} (\text{W})$ & $ P_{R}(\text{W}) $ & $ P_{S} (\text{W})$ & $ P_{R} (\text{W})$ \\ 
\hline 
100 & 0.00033 & 0.0006  &0.000312 &0.000197& 0.000651 & 0.00049 \\ 
\hline 
200 & 0.016& 0.0127 & 0.0057 & 0.0033 & 0.0104 &0.0101 \\ 
\hline 
300 & 0.0877 & 0.0595 & 0.0289& 0.0223& 0.0527 & 0.0525 \\ 
\hline 
400 & 0.3366  &0.1537 & 0.0706 & 0.0703 & 0.1666 & 0.0198\\ 
\hline 
480 & 0.6225 & 0.3601 & 0.1466 & 0.1455 & 0.3455 & 0.0342 \\ 
\hline 
620 & - & -  & 0.4107  &0.402 & 0.9627& 0.0373 \\ 
\hline 
650 & -&- &0.4970 &0.483 & - & - \\
\hline
\end{tabular} 
\end{table}
\end{footnotesize}

\begin{figure}[tbp!]
\centering{\includegraphics[keepaspectratio,width= 15cm]{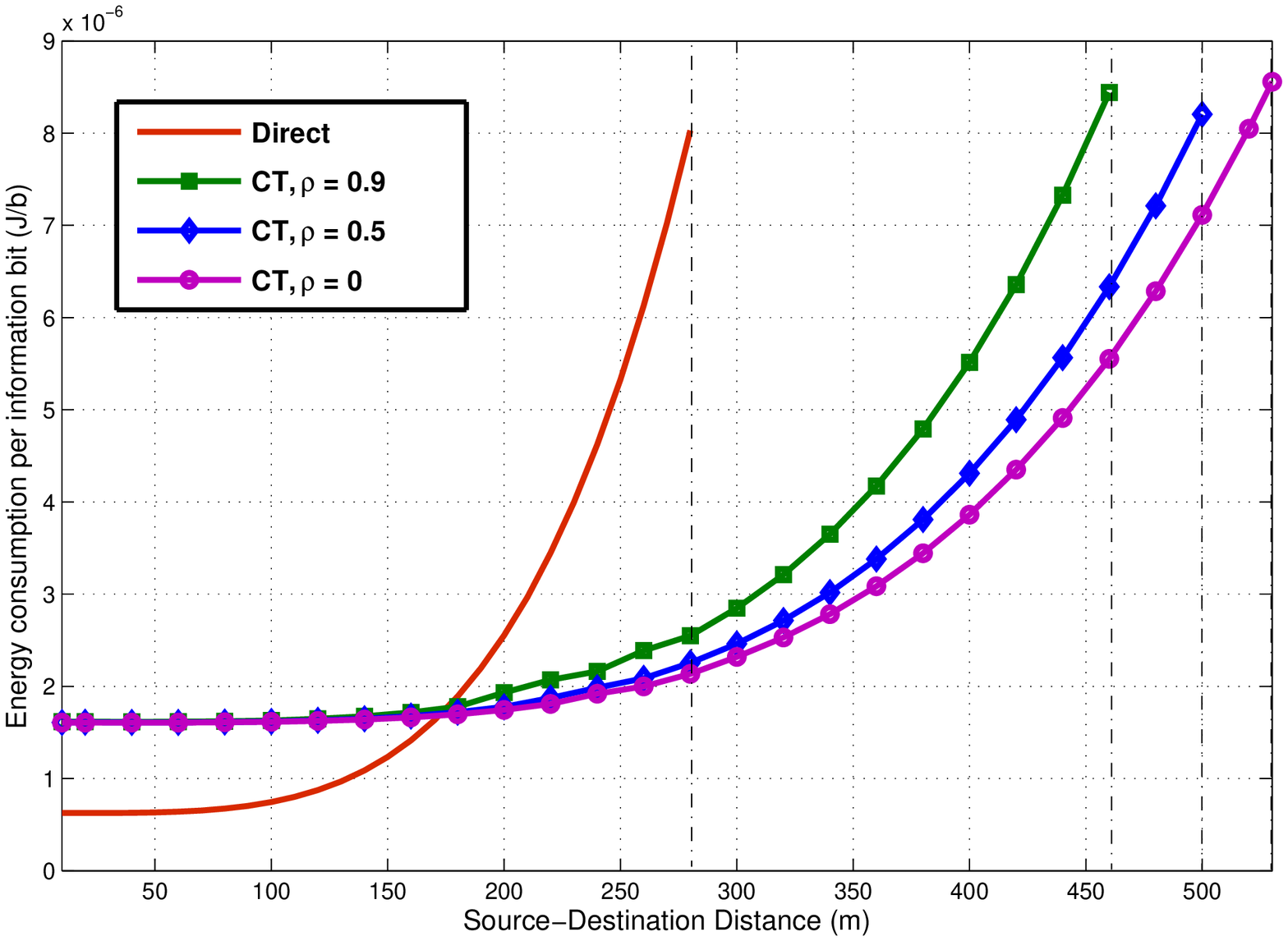} }
\caption{Energy consumption per information bit versus source-destination distance when the relay is  located in the middle over spatially  correlated Nakagami${-}0.75$ fading channels at target BER of  $ 10^{-2}$ for $4{-}$QAM/QPSK constellation and different spatial correlation values.} 
\end{figure}
The corresponding energy efficiency is also analyzed for  target BERs of $ 10^{-2} $ and  $ 10^{-3} $. Fig. 6                     
illustrates the average total energy consumption per information bit for $4-$QAM${/}$QPSK in both direct and cooperative transmissions in  Nakagami${-} 1.25$  fading conditions under the  maximum transmit power constraint and the following spatial correlation scenarios: $ \rho = \{0, 0.5, 0.9\}$. Also, the relay node is located in the middle and the transmit power is allocated optimally to the source and relay nodes in all cases. It is observed that for the fixed target BERs of $ 10^{-2} $ and $ 10^{-3}$, the direct scheme outperforms the cooperative transmission only at average source-destination distances below 240m and 150m, respectively. On the contrary, for average distances greater than  240m and 150m,  CT becomes more energy efficient as the transmit power constitutes a significant share of the average total energy consumption even under the worst spatial correlation scenario. Furthermore, it is shown that for the given target BERs the DT schemes attain maximum transmission distances of 390m and 250m, respectively,  under the given maximum transmission power constraint while in both cases  the cooperative transmission schemes extend to substantially longer distances. However, these advantages vary according to the level of the involved spatial correlation where the improvement in energy efficiency is inversely proportional to $\rho$,  in both scenarios. The reason    is that for every step of transmission distance,  greater proportion of power is assigned to the  source and relay nodes in order to overcome performance losses incurred by the spatial correlation, within the given resource constraints.  Concrete examples are shown in Table III and Table IV  for some indicative transmission distances and the two target BERs; there, the energy savings using CT scheme, defined as $ 1- \overline{E}^{C}/\overline{E}^{D} $ for $\rho = 0.9$, $\rho = 0.5$ and $\rho = 0$ for the target  BER of $ 10^{-2} $ at $d_{S,D} = 390$m are $ 65\% $, $ 67\%$ and $ 69.3\% $, respectively,  whereas for a target BER of $ 10^{-3} $ at $d_{S,D} = 250$m the energy reduction is $69\% ,72 \% \;\text{and}\; 73.8\%$, respectively. Interestingly, beyond a critical distance of 320m, even the highly spatially correlated CT mode at target BER of $10^{-3}$ exhibits better energy-efficiency than the DT scheme with target BER of  $10^{-2}$.

\begin{footnotesize}
\begin{table}[tbp!]
\caption{Optimal Transmit Power Values for Source and Relay for relay located in the middle over correlated Nakagami${-}1.25$ Links with target BER of ${10^{-2}}$ Using $4{-}$QAM ${/}$ QPSK Modulation.}
\centering
\begin{tabular}{|c|c|c|c|c|c|c|}
\hline
\hline 
Corr.  & \multicolumn{2}{c|}{$\rho = 0$} & \multicolumn{2}{c|} {$\rho = 0.5$} & \multicolumn{2}{c|} {$\rho = 0.9$ } \\ 
\hline 
D(m) & $P_{S} (\text{W})$ & $P_{R} (\text{W})$ & $P_{S} (\text{W}) $ &$ P_{R}(\text{W})$ & $P_{S} (\text{W}) $& $P_{R} (\text{W})$ \\ 
\hline 
100 & 0.000372 & 0.000197   & 0.000424   & 0.000233  & 0.000514  & 0.000296  \\ 
\hline 
200 & 0.0057 & 0.0033 & 0.0064 & 0.0039 & 0.0077& 0.0049 \\ 
\hline 
300 & 0.0225 & 0.0223 & 0.0249 & 0.0249 & 0.0290 & 0.0290 \\ 
\hline 
400 & 0.0706 & 0.0703 & 0.0788 & 0.0785 & 0.0918 & 0.0915 \\ 
\hline 
500 & 0.1728 & 0.1712 & 0.1929 & 0.1912 & 0.2249  & 0.2228 \\
\hline 
600 &0.3598 &0.3530 & 0.4021 & 0.3944 & 0.4692  & 0.4602 \\
\hline
630 & 0.4381 & 0.4281 & 0.4897 &0.4785 &-&- \\
\hline 
650 & 0.4970 & 0.4843 & - & - & - & - \\ 
\hline 
\end{tabular} 
\end{table}
\end{footnotesize}

In the same context, Fig. 7 illustrates the average total energy per information bit for both direct and cooperative transmissions for  $ m = 0.75 $, which corresponds to  severe fading conditions.  The target BER is set to $ 10^{-2} $ under the given transmit power constraint while $\rho = \{0, 0.5, 0.9\}$. The transmit power is again allocated optimally to the source and relay with the latter  positioned in the center of the network as shown in Table V.  It is shown  that DT outperforms CT only when $ d_{S,D} \leq 170m $. However, as the distance increases beyond this point, the corresponding overall  benefits by CT are significant even under the worst spatial correlation scenario.  Indicatively,  at a transmission distance of 280m,  which is the maximum distance that DT can operate with the available maximum transmission power, the energy gains by  CT   are    $ 68\% $, $ 72\% $  and  $ 74\%$, respectively.  In addition, it is shown that the advantage of the cooperation is more significant in severe fading conditions.

\begin{footnotesize}
\begin{table}[t!]
\caption{Optimal Transmit Power Values for Source when the  Relay is  Located in the Middle over Correlated  Nakagami${-} 1.25$ Links with Target BER of ${10^{-3}}$ Using $4{-}$QAM ${/}$ QPSK Constellations.}
\centering
\begin{tabular}{|c|c|c|c|c|c|c|}
\hline
\hline 
Corr.  & \multicolumn{2}{c|}{$\rho = 0$} & \multicolumn{2}{c|} {$\rho = 0.5$} & \multicolumn{2}{c|} {$\rho = 0.9$} \\ 
\hline 
D(m) & $P_{S} (\text{W})$ & $P_{R} (\text{W})$ & $ P_{S} (\text{W})$ & $ P_{R}(\text{W}) $ & $ P_{S} (\text{W})$ & $P_{R} (\text{W})$ \\ 
\hline 
100 & 0.00012  & 0.0007   & 0.00014   & 0.0009  & 0.00022  & 0.00014   \\ 
\hline 
200 & 0.0179 & 0.0116  & 0.0218 & 0.0147 & 0.0331 & 0.0233 \\ 
\hline 
300 &0.0762  & 0.0762 & 0.0914 &0.0914  & 0.1332 & 0.1332 \\ 
\hline 
410 & 0.2662 & 0.2651  & 0.3196 & 0.3183 & 0.4660 & 0.4641 \\ 
\hline 
440 & 0.3533 & 0.3513 & 0.4242 & 0.4218 & - & - \\
\hline 
480&0.5009 & 0.4969 &-&-&-&  -\\
\hline 
\end{tabular} 
\end{table}
\end{footnotesize}

\indent 
\begin{footnotesize}
\begin{table}[t!]
\caption{Optimal Transmit Power Values for Source and Relay when the Relay is Located in the Middle over Spatially Correlated  Nakagami${-}0.75$ Links with target BER of ${10^{-2}}$ Using $4{-}$QAM ${/}$ QPSK Modulations.}
\centering
\begin{tabular}{|c|c|c|c|c|c|c|}
\hline
\hline 
Corr.  & \multicolumn{2}{c|}{$\rho = 0$} & \multicolumn{2}{c|} {$\rho = 0.5$} & \multicolumn{2}{c|} {$\rho = 0.9 $} \\ 
\hline 
D(m) & $P_{S} (\text{W})$ & $P_{R} (\text{W})$ & $P_{S} (\text{W})$ & $P_{R}(\text{W})$ & $P_{S} (\text{W})$ &$ P_{R} (\text{W})$ \\ 
\hline 
100 & 0.000754  & 0.0004635   & 0.0009112   & 0.0005811  & 0.00013  & 0.0009   \\ 
\hline 
200 & 0.0119 &0.0076  & 0.0143 & 0.0093  &0.0386 & 0.0097  \\ 
\hline 
300 & 0.0509 &0.0509 & 0.0608 & 0.0608&0.0873  & 0.0873  \\ 
\hline
400 &0.1612 & 0.1606 & 0.1926 &0.1918  & 0.2765 & 0.2753  \\ 
\hline
460 & 0.2823 & 0.2802  & 0.3373  & 0.3348 & 0.4845 & 0.4808 \\
\hline 
500 & 0.3945 & 0.3905 &  0.4714 &0.4666 &-&  -\\
\hline 
530 & 0.4985 & 0.4921 &  - & -  &-&  -\\
\hline
\end{tabular} 
 \end{table} 
 \end{footnotesize}
\begin{figure}[tbp!]
\centering{\includegraphics[keepaspectratio,width= 15cm]{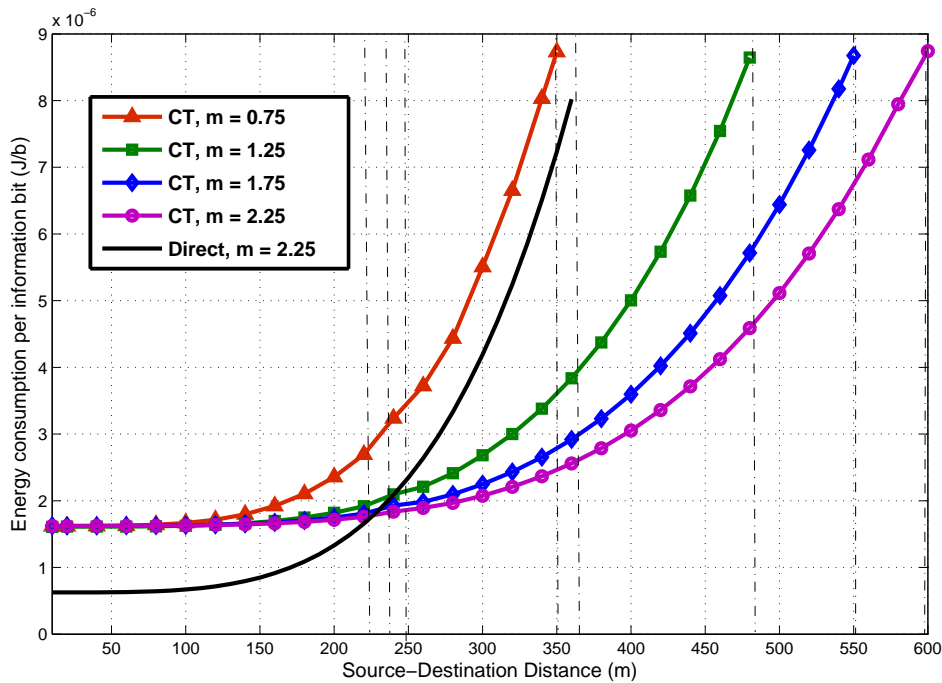} }
\caption{Energy consumption per information bit versus source-destination distance when the relay is located in the middle over uncorrelated Nakagami${-}m$ fading conditions at target BER of  $ 10^{-3}$ for $4{-}$QAM ${/}$ QPSK constellation.} 
\end{figure}

\indent 
Fig. 8 illustrates the average total energy consumption per information bit required for CT and DT  as a function of S-D distance for fading parameters, $ m = \{0.75, 1.25,1.75,  2.25\}$ for CT  and $ m = 2.25$  for  DT. The target BER is set to  $10^{-3}$, the transmit power is allocated optimally, the relay is located in the middle and a zero spatial correlation is assumed. It is observed that the critical distances below which DT  outperforms the CT in terms of energy efficiency are  250m, 230m and 220m for $ m = 1.25, m = 1.75, \text{and} \; m = 2.25 $  , respectively. Moreover, the analysis indicates  that DT with non-severe multipath fading condition (Nakagami${-2.25}$) can operate  only up to 360m before utilizing the maximum transmit power, while the CT  extends significantly even for moderate fading conditions, except for the worst case scenario ($ m = 0.75 $). It is also shown that the gain from the cooperation is not uniform as the Nakagami parameter increases from $ m = 0.75 $ to $ m = 1.25 $,  from $ m = 1.25 $ to $ m = 1.75 $ and then from $ m = 1.75  $  to $ m = 2.25$.

Finally, Fig. 9 depicts the cooperation gain, defined in \eqref{ss}, when the relay is located in the middle of the source and destination. The power allocation is again carried out by using the derived OPA scheme for target BER of $ 10^{-2}$ under the maximum transmission power constraint with $\rho = \{0, 0.5, 0.9 \} $ for $ m = 1.25 $. The transmission distance is limited to 390m since beyond this limit it is only the CT mode that can transmit    until its maximum transmission distance, depending on  the spatial correlation between S-D and R-D paths.  When the cooperation gain is below unity, the DT is actually more energy efficient than CT. As already mentioned, the reason behind this  is that when $CG\leq1$, which corresponds to relatively small transmission distances, the actual transmit power  constitutes only a small fraction of the total average power consumption. However,  when $ CG >  1 $, the system benefits significantly from   cooperation and in  general CG increases proportionally as the transmission distance increases for all scenarios of spatial correlation between the S-D and R-D paths. Interestingly, the existence of such efficiency threshold distance also implies that a hybrid system, where cooperation is only sought and deployed beyond certain minimum distance, can provide the most comprehensive solution to the energy-efficiency optimization. The analysis and modeling results and tools provided in this article form directly the basis for further development of such schemes in different communication scenarios, which forms an important topic of future work.

\begin{figure}[!btp!]
\centering{\includegraphics[keepaspectratio,width= 15cm]{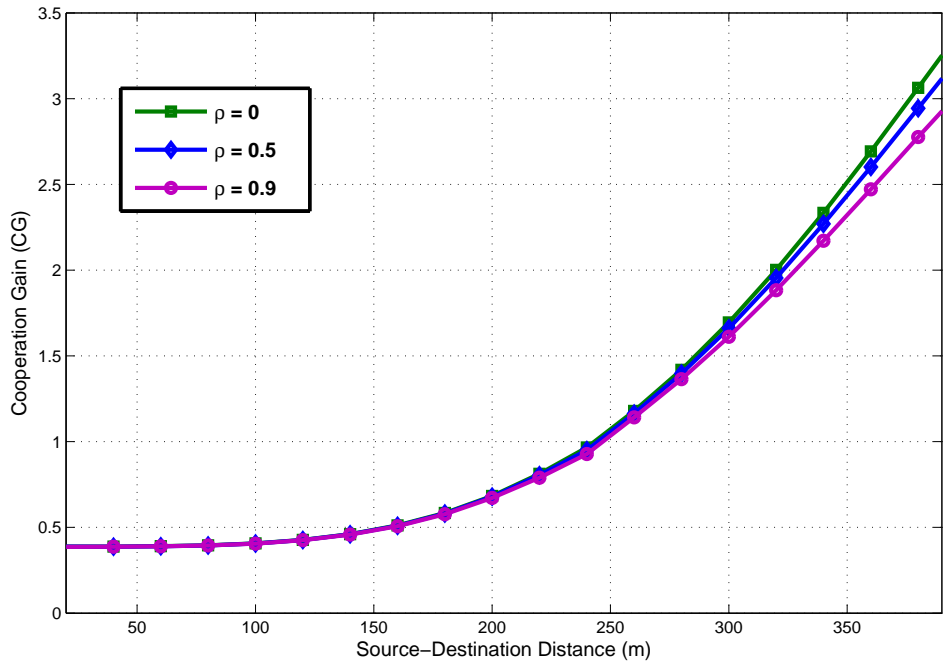} }
\caption{Cooperation gain versus source-destination distance when the relay is located in the middle over spatially correlated Nakagami$ {-}1.25$ fading environment at target BER of  $ 10^{-2}$ for $4{-}$QAM ${/}$ QPSK  constellations. After source-destination distance of 390m, direct transmission runs out of power and cannot anymore reach the target BER.}
\end{figure}

\section{Conclusions}

This work was devoted to the end-to-end SER analysis as well as  the energy efficiency analysis and optimization of both direct and regenerative  cooperative transmissions over Nakagami${- m}$ fading conditions in the presence of spatial correlation.  Novel  closed-form expressions were firstly derived for the symbol-error rate of both $M-$QAM and $M-$PSK constellations which were  subsequently employed in formulating the constrained energy analysis and optimization problems under destination bit-error-rate target  and  maximum transmit power constraints considering both transmit energy as well as the energy consumed by the transceiver circuits. The corresponding   results indicate that depending on the severity of multipath fading, spatial correlation between the source-destination and relay-destination paths and the location of the relay node, the direct transmission can be  more energy efficient than cooperative transmission but only for rather short transmission distances and up to  a certain threshold value. Beyond this value, the system, as expected,  benefits substantially from relaying and the corresponding  cooperation gain increases proportionally to the  transmission  distance. It is expected that the offered results can  be useful in the design, dimensioning  and deployment of low-cost and energy efficient cooperative communication systems in the future, especially towards the green communications era where the requirements and incentives towards energy consumption optimization are considered critical.

\section*{Appendix }

\subsection{Proof of Theorem $1$}

The average SER for the direct transmission can be expressed as,

 \begin{equation} \label{L9}
\overline{\rm SER}^{D}_{D}  = F_{QAM} \left[ \left( 1 + \frac{(P_{S}/\: P_{L_{S,D}})\: \Omega_{S,D}\: g_{QAM}}{ N_{0}\: m_{S,D} \,{\sin}^{2}(\theta)}\right)^{-m_{S,D}} \right].   
\end{equation}
Evidently, a closed-form expression for   \eqref{L9} is subject to analytic evaluation of the following  integrals

\begin{equation} \label{L10}
\mathcal{I} \left(a, m; 0, \frac{\pi}{2} \right) = \int_{0}^{\pi {/}2} \frac{1}{ \left( 1 + \frac{a}{{\sin}^{2}(\theta)}\right)^{m} } d \theta 
\end{equation}
and

\begin{equation} \label{L11}
\mathcal{I} \left(a, m; 0, \frac{\pi}{4} \right) = \int_{0}^{\pi {/}4} \frac{1}{ \left( 1 + \frac{a}{{\sin}^{2}(\theta)}\right)^{m} } {\rm d} \theta.  
\end{equation}
By re-writing the indefinite form of the above class of integrals as 

\begin{equation}
\mathcal{I}(a,m) =   \int  \frac{\sin^{2m}(\theta)}{ \left( \sin^{2}(\theta) +  a \right)^{m} } {\rm  d} \theta
\end{equation}
and setting $u = \cos^{2}(\theta)$, one obtains

\begin{equation}
\mathcal{I}(a,m) = - \int \frac{(1-u)^{m - \frac{1}{2}}}{2 \sqrt{u} (1 - u + a)^{m} } {\rm d}u. 
\end{equation}
The above integral can be expressed in closed-form in terms of the Appell hypergeometric function of the first kind, namely

\begin{equation} \label{trigonometric_a}
\mathcal{I}(a,m) =    - \frac{\cos (\theta)}{(1 + a)^m} F_{1}\left( \frac{1}{2}, \frac{1}{2} - m, m, \frac{3}{2}, \cos^{2}(\theta), \frac{\cos^{2}(\theta)}{1+a} \right). 
\end{equation}
Equation \eqref{trigonometric_a} reduces to zero when $\theta = \pi {/}2$. To this effect, it immediately follows that  

\begin{equation} \label{trigonometric_b}
\mathcal{I}\left(a, m, 0, \frac{\pi}{2}\right) = \frac{1}{(1+a)^{m}} F_{1}\left( \frac{1}{2}, \frac{1}{2} - m, m, \frac{3}{2}, 1, \frac{1}{1+a} \right)
\end{equation}
which with the aid of the properties of $F_{1}(.)$ function can be equivalently expressed as 

\begin{equation} \label{trigonometric_c}
\mathcal{I}\left(a, m, 0, \frac{\pi}{2}\right) =  \frac{\sqrt{\pi} \Gamma\left(m + \frac{1}{2} \right)}{2(1+a)^{m} \Gamma(m + 1)} \,_{2}F_{1}\left( \frac{1}{2}, m, m + 1, \frac{1}{1+a} \right).  
\end{equation}
In the same context, 

\begin{equation} \label{trigonometric_d}
\mathcal{I}\left(a, m, 0, \frac{\pi}{4}\right) =  \frac{1}{(1+a)^{2}} \left\lbrace F_{1}\left(\frac{1}{2}, \frac{1}{2} - m, m, \frac{3}{2}, 1, \frac{1}{1+a} \right) - \frac{1}{\sqrt{2}}  F_{1}\left(\frac{1}{2}, \frac{1}{2} - m, m, \frac{3}{2}, \frac{1}{2}, \frac{1}{2(1+a)} \right)  \right\rbrace
\end{equation}
which can be alternatively expressed as follows: 

\begin{equation} \label{trigonometric_e}
\mathcal{I}\left(a, m, 0, \frac{\pi}{4}\right) =    \frac{\sqrt{\pi} \Gamma\left(m + \frac{1}{2} \right) \, _{2}F_{1}\left( \frac{1}{2}, m,  m+1, \frac{1}{1+a} \right)}{2(1+a)^{m} \Gamma(m+1)}  - \frac{F_{1}\left(\frac{1}{2}, \frac{1}{2} - m, m, \frac{3}{2}, \frac{1}{2}, \frac{1}{2(1+a)} \right)}{\sqrt{2}(1+a)^{m}}.  
\end{equation}
By performing the necessary change of variables in \eqref{trigonometric_c} and \eqref{trigonometric_e} and substituting  in \eqref{L9} yields \eqref{D_new}.

\subsection{Proof of Lemma 1}
The $\mathcal{J}(a, b, m)$ integral can be   re-written as 

\begin{equation}
\mathcal{J}(a, b, m) =\int  \frac{\sin^{4m}(\theta)}{(\sin^{4}(\theta) + a\sin^{2}(\theta) + b)^{m}} {\rm d}\theta. 
\end{equation}
By setting $u = \cos^{2}(\theta) $ it follows that

\begin{equation}
\mathcal{J}(a, b, m) = - \frac{1}{2}  \int  \frac{(1-u)^{2m}}{\sqrt{u} \sqrt{1 - u} \, [(1 - u)^{2} + a(1-u) + b]^{m}}  {\rm d} u. 
\end{equation}
By applying  the binomial theorem in \cite[eq. (1.111)]{B:Tables} one obtains

\begin{equation}
\mathcal{J}(a, b, m) = - \sum_{l = 0}^{2m - \frac{1}{2}} \binom{2m - \frac{1}{2}}{l} \frac{(-1)^{l}}{2} \int \frac{u^{l - \frac{1}{2}}}{[1 - u(a+2) + u^{2} + a + b]^{m}} {\rm d}u. 
\end{equation}
The above integral can be expressed in terms of the Appell function of the first kind yielding 

\begin{equation}
\begin{split}
\mathcal{J}(a, b, m) &= - \sum_{l = 0}^{2m - \frac{1}{2}} \binom{2m - \frac{1}{2}}{l} \frac{(-1)^{l}  u^{l + \frac{1}{2}} \, F_{1}\left(l + \frac{1}{2}, m, m, l + \frac{3}{2}, \frac{2u}{2+a-\sqrt{a^{2} - 4b}}, \frac{2u}{2+a + \sqrt{a^{2} - 4b}} \right) }{(1+2l) (a + b + (u - 1)^{2} - au)^{m}}   \\
& \qquad  \times \left(1 - \frac{2u}{2 + a - \sqrt{a^{2} - 4b}} \right)^{m}  \left(1 - \frac{2u}{2 + a - \sqrt{a^{2} - 4b}} \right)^{m} 
\end{split}
\end{equation}
which upon performing the necessary counter-substitution and  algebraic manipulations yields  \eqref{Lemma_1}. 

\subsection{Proof of Lemma 2}

The $\mathcal{K}(a, b, m, n)$ integral can be alternatively re-written as

\begin{equation}
\mathcal{K}(a, b, m, n) = \int \frac{\sin^{2m + 2n}(\theta)}{(\sin^{2}(\theta) + a)^{m}  \, (\sin^{2}(\theta) + b)^{n}}  {\rm d}\theta
\end{equation}
which by  setting $u = \cos^{2}(\theta)$ can be expressed as 

\begin{equation}
\mathcal{K}(a, b, m, n) = \int \frac{(1-u)^{m + n - \frac{1}{2}}}{2(1 - u + a)^{m} (1 - u + b)^{n} \sqrt{u}} {\rm d}u. 
\end{equation}
By applying the binomial theorem in \cite[eq. (1. 111)]{B:Tables}, it immediately follows that 

\begin{equation}
\mathcal{K}(a, b, m, n) = \sum_{l = 0}^{m + n - \frac{1}{2}} \binom{m + n - \frac{1}{2}}{l} \frac{(-1)^{l}}{2} \int \frac{u^{l - \frac{1}{2}}}{(1 - u 
+ a)^{m} (1 - u + b)^{n}} {\rm d}u.  
\end{equation}
The above integral can be expressed in closed-form in terms of the Appell function of the first kind. As a result, by making the necessary counter-substitution and   performing some long but basic algebraic manipulations, equation \eqref{new_9} is deduced.  

\subsection{Proof of Lemma 3}

By setting $u = \sin^{2}(\theta)$, it follows that 

\begin{equation}
\int \sin^{2m}(\theta) {\rm d}\theta = \int \frac{u^{m - \frac{1}{2}}}{2 \sqrt{1-u}} {\rm d}u. 
\end{equation}
The above integral can be expressed in closed-form in terms of the Gaussian hypergeometric function, namely, 

\begin{equation}
\int \frac{u^{m - \frac{1}{2}}}{2 \sqrt{1-u}} {\rm d}u = -\sqrt{1-u}\, _{2}F_{1}\left( \frac{1}{2},\frac{1}{2}  - m, \frac{3}{2}, 1 + u \right). 
\end{equation}
Therefore, by performing the counter-substitution, equation \eqref{sin_integral} is deduced.

\subsection{Proof of Convexity of the Optimization Problem }

Below we prove the existence of optimal powers, which are subsequently employed in minimizing  the overall energy consumption in the considered cooperative communication system. Based on \eqref{L41},  the average total power consumption can be re-written as  follows:

\begin{equation} \label{51}
\overline{P}^{C}_{T}  = \left( C_{1} + (1 + \alpha)P_{S}\right)           
+ \left(C_{2}+ (1 + \alpha) {P}_{R}\right) \left(1 - \overline{\rm SER}_{S,R} \right) 
\end{equation} 
where $ C_{1} = P_{CT_x} + 2P_{CR_x} \text{and}\;  C_{2} = P_{CT_x} + P_{CR_x}. $ The symbol-error-rate ${  \overline{\rm SER}_{S,R}}$ can be  expressed in closed-form with the aid of Theorem 1. This expression is a function of $1 + P_{S}\: \: \Omega_{S,R}\: g_{QAM}/P_{L_{S,R}}N_{0}\: m_{S,R}$; therefore, for proving the existence of the optimum values, it is sufficient to show that $ \partial^{2}\overline{P}^{C}_{T}/\partial^{2} {P}^{2}_{R} \geq 0 $ and $ \partial^{2}\overline{P}^{C}_{T}/\partial^{2}P^{2}_{S} \geq \:0 $. To this end, it is straightforward to  show that $ \partial^{2}\overline{P}^{C}_{T}/\partial^{2} {P}^{2}_{R} = 0 $. Likewise, based on the   optimal condition in \eqref{L59} and taking the second-order partial derivative w.r.t the variables $P_{S}$ and $ {P}_{R}$, one obtains  

\begin{equation}\label{52}
\frac{\partial^{2}\overline{P}^{C}_{T}}{\partial^{2}P^{2}_{S}} \geq \frac{\partial^{2}\overline{P}^{C}_{T}}{\partial^{2}P_{S} {P}_{R}}
\end{equation}
where 

\begin{equation}\label{52}
\frac{\partial^{2}\overline{P}^{C}_{T}}{\partial^{}P_{S}P_{R}} =  \frac{(1+\alpha_1 ) m_{S, R}\Omega_{S,R}g_{QAM}}{(1+a_1)N_{0}P_{L_{S,R}}}K_{4} 
\end{equation} 
 and

\begin{equation}
K_{4} = \frac{4 C}{\pi} \mathcal{I}  \left(\frac{P_{S}  \Omega_{S,R}\: g_{QAM}}{ N_{0}  P_{L_{S,R}} \: m_{S,R} }, m_{S, R}; \frac{\pi}{2} \right)  - \frac{4C^{2}}{\pi}  \mathcal{I}  \left(\frac{P_{S} \Omega_{S,R}\: g_{QAM}}{ N_{0}\: P_{L_{S,R}} m_{S,R}  }, m_{S, R}; \frac{\pi}{4} \right)  
\end{equation}
where it is recalled that $C = 1 - 1/\sqrt{M}$ for the case of $M-$QAM whereas $K_{4}$ denotes  the SER representation with values in the range $0\leq {\rm SER} \leq 1$ and  with all other constants being positive. Based on this,  the second-order partial derivatives w.r.t $P_{S}$ and $ {P}_{R}$ are always greater  than or equal to zero, which implies that   $\partial^{2}\overline{P}^{C}_{T}/\partial^{2}P^{2}_{S} \geq 0$. Given the general second order conditions in \cite{body}, it immediately follows that \eqref{51} is convex w.r.t to $P_{S}$ and ${P}_{R}$ and possesses a unique minimum value.

\bibliographystyle{IEEEtran}
\thebibliography{32}


\bibitem{I} 
I.F. Akyildiz, W. Su, Y. Sankarasubramaniam, and E. Cayirci, ``A survey on sensor networks," 
\emph{IEEE Commun. Mag.}, vol. 40, no. 8, pp. 102${-}$114, Aug.  2002.

\bibitem {GE}
G. U. Li, S. Xu, A. Swami, N. Himayat, and G. Fettweis, ``Guest-editorial  energy-efficient wireless communications," \emph{IEEE J. Sel. Areas Commun.}, vol. 29, no. 8, pp. 1505${-}$1507, Sep. 2011.

\bibitem {SE}
D. Feng, C. Jiang, G. Lim, L. J. Cimini, G. Feng, and G.Y. Li, ``A survey of energy-efficient wireless communications,"

\emph {IEEE Commun. Surveys and Tuts.}, vol. 15, no. 1, pp. 167${-}$178, 1st Quart. 2013. 

\bibitem {SEE}
T. Ma, M. Hempel, D. Peng, and H. Sharif,
 ``A survey of energy-efficient compression and communication techniques for multimedia in resource constrained systems,"
 \emph{IEEE Commun. Sur. and Tuts.}, vol. 15, no. 3, pp. 963${-}$972, $3^{\rm rd}$ Quart., 2013.

 \bibitem{Final_6} 
 P. C. Sofotasios, T. A. Tsiftsis, Yu. A. Brychkov, S. Freear, M. Valkama, and G. K. Karagiannidis, 
 ``Analytic Expressions and Bounds for Special Functions and Applications in Communication Theory,"
 \emph{ IEEE Trans. Inf. Theory}, vol. 60, no. 12, pp. 7798$-$7823, Dec. 2014.

\bibitem{Final_1}
G. K. Karagiannidis,
``On the symbol error probability of general order rectangular QAM in Nakagami$-m$ fading,''
 \emph{IEEE Commun. Lett.}, vol. 10, no. 11, pp. 745${-}$747, Oct. 2006.
 
 \bibitem{Final_7} 
K. Ho-Van, P. C. Sofotasios, 
``Outage Behaviour of Cooperative Underlay Cognitive Networks with Inaccurate Channel Estimation,"
\emph{ in Proc. IEEE ICUFN '13}, pp. 501${-}$505, Da Nang, Vietnam, July 2013.

\bibitem{C} 
Y-W. Hong, W-J. Huang, F-H. Chiu, and C-C. J. Kuo, ``Cooperative communications in resource-constrained wireless networks,"
 \emph{IEEE Signal Process. Mag.}, vol. 24, no. 3, pp. 47${-}$57, May 2007.
 
 \bibitem{Final_8} 
 K. Ho-Van, P. C. Sofotasios, 
 ``Bit Error Rate of Underlay Multi-hop Cognitive Networks in the Presence of Multipath Fading,"
 \emph{ in IEEE ICUFN '13}, pp. 620${-}$624, Da Nang, Vietnam, July 2013. 
 
\bibitem{Final_2}
D. A. Zogas, G. K. Karagiannidis, and S. A. Kotsopoulos,
``Equal gain combining over Nakagami$-n$ (Rice) and Nakagami$-q$ (Hoyt) generalized fading channels,'' 
\emph{IEEE Trans. Wireless Commun.}, vol. 4, no. 2, pp. 374$-$379, Apr. 2005. 

  \bibitem{Final_9} 
K. Ho-Van, P. C. Sofotasios, 
``Exact BER Analysis of Underlay Decode-and-Forward Multi-hop Cognitive Networks with Estimation Errors,"
\emph{IET Communications}, vol. 7, no. 18, pp. 2122${-}$2132, Dec. 2013.

\bibitem{S} 
A. Sendonaris, E. Erkip, and B. Aazhang, ``User cooperation diversity part I: System description,"
 \emph{IEEE Trans. Commun.}, vol. 51, no. 11, pp. 1927${-}$1938, Nov. 2003.

\bibitem{Final_10} 
K. Ho-Van, P. C. Sofotasios, S. Freear, 
``Underlay Cooperative Cognitive Networks, with Imperfect Nakagami${-}m$ Fading Channel Information and Strict Transmit Power Constraint,"
\emph{IEEE KICS Journal of Communications and Networks}, vol. 16. no. 1, pp. 10${-}$17, Feb. 2014. 
 
\bibitem{Final_3}
D. S. Michalopoulos, and G. K. Karagiannidis,
``Distributed switch and stay combining (DSSC) with a single decode and forward relay,''
\emph{IEEE Commun. Lett.}, vol. 11, no. 5, pp. 408$-$410, Nov. 2007.

\bibitem{Final_11}
M. K. Fikadu, P. C. Sofotasios, M. Valkama, and Q. Cui, 
``Analytic performance evaluation of $M-$QAM based decode-and-forward relay networks over enriched multipath fading channels,'' 
\emph{in IEEE WiMob '14}, Larnaca, Cyprus,  Oct. 2014, pp. 194$-$199.

\bibitem{Final_4}
D. S Michalopoulos, G. K Karagiannidis, T. A. Tsiftsis, and R. K. Mallik,
``Wlc41-1: An optimized user selection method for cooperative diversity systems,''
\emph{IEEE GLOBECOM `06}, San Fransisco, CA, USA, pp. $1-6$.

\bibitem{J} 
J. N. Laneman, D. N. C. Tse, and G. W. Wornell, 
``Cooperative diversity in wireless networks: Efficient protocols and outage behavior,"
\emph{IEEE Trans. Inf. Theory}, vol. 50, no. 12, pp. 3062${-}$3080, Dec. 2004.

\bibitem{Final_5}
D. S. Michalopoulos, A. S. Lioumpas, G. K. Karagiannidis, and R. Schober,
``Selective cooperative relaying over time-varying channels,''
\emph{IEEE Trans. Commun.}, vol. 58, no. 8, pp. 2402$-$2412, Aug. 2008.

\bibitem{Bahai}
S. Cui, A. Goldsmith, and A. Bahai, 
``Energy-constrained modulation optimization for coded systems," \emph{in Proc. IEEE Globecom 2003}, Dec. 2003,  pp. 372${-}$376.

\bibitem{R} 
R. Devarajan, S. C. Jha, U. Phuyal, and V. K. Bhargava,
 ``Energy-aware user selection and power allocation for cooperative communication system with guaranteed quality-of-service,"
\emph{in Proc. IEEE  Canadaian Work shop on Inf. Theory (CWIT)}, May 2011, pp. 216${-}$220.  

\bibitem{Z} 
Z. Zhou, S. Zhou, J.-H. Cui, and S. Cui, 
``Energy-efficient cooperative communication based on power control and selective single-relay in wireless sensor networks,"
 \emph{IEEE Trans. on Wireless Commun.}, vol. 7, no. 8, pp. 3066${-}$3078, Aug. 2008.

 \bibitem{W} 
C. Schurgers, O. Aberthorne, and M. B. Srivastava, ``Modulation scaling for energy aware communication systems,"
\emph{in Proc. IEEE  International Conference on Low Power Electronics and  Design}, Aug. 2001, pp. 96${-}$99.

\bibitem{X:Goldsmith}
S. Cui, A. Goldsmith, and A. Bahai,
``Modulation optimization under energy constraints,"
 \emph{in Proc. IEEE International Conference on Commun.}, May 2003, vol. 4, pp. 2805${-}$2811.


\bibitem{QAM2}
F. S. Al-Qahtani, T. Q. Duong, A. K. Gurung, and V. N. Q. Bao, 
``Selection decode-and-forward relay networks with rectangular QAM in Nakagami${-}m$ fading channels,"
\emph{in Proc. IEEE Wireless Comm. and Netw. Conf.}, Apr. 2010, pp. 1${-}$4.

 \bibitem{RQ}
 Q. Chen and  M.C. Gursoy,
 ``Energy efficiency analysis in amplify-and-forward and decode-and-forward cooperative networks,"
 \emph{in Proc. IEEE  Wireless Commun. and Netw. Conference (WCNC)}, Apr. 2010, pp. 1${-}$6.    

\bibitem{QAM}
S. S. Ikki, O. Amin, and M. Uysal, 
``Performance analysis of adaptive $L{-}$QAM for opportunistic decode-and-forward relaying,"
\emph{in Proc. IEEE  Vehicular Technol. Conference (VTC 2010)}, May 2010, pp. 1${-}$5.


 \bibitem{A} 
S. Cui, A. J. Goldsmith, and A. Bahai, ``Energy-efficiency of MIMO and cooperative MIMO techniques in sensor networks,"
\emph{IEEE J. Select. Areas Commun.}, vol. 22, no. 6, pp. 1089${-}$1098, Aug. 2004.

\bibitem{EES}
C-C. Kao, J. Wu,and S-C. Chen,
``Energy efficient clustering communication protocol for wireless sensor network," \emph{in Proc. IEEE Advanced Commun. Technol. Conference}, Feb. 2010,  pp. 830${-}$833.

\bibitem{KM}  
M.T. Kakitani, G. Brante, S.R. Demo, and A. Munaretto, 
``Comparing the energy efficiency of single-hop,multi-hop and incremental decode-and-forward in multi-relay wireless sensor networks," \emph{In Proc. IEEE International Symposium on Personal Indoor and Mobile Radio Commun.}, Sept. 2011, pp. 970${-}$974.

 \bibitem{K} 
M.T. Kakitani,S.R. Demo, and M.A. Imran, ``Energy efficiency contours for amplify-and-forward and decode-and-forward
cooperative protocols," \emph{In Proc. IEEE Int. Symp. on Commun. Systems, Netw. and Dig.  Signal Proc.}, July 2012, pp. 1${-}$5.

\bibitem{EE_3}
Z. Sheng, B. J. Ko, and K. K. Leung,
``Power Efficient decode-and-forward cooperative relaying,"
\emph{IEEE Wirelless Commun. Lett.}, vol. 1, no. 5, pp. 444${-}$447, Oct. 2012.

\bibitem{GL}
G. Lim and L.J. Cimini,
``Energy-efficient cooperative beamforming in clustered wireless networks," \emph{IEEE Trans. Wireless Commun.}, vol. 12, no. 3, pp. 1376${-}$1385,  March 2013.


\bibitem{YZ}
Y. Zhou, H. Liu, Z. Pan, L. Tian, J. Shi, and G. Yang,  ``Two-stage cooperative multicast transmission with
optimized power consumption and guaranteed coverage," \emph {IEEE J. Sel. Areas Commun.}, vol. 32, no. 2, pp. 274{$-$}284, Feb. 2014.

\bibitem{YX}
Y. Xu1, Z. Bai, B.Wang, P. Gong, and K. Kwak, ``Energy-efficient power allocation scheme for multirelay cooperative communications,"
\emph{in Proc. IEEE Advanced Commun. Technol. Conference}, Feb. 2014, pp. 260${-}$264. 

\bibitem{WW} 
W. Ji and B. Zheng,
``Energy efficiency based cooperative communication in wireless sensor networks," \emph{in Proc. IEEE International conference on Commun. Technol.}, Nov. 2010, pp. 938${-}$941.

\bibitem{WE}                                               
W. Fang, F. Liu, F. Yang, L. Shu, and S. Nishio,
``Energy-efficient cooperative communication for data transmission in wireless sensor networks,"
 \emph{IEEE Trans. Consum. Electron.},   vol. 56, no. 4, pp. 2185${-}$2192, Nov. 2010.  

 \bibitem{GG}
E. Kurniawan, S. Rinit, and  A.  Goldsmith,
``Energy  efficient  cooperation  for two-hop  relay  networks,"
 \emph{ in Proc. IEEE  Signal and Inf. Process. Association Annual Summit and Conference}, Dec. 2012, pp. 1${-}$10.

\bibitem{IV}
T.-D. Nguyen, O. Berder, and O. Sentieys,
``Energy-efficient cooperative techniques for infrastructure-to-vehicle communications," \emph{IEEE Trans. Intell. Transp. Syst.}, vol. 12, no. 3, pp. 659${-}$668, Sep. 2011.

 \bibitem{Sofotasios_1} 
P. C. Sofotasios,
\emph{On Special Functions and Composite Statistical Distributions and Their Applications in Digital Communications over Fading Channels}, Ph.D. Dissertation, University of Leeds, England, UK, 2010. 

\bibitem{Sofotasios_2}
S, Harput, P. C. Sofotasios, and S. Freear, 
``A Novel Composite Statistical Model For Ultrasound Applications," 
\emph{Proc. IEEE IUS `11}, pp. 1${-}$4, Orlando, FL, USA, 8${-}$10 Oct. 2011.  

\bibitem{Costa_3}
D. B. da Costa and S. Aissa, 
``Capacity analysis of cooperative systems with relay selection in Nakagami${-}m$ fading,"
 \emph{IEEE Commun. Lett.}, vol. 13, no. 9, pp. 637${-}$639, Sep. 2009. 

\bibitem{Sofotasios_3}
P. C. Sofotasios, T. A. Tsiftsis, K. Ho-Van, S. Freear, L. R. Wilhelmsson, and M. Valkama, 
``The $\kappa-\mu$/inverse-Gaussian composite statistical distribution in RF and FSO wireless channels,''
\emph{in IEEE VTC '13 - Fall}, Las Vegas, USA,  Sep. 2013, pp. 1$-$5.

 \bibitem{Trung} 
T. Q. Duong, V. N.Q. Bao, and H. J. Zepernick, 
``On the performance of selection decode-and-forward relay networks over Nakagami${-}m$ fading channels,"
\emph{IEEE Commun. Lett.},
vol. 13, no. 3, pp. 172${-}$174, Mar. 2009.

\bibitem{Sofotasios_4}
P. C. Sofotasios, T. A. Tsiftsis, M. Ghogho, L. R. Wilhelmsson and M. Valkama, 
``The $\eta-\mu$/inverse-Gaussian Distribution: A novel physical multipath/shadowing fading model,''
\emph{in IEEE ICC '13}, Budapest, Hungary, June 2013. 

 \bibitem{New_1}
 T. Q. Duong, G. C. Alexandropoulos, T. A. Tsiftsis, and H. Zepernick,
  ``Outage probability of MIMO AF relay networks over Nakagami${-}m$ fading channels," 
  \emph{Electronics Lett.,} vol. 46, no. 17, pp. 1229${-}$1231, Aug. 2010.

\bibitem{Sofotasios_5}
P. C. Sofotasios, and S. Freear, 
``The $\alpha-\kappa-\mu$/gamma composite distribution: A generalized non-linear multipath/shadowing fading model,''
\emph{IEEE INDICON  `11}, Hyderabad, India, Dec. 2011. 

\bibitem{New_4}
 T. Q. Duong, G. C. Alexandropoulos, H. Zepernick, and T. A. Tsiftsis, 
 ``Orthogonal space-time block codes with CSI-assisted amplify-and-forward relaying in correlated Nakagami${-}m$ fading channels," 
 \emph{IEEE Trans.  Veh. Technol.}, vol. 60, no. 3, pp. 882${-}$889, March. 2011. 

\bibitem{Sofotasios_10}
P. C. Sofotasios, and S. Freear,
``The $\kappa-\mu$/gamma composite fading model,''
\emph{IEEE ICWITS  `10}, Honolulu, HI, USA, Aug. 2010, pp. 1$-$4.

\bibitem{Sofotasios_6}
P. C. Sofotasios, and S. Freear,
``The $\alpha-\kappa-\mu$ extreme distribution: characterizing non linear severe fading conditions,'' 
\emph{ATNAC `11}, Melbourne, Australia, Nov. 2011. 

\bibitem{Shi}
Q. Shi and Y. Karasawa, 
``Error probability of opportunistic decode-and-forward relaying in Nakagami${-}m$ fading channels with arbitrary $m$," 
\emph{IEEE Wireless Commun. Lett.}, vol. 2, no. 1, pp.  86${-}$89, Feb. 2013.

\bibitem{Sofotasios_7}
P. C. Sofotasios, and S. Freear, 
``The $\eta-\mu$/gamma and the $\lambda-\mu$/gamma multipath/shadowing distributions,'' 
\emph{ATNAC  `11}, Melbourne, Australia, Nov. 2011. 

\bibitem{Sofotasios_8}
P. C. Sofotasios, and S. Freear, 
``On the $\kappa-\mu$/gamma composite distribution: A generalized multipath/shadowing fading model,'' 
\emph{IEEE IMOC `11},  Natal, Brazil, Oct. 2011, pp. 390$-$394.

\bibitem{Add_3}
S. S. Ikki and  M. H. Ahmed,
``Multi-branch decode-and-forward cooperative diversity networks performance analysis over Nakagami${-}m$ fading channels,"
\emph{IET Commun.}, vol. 5, no. 6, pp. 872${-}$878, June 2011. 

\bibitem{Sofotasios_9}
P. C. Sofotasios, and S. Freear, 
``The $\kappa-\mu$/gamma extreme composite distribution: A physical composite fading model,''
\emph{IEEE WCNC  `11},  Cancun, Mexico, Mar. 2011, pp. 1398$-$1401.

\bibitem{Add_4}
Y. Lee,  M. H. Tsai, and S. I. Sou, 
``Performance of decode-and-forward cooperative communications with multiple dual-hop Relays over Nakagami${-}m$ fading channels," \emph{IEEE Trans. Wireless Commun.}, vol. 8, no. 6,  pp. 2853${-}$2859, June 2009.

\bibitem{Sofotasios_11}
P. C. Sofotasios, and S. Freear, 
``The $\eta-\mu$/gamma composite fading model,''
\emph{IEEE ICWITS `10}, Honolulu, HI, USA, Aug. 2010, pp. 1$-$4.

\bibitem{D:Lee} 
Y. Lee and M-H. Tsai,
``Performance of decode-and-forward cooperative communications over Nakagami${-}m$ fading channels,"
 \emph{IEEE Trans. on Veh. Technol.}, vol. 58, no. 3, pp. 1218${-}$1228, March 2009.
 
 \bibitem {SWR} 
 R. Swaminathan, R. Roy, M.D.Selvaraj, ``Performance analysis of triple correlated selection combining for cooperative diversity systems," \emph{in Proc. IEEE international symposium on Wireless Commun.}, June 2013, pp. 5483${-}$5488.

 \bibitem{KYY}
 K. Yang, J. Yang, J. Wu, C. Xing,  Y. Zhou, ``Performance analysis of DF cooperative diversity system with OSTBC over spatially correlated                              
 Nakagami${-}m$ fading channels, \emph{IEEE Trans. Veh. Technol.}, vol. 63, no. 3,  pp. 1270${-}$1281,  Mar. 2014.

 \bibitem{RaymondH}
R. H. Y. Louie, Y. Li, H. A. Suraweera, and B. Vucetic, ``Performance analysis of beamforming in two hop AFrelay networks with
antenna correlation," \emph{ IEEE Trans. Wireless Commun.}, vol. 8, no.6, pp. 3132${-}$3141, June 2009.                  

 \bibitem{indoor}
 D. Liqin, W. Yang, Z. Jiliang, L. Limei, Li Xi, and Y. Dayong, 
``Investigation of spatial correlation for two-user cooperative communication in indoor office environment,"
\emph{in Proc. IEEE Int.  Conf. on  Commun. Technol.}, Nov. 2010, pp. 420${-}$423.

\bibitem{Theo_keyhole}
T. Q. Duong, H. A. Suraweera, T. A. Tsiftsis, H. J. Zepernick, and A. Nallanathan,
``OSTBC transmission in MIMO AF relay systems with keyhole and spatial correlation effects,"
\emph{in Proc. IEEE Int. Conf. on Commun.}, June 2011, pp. 1${-}$6.

\bibitem {YA} 
 Y. A. Chau, and K. Y-Ta. Huang, ``Performance of cooperative diversity on correlated dual-hop channels with an Amplify-and-Forward
relay over Rayleigh fading environments," \emph{ in Proc. IEEE Int. Conf.  on TENCON}, Nov. 2011, pp. 563${-}$567.

 \bibitem{HKR}
H. Katiyar and R. Bhattacharjee, ``Performance of two-hop regenerative relay network under correlated Nakagami${-}m$ fading at multi-antenna
relay,'' \emph{IEEE Commun.  Lett.}, vol. 13, no. 11, pp. 820${-}$ 822, Nov. 2009.

\bibitem{KY} K. Yang, J. Yang, J. Wu, and C. Xing, ``Performance analysis of cooperative DF relaying
over correlated Nakagami${-}m$ fading channels," \emph{in Proc. IEEE Int. Conf.  on Commun.}, June 2013, pp. 4973${-}$4977.

\bibitem{MPI}
Y. Chen, R. Shi, and M. Long, ``Performance analysis of amplify-and forward relaying with correlated links,'' \emph {IEEE Trans. Veh. Tech.}, vol. 62,
no. 5, pp. 2344 ${-}$ 2349, June 2013.

\bibitem{HAS}
H. A. Suraweera, D. S. Michalopoulos and G. K. Karagiannidis, ``Performance of distributed diversity systems with a single amplify-and-forward relay,'' \emph{IEEE Trans Veh. Technol.}, vol. 58, pp. 2603-2608, June, 2009. 
 
 \bibitem{B:Nakagami} 
M. Nakagami, 
``The $m$-distribution,${-}$ A general formula of intensity distribution of rapid fading,"
\emph{in Statistical Methods in Radio Wave Propagation}, W.G. Hoffman, Ed., Oxford, U.K.: Pergamon, 1960. 

 \bibitem{A:Simon}  
M. K. Simon  and M.-S. Alouni, 
\emph{Digital Communication over Fading Channels}, 2nd ed., New York: Wiley, 2005.
 
 \bibitem{SP}
``Selection procedures for the choice of radio transmission technologies of the UMTS," \emph{3GPP TR 30.03U,} ver. 3.2.0, 1998.


\bibitem{IDR}
I. D., R. Nagraj, G. G. Messier, and S. Magierowski, ``Performance Analysis of Relay-Assisted Mobile-to-Mobile Communication in
Double or Cascaded Rayleigh Fading,'' \emph{ in Proc. IEEE Pacific Rim Conference}, Aug. 2011, pp. 631${-}$636.

\bibitem{Z:Sadek} 
W. Su, A. K. Sadek, and K. J. Ray Liu,
 ``SER performance analysis and optimum power allocation for decode-and-forward cooperation protocol in wireless networks,"
 \emph{in Proc. IEEE Wir. Commun. and Netw. Conf.}, Mar. 2005, vol. 2,  pp. 984${-}$989.

\bibitem{Tables} 
I. S. Gradshteyn and I. M. Ryzhik, 
\emph{Table of integrals, series, and products}, 7th ed., New York: Academic, 2007.

 \bibitem{B:Tables} 
A. P. Prudnikov, Yu. A. Brychkov, and O. I. Marichev, 
\emph{Integrals and Series}, vol.1, Elementary Functions, 1st ed. New York: Academic, 2007.

\bibitem{W:Amine}
A. Mezghani and J. A. Nossek,
``Modeling and minimization of transceiver power consumption in wireless networks," 
\emph{in Proc. IEEE International ITG Workshop on Smart Antennas}, Feb. 2011. pp. 1${-}$8. 

\bibitem{GXX}
 S. Cui, A. J. Goldsmith, and A. Bahai, ``Energy-constrained modulation optimization,"
  \emph{IEEE Trans.Wireless Commu.}, vol. 4, no. 5, pp. 2349${-}$2360, Sep. 2005.

\bibitem{body} 
 S. Boyd and L. Vandenberghe, \emph{Convex Optimization}, Cambridge University Press, 1994.

 \end{document}